\newtheorem*{rep@theorem}{\rep@title}
\newcommand{\newreptheorem}[2]{
\newenvironment{rep#1}[1]{
 \def\rep@title{#2 \ref{##1}}
 \begin{rep@theorem}\itshape}
 {\end{rep@theorem}}}
\theoremstyle{plain}
\def\verbose{0}
\def\colorful{0}
\newcommand{\violet}[1]{{\color{violet}{#1}}}
\newcommand{\orange}[1]{{\color{orange}{#1}}}
\newcommand{\blue}[1]{{{\color{blue}#1}}}
\newcommand{\red}[1]{{\color{red} {#1}}}
\newcommand{\gray}[1]{{\color{gray}{#1}}}
\newcommand{\violet}[1]{{{#1}}}
\newcommand{\orange}[1]{{{#1}}}
\newcommand{\blue}[1]{{{#1}}}
\newcommand{\red}[1]{{{#1}}}
\newcommand{\gray}[1]{{{#1}}}
\newtheorem*{theorem*}{Theorem}
\newtheorem*{noclaim*}{Claim}
\newcommand{\uhr}{\upharpoonright}
\newcommand{\encode}{\mathrm{encode}}
\newcommand{\weight}{\mathrm{weight}}
\newcommand{\proj}{\mathrm{proj}}
\newcommand{\Sipser}{\mathsf{Sipser}}
\newcommand{\acz}{\mathsf{AC^0}}
\newcommand{\STCONN}{\mathit{STCONN}}
\newcommand{\SkewedSipser}{\mathsf{SkewedSipser}}
\newcommand{\ourd}{d} 
\newcommand{\restletter}{\kappa}  
\begin{document}


\title{\LARGE Near-optimal small-depth lower bounds\\ for small distance connectivity}

\author{
Xi Chen\thanks{xichen@cs.columbia.edu. Supported in part by
  NSF grants CCF-1149257 and CCF-1423100.}\\ Columbia University
\and
Igor C. Oliveira\thanks{oliveira@cs.columbia.edu.}\\ Columbia University
\and
Rocco A. Servedio\thanks{rocco@cs.columbia.edu.  Supported in part by NSF grants CCF-1319788 and CCF-1420349.}\\ Columbia University 
\and Li-Yang Tan\thanks{liyang@cs.columbia.edu.  Part of this research was done while visiting Columbia University.} \\ Toyota Technological Institute
}

\maketitle

\begin{abstract}
We show that any depth-$d$ circuit for determining whether an $n$-node graph has an $s$-to-$t$ path of length at most $k$ must have size $n^{\Omega(k^{1/d}/d)}$.  The previous best circuit size lower bounds for this problem were $n^{k^{\exp(-O(d))}}$ (due to Beame, Impagliazzo, and Pitassi~\cite{BIP:98}) and $n^{\Omega((\log k)/d)}$ (following from a recent formula size lower bound of Rossman~\cite{Rossman:13}). Our lower bound is quite close to optimal, since a simple construction gives depth-$d$ circuits of size $n^{O(k^{2/d})}$ for this problem (and strengthening our bound even to $n^{k^{\Omega(1/d)}}$ would require proving that undirected connectivity is not in $\mathsf{NC^1}.$)

Our proof is by reduction to a new lower bound on the size of small-depth circuits computing a skewed variant of the ``Sipser functions'' that have played an important role in classical circuit lower bounds~\cite{sipser1983,Yao:85,Hastad:86}.  A key ingredient in our proof of the required lower bound for these
Sipser-like functions is the use of \emph{random projections}, an extension of random restrictions which were recently employed in \cite{RST:15}.  Random projections allow us to obtain sharper quantitative bounds while employing simpler arguments, both conceptually and technically, than in the previous works \cite{Ajtai:89,BPU:92,BIP:98,Rossman:13}.
\end{abstract}

 \thispagestyle{empty}
\newpage


\setcounter{page}{1}
\section{Introduction}\label{sec:intro}

Graph connectivity problems are of great interest in theoretical computer science, both from an {algorithmic} and a computational complexity perspective. The ``$st$-connectivity,'' or $\STCONN$, problem --- given an $n$-node graph $G$ with two distinguished vertices $s$ and $t$, is there a path of edges from $s$ to $t$? --- plays a particularly central role.  One  longstanding question is whether any improvement is possible on Savitch's $O((\log n)^2)$-space algorithm \cite{Savitch:70}, based on ``repeated squaring,'' for the \emph{directed} $\STCONN$ problem; since this problem is complete for $\mathsf{NL}$, any such improvement would show that $\mathsf{NL} \subseteq \mathsf{SPACE}(o(\log^2 n))$, and hence would have a profound impact on our understanding of non-deterministic space complexity.  
\ifnum\verbose=1
For \emph{undirected} graphs, the landmark result of Reingold \cite{Reingold:08} showed that there is a deterministic log-space algorithm for $st$-connectivity.   
\fi
Wigderson's survey \cite{Wigderson:92connectivity} provides a now somewhat old, but still very useful, overview of early results on connectivity problems.

In this paper we consider the ``small distance connectivity'' problem $\STCONN(k(n))$ which~is defined as follows.  The input is the adjacency matrix of an undirected $n$-vertex graph $G$
 which has two distinguished vertices $s$ and $t$, and the 
problem is to determine whether $G$ contains a path of length at most $k(n)$ from $s$ to $t$.  We study this problem from the perspective of small-depth circuit complexity; for a given depth 
$d$ (which may depend on $k$), we are interested in the size of unbounded fan-in depth-$d$ 
circuits of $\AND$, $\OR$, and $\NOT$ gates that compute $\STCONN(k(n))$.  (As several authors \cite{BIP:98, Rossman:13} have observed, the directed and undirected versions of the $\STCONN(k(n))$ problems are essentially equivalent via a simple reduction that converts a directed graph into a layered undirected graph; for simplicity we focus on the undirected problem in this paper.)

An impetus for this study comes from the above-mentioned question about Savitch's algorithm.  As noted by Wigderson~\cite{Wigderson:92connectivity}, a simple reduction shows that if Savitch's algorithm is optimal, then for all $k$ polynomial-size unbounded fan-in circuits for $\STCONN(k(n))$ 
must have depth $\Omega(\log k)$.  By giving lower bounds on the size of small-depth circuits for $\STCONN(k(n))$, Beame, Impagliazzo, and Pitassi~\cite{BIP:98} have shown that depth $\Omega(\log \log k)$ is required for $k(n) \leq \log n$, and more recently Rossman~\cite{Rossman:13} has shown that depth $\Omega(\log k)$ is required for $k(n) \leq \log \log n$.  These bounds for restricted ranges of $k$ motivate further study of the circuit complexity of small-depth circuits for $\STCONN(k(n)).$  Below we give a more thorough discussion of both upper and lower bounds for this problem, before presenting our new results.

%
%

\ignore{
}

\subsection{Prior results}\label{sec:subsection_prior_results} 
\noindent {\bf Upper bounds (folklore).} A natural approach to obtain efficient circuits for $\STCONN(k(n))$ is by
repeated squaring of the input adjacency matrix.  If $x_{i,j}$ is the input variable that takes value 1 if 
edge $\{i,j\}$ is present in the input graph, then the graph contains a path of length at most 2 from $i$ to $j$ if and only if
the depth-2 circuit 
$
\bigvee_{k=1}^n (x_{i,k} \wedge x_{k,j})
$
is satisfied (assuming that $x_{i,i}=1$ for every $i$).  Iterating this construction yields a circuit of size $\poly(n)$ and depth $2 \log k$ that computes
$\STCONN(k(n))$, whenever $k$ is a power of two.
\if\verbose=1
(For arbitrary values of $k$, one can easily adapt this construction by viewing it as a sequence of recursive calls of depth $2 \lceil \log k \rceil$ that checks for paths of length $k$ between nodes $u$ and $v$ via potential paths of length $\lceil k/2 \rceil$ and $\lfloor k/2 \rfloor$ passing through intermediate nodes.)
\fi

For smaller depths, a natural extension of this approach leads to the following construction. Let $G_0$ be the input graph. For every pair of nodes $u,v$ in $G_0$, check by exhaustive search for~paths of length at most $t = k^{1/d}$ connecting these nodes. (We assume that $k^{1/d}$ is an integer in order to avoid unnecessary technical details.)
 Note that this can be done simultaneously for every pair of nodes by a (multi-output) depth-2 $\OR$-of-$\AND$s circuit of size $n^{t + O(1)}$. Let $G_1$ be a new graph that has an edge between $u$ and $v$ if and only if a path of length at most $t$ connects these nodes.
\ifnum\verbose=1
 Intuitively, large distances in $G_0$ are shorter by a factor of roughly $t$ in $G_1$. 
 \fi
 In general, if we start with $G_0$ and repeat this procedure $d$ times, we obtain a sequence of graphs $G_0, G_1, \ldots, G_d$ for which the following holds: $G_i$  has an edge between nodes $u$ and $v$ if and only if they are connected by a path of length at most $t^i$ in the initial graph $G_0$. In particular, this construction provides a circuit of depth $2d$ and size $n^{k^{1/d} + O(1)}$ that computes $\STCONN(k(n))$.
  
Summarizing this discussion, it follows that for all $k \leq n$ and $\smash{d \leq \log k}$, $\STCONN(k(n))$ can be computed by depth-$2d$ circuits of size $\smash{n^{O(k^{1/d})}}$, or equivalently by depth-$d$ circuits of size $\smash{n^{O(k^{2/d})}.}$

\medskip

\noindent {\bf Lower bounds.}  Furst, Saxe, and Sipser~\cite{FSS:84} were the first to show that $\STCONN \eqdef \STCONN(n) \notin \acz$ via a reduction from their lower bound against small-depth circuits computing the parity function. By the same reduction, H{\aa}stad's subsequent optimal lower bound against parity~\cite{Hastad:86} implies that depth-$d$ circuits computing $\STCONN(k(n))$ must have size $2^{\Omega(k^{1/(d+1)})}$;\ignore{\inote{It seems to me that it follows from the FSS reduction that depth-$d$ circuits with bottom OR gates must have size $2^{\Omega(k^{1/d})}$, while depth-$d$ circuits with bottom AND gates must have size $2^{\Omega(k^{1/(d+1)})}$ (the reduction as described in FSS is performed by OR-AND subcircuits, thus there is a collapse of layers only in one of the aforementioned cases).}} in particular, for $k(n) = (\log n)^{\omega(1)}$ polynomial-size circuits computing $\STCONN(k(n))$ must have depth $d = \Omega(\log k/\log\log n)$.  Note, however, that this is not a useful bound for \emph{small} distance connectivity, since when $k(n) = o(\log n)$ the $\smash{2^{\Omega(k^{1/(d+1)})}}$ lower bound is less than $n$ and hence trivial.

Ajtai~\cite{Ajtai:89} was the first to show that $\STCONN(k(n)) \notin \acz$ for all $k(n)=\omega_n(1)$; however, his proof did not yield an explicit circuit size lower bound. His approach was further analyzed and simplified by
Bellantoni, Pitassi, and Urquhart~\cite{BPU:92}, who showed that this technique gives a (barely super-polynomial) 
$n^{\Omega(\log^{(d+3)} k)}$ lower bound on the size of depth-$d$ circuits for $\STCONN(k(n))$, where $\log^{(i)}$ denotes the $i$-times iterated logarithm. This implies that polynomial-size circuits computing $\STCONN(k(n))$ must have depth $\Omega(\log^* k)$.

Beame, Impagliazzo, and Pitassi~\cite{BIP:98} gave a significant quantitative strengthening of Ajtai's result in the regime where $k(n)$ is not too large.  For $k(n) \leq \log n$, they showed that any depth-$d$ circuit for $\STCONN(k(n))$ must have size $n^{\Omega(k^{\phi^{-2d}/3})},$ where $\phi = (\sqrt{5}+1)/2$ is the golden ratio.  Their arguments are based on a special-purpose ``connectivity switching lemma'' that they develop, which combines elements of both the Ajtai \cite{ajtai1983} ``independent set
style'' switching lemma and the later approach to switching lemmas given by Yao \cite{Yao:85}, Hastad \cite{Hastad:86} and Cai \cite{Cai86}.

Observe that the \cite{BIP:98} lower bound shows that polynomial-size circuits for 
$\STCONN(k(n))$ require depth $\Omega(\log \log k)$ (and as noted above, the \cite{BIP:98} lower bound only holds for $k(n) \leq \log n$).  Beame et al.~asked whether this $\Omega(\log \log k)$ could be improved to
$\Omega(\log k)$, which is optimal by the upper bound sketched above.  This was achieved recently by Rossman \cite{Rossman:13}, who showed that for $k(n) \leq \log \log n$, polynomial-size circuits for $\STCONN(k(n))$ require depth
$\Omega(\log k)$.  In more detail, he showed
that for $k(n) \leq \log \log n$ and $d(n) \le \log n/(\log \log n)^{O(1)}$,  depth-$d$ \emph{formulas} for $\STCONN(k(n))$ require size $n^{\Omega(\log k)}$.  By the trivial relation between formulas and circuits (every circuit of size $S$ and depth $d$ is computed by a formula of size $S^d$ and depth $d$), this implies that 
for such $k(n)$ and $d(n)$, depth-$d$ \emph{circuits} for $\STCONN(k(n))$ require size $n^{\Omega((\log k)/d)}.$
While this answers the question of Beame et al., the 
$n^{\Omega((\log k)/d)}$ circuit size bound that follows from Rossman's formula size bound  is significantly smaller than the $n^{\Omega(k^{\phi^{-2d}/3})}$ circuit size bound of~\cite{BIP:98} when $d$ is small.  Furthermore, Rossman's result only holds for $k(n) \le \log \log n$ whereas \cite{BIP:98}'s holds for $k(n) \le \log n$ (and ideally we would like a lower bound for all distances $k(n) \le n$).
\subsection{Our results}\label{sec:subsection_our_resuts}

\begin{table}[t] 
\renewcommand{\arraystretch}{1.6}
\centering
\begin{tabular}{|c|c|c|c|}
\hline
 & Circuit size & Depth of poly-size circuits & Range of $k$'s  \\ \hline
  Implicit in~\cite{Hastad:86} &  $2^{\Omega(k^{1/(d+1)})}$  & $\Omega(\log k / \log \log n)$ & 
All $k$\\ \hline 
\cite{Ajtai:89,BPU:92}  &\ \  $n^{\Omega(\log^{(d+3)} k)}$ \ \ & $\Omega(\log^* k)$ & All $k$ \\ \hline
\cite{BIP:98} & $n^{\Omega(k^{\phi^{-2d}/3})}$ & $\Omega(\log \log k)$ & $k \le \log n$ \\ \hline
\cite{Rossman:13} &  $n^{\Omega((\log k) /d)}$  & $\Omega(\log k)$ & 
$k \le \log \log n$ \\ \hline \hline
\ \ Folklore upper bound\ \  & $n^{O(k^{2/d})}$ & $2\log k$ & All $k$ \\ \hline
\multirow{2}{*}{{\bf This work}} & $n^{\Omega(k^{1/d}/d)}$ & \multirow{2}{*}{$\Omega(\log k / \log \log k)$} & $k \leq n^{1/5}$ 
\vspace*{-2pt} \\
& $n^{\Omega(k^{1/5d}/d)}$ &  & All $k$ \\ \hline
\end{tabular}\vspace{0.2cm}\caption{Previous work and our results on the size of depth-$d$ circuits for $\STCONN(k(n))$.  The column ``Range of $k$'s'' indicates the values of $k$ for which the lower bound is proved to hold.}
\label{table:previous-work}
\end{table}

Our main result is a near-optimal lower bound for the small-depth circuit size of $\STCONN(k(n))$ for all distances $k(n) \le n$. We prove the following:

\begin{theorem} \label{thm:main} \hspace{-0.03cm}For any {$k(n) \leq n^{1/5}$} and any $d = d(n)$, any depth-$d$ circuit computing $\STCONN(k(n))$ must have size $n^{\Omega(k^{1/d}/ d)}$. Furthermore, for any $k(n) \le n$ and any $d = d(n)$, any depth-$d$ circuit computing $\STCONN(k(n))$ must have size $n^{\Omega(k^{1/5d}/d)}$.  
 \end{theorem}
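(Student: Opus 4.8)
\medskip
\noindent\textbf{Proof plan for Theorem~\ref{thm:main}.}
The plan is to split the argument into a combinatorial \emph{encoding step} that realizes a carefully skewed variant of the Sipser functions as a small-distance connectivity instance, and an analytic \emph{lower bound step} that proves the required circuit lower bound for this skewed Sipser function by the method of random projections.

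\emph{Encoding step.} For parameters $d$ and $k$ I would use a function $\SkewedSipser$ computed by an alternating $\AND$/$\OR$ read-once formula of depth $\Theta(d)$ in which the $\AND$-gates have \emph{small} fan-in $\approx k^{1/d}$ while the $\OR$-gates have \emph{large} fan-in, taken as large as is consistent with the formula still being realizable on an $n$-vertex graph (this is where the restriction $k\le n^{1/5}$ enters). Turn this formula into a graph $G$ on $O(n)$ vertices by the classical series/parallel gadget: an $\OR$-gate becomes the \emph{parallel} composition of its children's gadgets (an $s$-$t$ walk may \emph{choose} which child to certify), an $\AND$-gate becomes the \emph{series} composition of its children's gadgets (the walk must certify \emph{every} child in turn), and a leaf $x_e$ becomes a single edge present exactly when $x_e=1$. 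Then $G$ has an $s$-$t$ path of length at most $k$ iff $\SkewedSipser=1$; the key point is that the length of the shortest certifying walk equals the product of the $\AND$-fan-ins along a root-to-leaf path, which is $(k^{1/d})^{\Theta(d)}\approx k$ --- this is why the $\AND$-fan-ins must be taken as small as $k^{1/d}$. Hard-wiring the always-present and always-absent edges of $G$, a depth-$d$ size-$S$ circuit for $\STCONN(k(n))$ collapses to a depth-$d$ size-$\poly(S)$ circuit for $\SkewedSipser$, so it suffices to show that depth-$d$ circuits for $\SkewedSipser$ require size $n^{\Omega(k^{1/d}/d)}$.

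\emph{Lower bound step.} Given a putative depth-$d$ circuit $C$ of size $S$ for $\SkewedSipser$, after a preliminary normalization making its bottom fan-in $O(\log S)$, I would hit it with a sequence of $d-1$ random projections $\rho_1,\dots,\rho_{d-1}$, each with a small ``keep'' probability $p$ roughly the reciprocal of the common $\OR$-fan-in. A random projection generalizes a random restriction: besides fixing a variable to a constant it may \emph{identify} a whole block of surviving variables with a single fresh variable, and (following \cite{RST:15}) this extra freedom lets one design the $\rho_i$ so that the image of $\SkewedSipser$ under $\rho_1\cdots\rho_i$ is \emph{exactly} the skewed Sipser function with $i$ fewer levels, rather than merely something structurally similar. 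The engine is a \emph{projection switching lemma}: for these parameters a DNF of bottom fan-in $O(\log S)$ hit by $\rho_i$ fails to collapse to a decision tree of depth $O(k^{1/d})$ only with probability at most $n^{-\Omega(k^{1/d}/d)}$, the $1/d$ arising because $p\approx n^{-\Omega(1/d)}$ (the formula has $\Theta(d)$ levels and only $\poly(n)$ leaves). A union bound over the $\le S$ bottom subcircuits then shows that with probability $1-S\cdot n^{-\Omega(k^{1/d}/d)}$ round $i$ simultaneously (i) switches the bottom two levels of $C$ into depth-$O(k^{1/d})$ decision trees, which merge into the level above and drop the depth of $C$ by one, and (ii) sends the current skewed Sipser to the next-smaller one. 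After $d-1$ rounds $C$ has become a decision tree of depth $O(dk^{1/d})$ while $\SkewedSipser$ has become a wide $\OR$ (or a $\Parity$-like gadget) on far more than $dk^{1/d}$ surviving variables, which no such shallow tree computes --- a contradiction unless $S\ge n^{\Omega(k^{1/d}/d)}$. For the second statement, when $n^{1/5}<k\le n$ the skewed Sipser above has too many leaves to fit inside an $n$-vertex graph, and a padding argument (subdividing edges so as to stretch certifying walks) reduces this case to the first with $k$ replaced by $k^{1/5}$, yielding $n^{\Omega(k^{1/5d}/d)}$.

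\emph{Main obstacle.} The crux is the joint tuning of parameters across the $d-1$ rounds. The $\AND$-fan-ins must be $\approx k^{1/d}$ so the encoded path length is $\le k$; the same quantity also serves as the decision-tree depth tolerated per switching round, so one must check that $d$ rounds of depth-$O(k^{1/d})$ accumulation still leave $C$ too shallow to compute the final $\SkewedSipser$ remnant. The $\OR$-fan-ins --- hence the keep probabilities $p$ --- must be as large as possible to make the projection switching lemma bite (failure probability below $1/S$ per gate for $S$ as large as $n^{\Omega(k^{1/d}/d)}$) while keeping the total number of leaves small enough to realize the formula on $n$ vertices. And the $\rho_i$ must be defined so as to map each skewed Sipser exactly onto the next with probability essentially one. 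Verifying this exact structural preservation, and proving the projection switching lemma with parameters strong enough for the clean exponent $k^{1/d}/d$, is the technical heart; the graph encoding and the $k\le n^{1/5}$ versus $k\le n$ dichotomy are routine once the Sipser lower bound is in hand.
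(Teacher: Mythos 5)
Your plan follows essentially the same route as the paper: the hard function is a skewed Sipser formula with $\AND$ fan-in $\approx k^{1/d}$ and large $\OR$ fan-in, encoded as a series-parallel graph so that acceptance corresponds to an $s$-$t$ path of length $\le k$; the circuit lower bound is proved by $d-1$ rounds of block-structured random projections, each round switching the bottom depth-$2$ subcircuits into shallow decision trees while mapping the current skewed Sipser exactly onto the next smaller one, ending with a wide $\OR$ that a shallow decision tree cannot compute; and the range $n^{1/5}<k\le n$ is handled by reducing to $k=n^{1/5}$ (the paper does this without any padding, simply because in the series-parallel ground graph every $s$-$t$ path in every subgraph has the same length, so ``path of length $\le k$'' and ``path of length $\le k'$'' are the same question on these instances).

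There is, however, one step in your outline that would fail as stated: the ``preliminary normalization making the bottom fan-in $O(\log S)$,'' followed by a projection switching lemma for DNFs of width $O(\log S)$. For these block-projections no such lemma can hold once the width reaches $u=k^{1/d}$: the depth-$3$ skewed Sipser $\SkewedSipser_{u,1}$ is itself a $u$-DNF, and under a single projection it becomes, with probability $1-o(1)$, an $\OR$ of $\Theta(qw)$ fresh variables, hence has decision-tree depth far exceeding $O(k^{1/d})$ (this is exactly Remark~\ref{rem:2} in the paper). Since $S=n^{\Theta(u/d)}$ gives $\log S=\Theta(u\log w)\gg u$, a width-$O(\log S)$ switching lemma with failure probability $n^{-\Omega(u/d)}$ is simply false, and the classical H{\aa}stad-style choice ``depth/width parameter $=O(\log S)$'' cannot be imported here. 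The paper's fix is that \emph{every} width appearing in the argument must stay at most $u-1$: the first round applies the projection switching lemma with width $r=1$ and decision-tree depth $s=u-1$ (so the bottom fan-in becomes $u-1$, not $O(\log S)$), and all later rounds use $r=s=u-1$; the per-gate failure probability is then fixed at roughly $(16qu^2)^{u-1}=n^{-\Omega(u/d)}$, and it is the union bound over the $S$ gates against this fixed quantity that caps $S$. Relatedly, the decision-tree depth does not accumulate to $O(dk^{1/d})$ across rounds (each depth-$(u-1)$ tree is re-expressed as a width-$(u-1)$ CNF/DNF and merged upward); only the last round leaves a depth-$(u-1)$ tree, which is compared against an $\OR$ of $w^{33/100}\ge u$ variables --- and ensuring this last inequality is precisely where the hypothesis $k\le n^{1/5}$ is used. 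A further small omission: the series-parallel graph of the skewed Sipser formula is a multigraph (parallel edges under the bottom $\OR$s), so to get an instance of $\STCONN$ on simple graphs one must, as the paper does, append a layer of fan-in-$2$ $\AND$ gates at the bottom before taking the graph.
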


Our lower bound is very close to the best possible, given the $n^{O(k^{2/d})}$ upper bound. Indeed, strengthening our theorem to $n^{k^{\Omega(1/d)}}$ for all values of $k$ and $d$ would imply a breakthrough in circuit complexity, showing that unbounded fan-in circuits of depth $o(\log n)$ computing $\STCONN$ must have super-polynomial size.  Since every function in $\mathsf{NC^1}$ can be computed by unbounded fan-in circuits of polynomial size and depth $O(\log n / \log \log n)$ (see e.g.~\cite{KPNY84}), such a strengthening would yield an unconditional proof that $\STCONN \notin \mathsf{NC^1}$.

Comparing to previous work, our $n^{\Omega(k^{1/d}/d)}$ lower bound subsumes the main  $n^{\Omega(k^{\phi^{-2d}/3})}$ lower bound result of Beame et al.~\cite{BIP:98} for all depths $d$, and improves the $n^{\Omega((\log k)/d)}$ circuit size lower bound that follows from Rossman's formula size lower bound~\cite{Rossman:13} except when $d$ is quite close to $\log k$ (specifically, except when $\Omega(\log k/\log\log k) \le d \le O(\log k)$). 
For large distances $k(n)$ for which the results of~\cite{BIP:98,Rossman:13} do not apply (i.e.~$k(n) = \omega(\log n)$), our lower bound subsumes the $2^{\Omega(k^{1/(d+1)})}$ lower bound that is implied by~\cite{Hastad:86} for all distances $k(n) \le n^{1/5}$ and depths $d$, and it subsumes the subsequent $n^{\Omega(\log^{(d+3)} k)}$ lower bound of~\cite{Ajtai:89,BPU:92} for all distances $k$ and depths $d$.

Another perspective on 
Theorem \ref{thm:main} is that it implies that polynomial-size circuits require depth $\Omega(\log k/\log \log k)$ 
to compute $\STCONN(k(n))$ for all distances $k(n) \le n$.  While Rossman's results give $\Omega(\log k)$, they hold only for the significantly restricted range $k(n) \leq \log \log n$. (And indeed, as noted above a lower bound of $\Omega(\log k)$ for all $k(n)$ would imply that $\STCONN \notin \mathsf{NC^1}$.)

\subsection{Our approach}

Previous state-of-the-art results on this problem employed rather sophisticated arguments and involved machinery.
Beame et al.~\cite{BIP:98} (as well as the earlier works of~\cite{Ajtai:89,BPU:92}) obtained their lower bounds by considering the $\STCONN(k(n))$ problem on layered graphs of permutations, i.e.,~graphs with $k + 1$ layers of $n$ vertices per layer in which the induced graph between adjacent layers is a perfect bipartite matching. They developed a special-purpose ``connectivity switching lemma'' that bounds the depth of specialized decision trees for randomly-restricted layered graphs.  
\ifnum\verbose=1
A node of such a decision tree queries a (vertex, direction) pair where ``vertex'' is a vertex in the graph and ``direction'' is either ``forward'' or ``backward.''  
\fi
Rossman~\cite{Rossman:13} considered random subgraphs of the ``complete $k$-layered graph'' (with $k+1$ layers of $n$ vertices and $kn^2$ edges) where each edge is independently present with probability $1/n$.  At the heart of his proof is an intricate notion of ``pathset complexity,'' which roughly speaking measures the minimum cost of constructing a set of paths via the operations of union and relational join, subject to certain ``density constraints.'' 
\ifnum\verbose=1
Half of Rossman's proof relates the small-depth formula complexity of $\STCONN(k(n))$ to pathset complexity, and the other half is an involved combinatorial lower bound on pathset complexity.
\fi

In contrast, we feel that our approach is both conceptually and technically simple.  Instead of working with layered permutation graphs or random subgraphs of the complete layered graph, we consider a class of series-parallel graphs that are obtained in a straightforward way (see Section \ref{sec:reduction}) from a skewed variant of the ``Sipser functions'' that have played an important role in the classical circuit lower bounds of Sipser~\cite{sipser1983}, Yao~\cite{Yao:85}, and H{\aa}stad~\cite{Hastad:86}. Briefly, for every $d \in \N$, the $d$-th Sipser function $\Sipser_d$ is a read-once monotone formula with $d$ alternating layers of $\AND$ and $\OR$ gates of fan-in $w$, where $w\in \N$ is an asymptotic parameter that tends to $\infty$ (and so $\Sipser_d$ computes an $n = w^{d}$ variable function). Building on the work of Sipser and Yao, H{\aa}stad used the Sipser functions\footnote{The exact definition of the function used in \cite{Hastad:86} differs slightly from our description for some technical reasons which are not important here.}  to prove an optimal depth-hierarchy theorem for circuits, showing that for every $d \in \N$, any depth-$d$ circuit computing $\Sipser_{d+1}$ must have size $\exp(n^{\Omega(1/d)})$.

\ignore{In this work we use a }
The skewed variant of the Sipser functions that we use to prove our near-optimal lower bounds for $\STCONN(k(n))$
{is as follows}. For every $d \in \N$ and $2 \le u \le w$, the \emph{$d$-th $u$-skewed Sipser function}, denoted $\SkewedSipser_{u,d}$, is essentially $\Sipser_{2d+1}$ but with the $\AND$ gates having fan-in $u$ rather than $w$ (see Section~\ref{sec:reduction} for a precise definition; as we will see, the number of levels of $\AND$ gates is the key parameter for $\SkewedSipser$, which is why we write $\SkewedSipser_{u,d}$ to denote the $n$-variable formula that has $d$ levels of $\AND$ gates and $d+1$ levels of $\OR$ gates.)  Via a simple reduction given in Section \ref{sec:reduction}, we show that to get lower bounds for depth-$d$ circuits computing $\STCONN(u^d)$ on $n$-node graphs, it suffices to prove that depth-$d$ circuits for $\SkewedSipser_{u,d}$ must have large size. Under this reduction the fan-in of the $\AND$ gates is directly related to the length of (potential) paths between $s$ and $t$. This is why we must use a \emph{skewed} variant of the Sipser function in order to obtain lower bounds for small distance connectivity. We remark that even the case $u=2$ is interesting and can be used to get the $n^{\Omega(k^{1/d}/d)}$ lower bound of Theorem~\ref{thm:main} for $k$ up to roughly $2^{\sqrt{\log n}}$.  Allowing a range of values for $u$ enables us to get the lower bound for $k$ up to $n^{1/5}$ (as stated in Theorem~\ref{thm:main}).

Our main technical result of the paper is a lower bound for $\SkewedSipser_{u,d}$, \red{a formula of depth $2d + 1$ over $n = n(u,w,d) = u^d w^{d + 33/100}$ variables (for technical reasons we use a smaller fan-in for the first layer of OR gates next to the inputs).

\begin{theorem}
\label{thm:our-depth-hierarchy}
Let $d(w) \geq 1$ and $2 \leq u(w) \leq w^{33/100}$, where $w \to \infty$. Then any depth-$d$ circuit computing $\SkewedSipser_{u,d}$ has size at least $w^{\Omega(u)} = n^{\Omega(u/d)}$.
\end{theorem}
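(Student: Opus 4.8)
The plan is to prove Theorem~\ref{thm:our-depth-hierarchy} by induction on $d$, using a carefully designed sequence of random projections to strip $\SkewedSipser_{u,d}$ down to $\SkewedSipser_{u,d-1}$ while at the same time collapsing one level of any purported small depth-$d$ circuit $C$ for it. It is convenient to record that $\SkewedSipser_{u,d}$ has the recursive shape of an $\OR_w$-of-$\AND_u$ composed at its leaves with independent copies of $\SkewedSipser_{u,d-1}$, and, dually, that its bottom three gate-levels consist of ``gadget subtrees'' of the form $\OR_w$-of-$\AND_u$-of-$\OR_{w^{33/100}}$, one hanging below each $\AND_u$ gate at level $2d-2$. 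The inductive step is: given a depth-$d$ circuit $C$ computing $\SkewedSipser_{u,d}$, produce a random projection $\rho$ under which, with probability close to $1$, (i)~$\SkewedSipser_{u,d}\uhr\rho$ is (or contains, via a further fixed restriction) a copy of $\SkewedSipser_{u,d-1}$ over a fresh set of variables, and (ii)~$C\uhr\rho$ can be rewritten as a depth-$(d-1)$ circuit over those variables whose bottom fan-in is at most a parameter $t=\Theta(u)$. Iterating $d-1$ rounds produces a depth-$1$ circuit of bottom fan-in $\le t$ --- a single $\OR$ or $\AND$ of at most $t$ literals --- that would have to compute $\SkewedSipser_{u,1}$; this is impossible, since $\SkewedSipser_{u,1}$ depends on $uw^{1+33/100}\gg t$ variables. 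Tracking the failure probabilities of all the rounds, this contradiction goes through provided $|C| < w^{\varepsilon u}$ for a small absolute constant $\varepsilon$, so $|C|\ge w^{\Omega(u)}$; and $w^{\Omega(u)} = n^{\Omega(u/d)}$ because $w^d\ge n^{\Omega(1)}$ when $u\le w$.

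The projection $\rho$ acts independently on the gadget subtrees, and it is where the move from restrictions to projections pays off. In a gadget subtree $\OR_w(G_1,\dots,G_w)$ --- each $G_i$ an $\AND_u$ of $u$ gates $\OR_{w^{33/100}}$ --- we pick a uniformly random set of $w^{33/100}$ of the $G_i$ to be \emph{live} and force the remaining gadgets to $0$ (by zeroing all leaves below one of their $\OR$-children). For each live gadget we choose one of its $u$ $\OR$-children to \emph{survive}, force each of the other $u-1$ $\OR$-children to $1$ (by giving them at least one $1$), and on the surviving $\OR$-child we keep each of its $w^{33/100}$ leaves free independently with probability $q\approx (\log w)/w^{33/100}$, fix the rest to $0$, and \emph{identify} all the surviving leaves with one common new variable. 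A good live gadget then computes exactly this new variable, a good subtree computes an $\OR$ of new variables, and --- crucially --- this is again the bottom $\OR$-layer of a copy of $\SkewedSipser_{u,d-1}$ rather than some unstructured function, which is exactly what ordinary restrictions fail to deliver (they would leave a surviving $\OR$-child computing an $\OR$ of several free variables). The only bad event is a surviving $\OR$-child of fan-in $w^{33/100}$ losing all its free leaves, which happens with probability $(1-q)^{w^{33/100}}\le w^{-\Omega(1)}$; some care is needed here not to lose too much when accounting for the enormous number of gadgets.

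For item~(ii), the point is that, as seen by the circuit, $\rho$ is simply a random restriction that keeps each leaf free with the \emph{small} probability $p\approx (\log w)/(uw)$ (the product of the chances of lying in a live gadget, in its surviving $\OR$-child, and of staying free there) and otherwise fixes it to a constant; moreover we are free to fix the many ``don't-care'' leaves (those in the forced $\OR$-children and in the zeroed gadgets) to \emph{uniformly random} constants, so that a constant fraction of all leaves receive the value~$1$. The latter gives the usual clean-up: with high probability every bottom gate of $C$ of fan-in $\Omega(\log|C|)$ is killed by $\rho$, so we may assume $C$ has bounded bottom fan-in (alternatively one invokes a multi-switching version). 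We then need a \emph{projection switching lemma} in the spirit of \cite{RST:15}: a width-$r$ CNF or DNF, hit by $\rho$, is computed except with probability $(O(pr))^t$ by a decision tree of depth $\le t$ over the new variables; choosing whether to switch the bottom CNF to a width-$t$ DNF or a bottom DNF to a width-$t$ CNF according to the type of the gate two levels up lets the new top gate merge upward and drops the circuit's depth by one, keeping the bottom fan-in $\le t$. With $r=t=\Theta(u)$ one gets $(O(pr))^t=\big(O((\log w)/w)\big)^{\Theta(u)}=w^{-\Omega(u)}$, and a union bound over the $\le |C|$ depth-$2$ bottom subcircuits and the $d$ rounds keeps the total failure probability below~$1$ whenever $|C| < w^{\Omega(u)}$, which is what the theorem claims.

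The main obstacle is the projection switching lemma together with the ``double duty'' that a single projection $\rho$ must perform: it has to produce a drastic, highly structured collapse of $\SkewedSipser_{u,d}$ --- three gate-levels down to one, with all surviving leaves correctly identified into the variables of an honest copy of $\SkewedSipser_{u,d-1}$ --- while being weak enough, as seen by an arbitrary small circuit, to cost only one circuit level. Pinning down the parameters ($w$, $w^{33/100}$, $u$, $q$, $p$, $t$) so that live gadgets survive essentially surely yet $p$ is small enough to push the switching error below $w^{-\Omega(u)}$, doing this uniformly over all $2\le u\le w^{33/100}$ and all $d$, and --- especially --- avoiding lossy naive union bounds over the huge number of gadgets by coupling the projection to the circuit analysis, is where the real work lies. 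The random-projection framework of \cite{RST:15} is precisely what makes each of these pieces go through by elementary arguments, which is the promised simplification over \cite{Ajtai:89,BPU:92,BIP:98,Rossman:13}.
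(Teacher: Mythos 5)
Your outline does mirror the paper's: a sequence of random projections that peels off the bottom two layers of $\SkewedSipser_{u,\ell}$ per round (identifying the surviving leaves of each bottom gadget with a single fresh variable so that the projected target is again a skewed Sipser function), a projection switching lemma that collapses one circuit level per round, and a final contradiction at the bottom with the usual bookkeeping giving $w^{\Omega(u)}=n^{\Omega(u/d)}$. The genuine gap is that the one ingredient carrying all the difficulty---the projection switching lemma for your particular distribution---is invoked by analogy with \cite{RST:15} rather than proved, and in the form you state it it is false. A single gadget subtree, an $\OR_w$ of $\AND_u$ of $\OR_{w^{33/100}}$, is itself a DNF of width $u$, and by your own design it projects with probability $1-o(1)$ to an $\OR$ of about $w^{33/100}$ fresh variables, hence has decision-tree depth $w^{33/100}$; so no bound of the form $(O(pr))^{t}$ can hold once the width $r$ reaches $u$ (the paper's Theorem~\ref{theorem:PSL} is correspondingly restricted to $r\le u-1$, and Remark~\ref{rem:2} makes exactly this point). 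This breaks your round structure: your bottom-fan-in clean-up (killing bottom gates of fan-in $\Omega(\log|C|)$) only brings the width down to $O(u\log w)\gg u$, a regime where the lemma you need cannot hold; the paper instead gets bottom fan-in $u-1$ immediately by applying its switching lemma with $r=1$, viewing each bottom gate as a $1$-DNF or $1$-CNF. Moreover, even in the legitimate range $r\le u-1$, your distribution is harder to run a Razborov-style encoding on than the paper's: choosing exactly $w^{33/100}$ live gadgets per $\OR^{(2)}$ subtree destroys independence across gadgets, and the uniformly random ``don't-care'' values create a huge space of fixed configurations, whereas the paper kills each block independently using one of only $2u$ section-monochromatic assignments, each of probability $(1-q)/(2u)$---which is exactly what makes the injection and weight-increase steps work and why its bound carries the extra factor $u$ (it is $(8qru/(1-q))^{s}$, not $(O(qr))^{s}$), a factor missing from your claimed bound.

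A second, more mechanical problem is the target-preservation step. With $q\approx(\log w)/w^{33/100}$, a live gadget loses all free leaves of its surviving $\OR$-child with probability about $1/w$; with exactly $w^{33/100}$ live gadgets per subtree and $(uw)^{d-1}$ subtrees, the event that every subtree retains bottom fan-in exactly $w^{33/100}$ fails with overwhelming probability as soon as $(uw)^{d-1}\gg w^{67/100}$, i.e.\ already for moderate $d$, so the iterated argument does not get off the ground as parameterized. You flag this, but the fix has to be built in before the failure probabilities are combined with the switching lemma: either boost $q$ to $w^{-\Omega(1)}$ so each live gadget fails only with probability $e^{-w^{\Omega(1)}}$, or over-provision live gadgets and trim. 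The paper sidesteps the issue entirely by keeping each whole block alive (all of its $uw^{33/100}$ leaves free and projected to one fresh variable) with probability $q=w^{-669/1000}$, so that live gadgets can never die, and then using a Chernoff bound over the children of each $\OR^{(2)}$ gate plus a deterministic trimming restriction to make the new bottom fan-in exactly $w^{33/100}$.
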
 
}

Observe that setting $u = k^{1/d}$ this size lower bound is $n^{\Omega(k^{1/d}/d)}$, and therefore we indeed obtain the lower bound for $\STCONN(k(n))$ stated in Theorem~\ref{thm:main} as a corollary. As we point out in Section \ref{sec:proof_depth_hierarchy} \orange{(Remark \ref{remark:upperbound})},  the lower bound given in Theorem \ref{thm:our-depth-hierarchy} for $\SkewedSipser$ is essentially optimal.

Though they are superficially similar, Theorem~\ref{thm:our-depth-hierarchy} and H{\aa}stad's depth hierarchy theorem differ in two important respects.  Both result from our goal of using Theorem~\ref{thm:our-depth-hierarchy} to get lower bounds for small distance connectivity, and both pose significant challenges in extending H{\aa}stad's proof:

  \begin{enumerate}
  \item   H{\aa}stad showed that depth-$d$ unbounded fan-in circuits require large size to compute a single highly symmetric ``hard function,'' namely $\Sipser_{d+1}$. In contrast, toward our goal of understanding the depth-$d$ circuit size of $\STCONN(k(n))$ for \emph{all values of $k = k(n)$ and $d = d(n)$}, we seek lower bounds on the size of depth-$d$ unbounded fan-in circuits computing any one of a spectrum of asymmetric hard functions, namely $\SkewedSipser_{u,d}$ for all $u := k^{1/d}$ (with stronger quantitative bounds as $k$ and $u$ get larger). 
  \ignore{Furthermore, whereas H{\aa}stad's formula $\Sipser_{d+1}$ is highly regular in the sense that all its gates have fan-in $w$, we have to deal with highly asymmetric $\SkewedSipser_{u,d}$ formulas where $u$ may be much less than $w$. }
 
  \item To get the strongest possible result in his depth hierarchy theorem, H{\aa}stad (like Yao and Sipser) was primarily focused on lower bounding the size of circuits of depth exactly one less than $\Sipser_{d+1}$. In contrast, since in our framework our goal is to lower bound the size of depth-$d$ circuits computing $\SkewedSipser_{u,d}$ (corresponding to $\STCONN(k(n))$ with $k=u^d$) which has depth $2d+1$, we are interested in the size of circuits of depth (roughly) \emph{half} that of our hard function $\SkewedSipser_{u,d}$.
\end{enumerate} 

In Section \ref{sec:AAA} we recall the high-level structure of H{\aa}stad's proof of his depth hierarchy theorem (based on the method of random restrictions), highlight the issues that arise due to each of the two differences above, and  describe how our techniques --- specifically, the method of \emph{random projections} --- allow us to prove Theorem \ref{thm:our-depth-hierarchy} in a clean and simple manner. 

\section{H{\aa}stad's depth hierarchy theorem, random projections, and proof outline of Theorem \ref{thm:our-depth-hierarchy}} 
\label{sec:AAA}

{\bf H{\aa}stad's depth hierarchy theorem and its proof.}
Recall that H{\aa}stad's depth hierarchy theorem shows that $\Sipser_{d+1}$ cannot be computed by any circuit $C$ of depth $d$ and size $\exp(n^{O(1/d)}).$ The main idea is to design a sequence of random restrictions $\{ \calR_\ell\}_{2\le \ell\le d}$ satisfying two competing requirements:

\begin{itemize}

\item {\bf Circuit $C$ collapses.} The randomly restricted circuit  $C\uhr \brho^{(d)} \cdots \brho^{(2)}$, where $\brho^{(\ell)}\leftarrow \calR_\ell$ for $2 \le \ell \le d$, collapses to a ``simple function'' with high probability. This is shown via iterative applications of a switching lemma for the $\calR_\ell$'s, where each application shows that with high probability a random restriction $\brho^{(\ell)}$ decreases the depth of the circuit $C \uhr \brho^{(d)}\cdots \brho^{(\ell+1)}$ by at least one.  The upshot is that while $C$ is a size-$S$ depth-$d$ circuit, $C\uhr \brho^{(d)} \cdots \brho^{(2)}$ collapses to a small-depth decision tree (i.e.~a ``simple function'') with high probability. 

\item {\bf Hard function $\Sipser_{d+1}$ retains structure.}  In contrast with the circuit $C$, the hard function $\Sipser_{d+1}$ is ``resilient'' against the random restrictions $\brho^{(\ell)}\leftarrow\calR_\ell$.  In particular, each random restriction $\brho^{(\ell)}$ simplifies $\Sipser$ only by one layer, and so $\Sipser_{d+1} \uhr \brho^{(d)} \cdots \brho^{(\ell)}$ contains $\Sipser_\ell$ as a subfunction with high probability. Therefore, with high probability. $\Sipser_{d+1} \uhr \brho^{(d)} \cdots \brho^{(2)}$ still contains $\Sipser_2$ as a subfunction, and hence is a ``well-structured function'' which cannot be computed by a small-depth decision tree.
\end{itemize}
We remind the reader that to satisfy these competing demands, the random restrictions $\{\calR_\ell\}$ devised by H{\aa}stad specifically for his depth hierarchy theorem are not the ``usual'' random restrictions where each coordinate is independently kept alive with probability $p \in (0,1)$, and set to a uniform bit otherwise (it is not hard to see that $\Sipser$ does not retain structure under these random restrictions). Likewise, the switching lemma for the $\calR_\ell$'s is not the ``standard'' switching lemma (which H{\aa}stad used to obtain his optimal lower bounds against the parity function). Instead, at the heart of H{\aa}stad's proof are new random restrictions $\{ \calR_\ell\}_{2\le \ell\le d}$ designed to satisfy both requirements above: the coordinates of $\calR_\ell$ are carefully correlated so that $\Sipser_{\ell+1}$ retains structure, and H{\aa}stad proved a special-purpose switching lemma showing that $C$ collapses under these carefully tailored new random restrictions. 
  
\medskip

{\bf Issues that arise in our setting.}  At a technical level (related to point (1) described at the 
  end of Section \ref{sec:intro}),
H{\aa}stad's special-purpose switching lemma is not useful for analyzing our $\SkewedSipser_{u,d}$ formulas for most values of $u = k^{1/d}$ of interest, since they have a ``fine structure'' that is destroyed by his too-powerful random restrictions.  His switching lemma establishes that any DNF of width $n^{O(1/d)}$ collapses to a small-depth decision tree with high probability when it is hit by a random restriction $\brho^{(\ell)} \leftarrow \calR_\ell$.  
Observe that his hard function $\Sipser_{d+1}$ has DNF-width $\Omega(\sqrt{n})$, so his switching lemma does not apply to it (and indeed as discussed above, hitting $\Sipser_{d+1}$ with his random restriction results in a well-structured function that still contains $\Sipser_{d}$ as a subfunction with high probability).  In contrast, in our setting the hard function $\SkewedSipser_{u,d}$ has $d$ levels of $\AND$ gates of fan-in $u$, and in particular, can be written as a DNF of width $u^d = k$. So for all $k = k(n)$ and $d = d(n)$ such that $k \ll n^{O(1/d)}$ (indeed, this holds for most values of $k$ and $d$ of interest), the relevant hard function $\SkewedSipser_{u,d}$ collapses to a small-depth decision tree after a single application of H{\aa}stad's random restriction. 

Next (related to point (2)), recall that the formula computing H{\aa}stad's hard function $\Sipser_{d+1}$ has a highly regular structure where the fan-ins of all gates --- both $\AND$'s and $\OR$'s --- are the same. As discussed above, H{\aa}stad employs a random restriction which (with high probability) ``peels off'' a single layer of $\Sipser_{d+1}$ and results in a function that contains $\Sipser_d$ as a subfunction. Due to their regular structures, $\Sipser_d$ is dual to $\Sipser_{d+1}$ (more precisely, the bottom-layer depth-$2$ {subcircuits} of $\Sipser_{d}$ are dual to those of $\Sipser_{d+1}$), and this allows H{\aa}stad to repeat the same procedure $d-1$ times. In contrast, in our setting we are dealing with the highly asymmetric $\SkewedSipser_{u,d}$ formulas where the fan-ins of the $\AND$ gates are much less than those of the $\OR$ gates. Therefore, in order to reduce to a smaller instance of the same problem, our setup requires that we peel off two layers of $\SkewedSipser_{u,d}$ at a time rather than just one as in H{\aa}stad's argument. To put it another way, while H{\aa}stad's switching lemma uses a single layer of his hard function $\Sipser_{d+1}$ (i.e.~disjoint copies of $\OR$'s/$\AND$'s of fan-in $w$) to ``trade for'' one layer of depth reduction in $C$, our switching lemma will use two layers of our hard function $\SkewedSipser_{u,d}$ (i.e.~disjoint copies of read-once CNF's with $\smash{u = k^{1/d}}$ clauses of width $w$) to trade for one layer of depth reduction in $C$.

\medskip
{\bf Our approach:  random projections.}
A key technical ingredient in H{\aa}stad's proof of his depth hierarchy theorem --- and indeed, in the works of~\cite{BIP:98,Rossman:13} on $\STCONN(k(n))$ as well --- is the \emph{method of random restrictions}.  In particular, they all employ \emph{switching lemmas} which show that a randomly-restricted small-width DNF collapses to a small-depth decision tree with high probability: as mentioned above, H{\aa}stad proved a special-purpose switching lemma for random restrictions tailored for the Sipser functions, while Beame et al.~developed a ``connectivity switching lemma'' for random restrictions of layered permutation graphs, and Rossman used H{\aa}stad's ``usual'' switching lemma in conjunction with his pathset complexity machinery. 

In this paper we work with \emph{random projections}, a generalization of random restrictions. Given a set of formal variables $\calX = \{x_1, . . . , x_n\}$, a restriction $\brho$ either fixes a variable $x_i$ (i.e.~$\brho(x_i) \in \{0,1\}$) or keeps it alive (i.e.~$\brho(x_i) = x_i$, often denoted by $\ast$). A \emph{projection}, on the other hand, either fixes $x_i$ or maps it to a variable $y_j$ from a possibly different space of formal variables $\calY = \{y_1, . . . , y_m\}$. Restrictions are therefore a special case of projections where $\calY \equiv \calX$, and each $x_i$ can only be fixed or mapped to itself. (See Section \ref{sec:random_projection} for precise definitions.) Our arguments crucially employ projections in which $\calY$ is smaller than $\calX$, and where moreover each $x_i$ is only mapped to a specific element $y_j$ where $j$ depends on $i$ in a carefully designed way that depends on the structure of the formula computing the $\SkewedSipser$ function. Such ``collisions'', where 
  multiple formal variables in $\calX$ are mapped to the same new formal variable $y_j \in \calY$, play an important role in our approach.

Random projections were used in the recent work of Rossman, Servedio, and Tan~\cite{RST:15}, where they are the key ingredient enabling that paper's average-case extension of H{\aa}stad's worst-case depth hierarchy theorem. In earlier work, Impagliazzo, Paturi, and Saks~\cite{IPS:97} used random projections to obtain size-depth tradeoffs for threshold circuits, and Impagliazzo and Segerlind~\cite{IS01} used them to establish lower bounds against constant-depth Frege systems in proof {complexit\-y}. Our work  provides further evidence for the usefulness of random projections in obtaining strong lower bounds: random projections allow us to obtain sharper quantitative bounds while employing simpler arguments, both conceptually and technically, than in the previous works \cite{Ajtai:89,BPU:92,BIP:98,Rossman:13} on the small-depth complexity of $\STCONN(k(n))$.

We remark that although~\cite{RST:15} and this work both employ random projections to reason about the Sipser function (and its skewed variants), the main advantage offered by projections over restrictions are different in the two proofs. In~\cite{RST:15} the overarching challenge was to establish \emph{average-case} hardness, and the identification of variables was key to obtaining uniform-distribution correlation bounds from the composition of highly-correlated random projections. As outlined above, in this work a significant challenge stems from our goal of understanding the depth-$d$ circuit size of $\STCONN(k(n))$ for \emph{all values of $k = k(n)$ and $d = d(n)$}. The added expressiveness of random projections over random restrictions is exploited both in the proof of our projection switching lemma (see Section~\ref{sec:proof-outline} below) and in the arguments establishing that our $\SkewedSipser_{u,d}$ functions ``retain structure" under our random projections.

\subsection{Proof outline of Theorem \ref{thm:our-depth-hierarchy}} 
\label{sec:proof-outline}
Our approach shares the same high-level structure as H{\aa}stad's depth hierarchy theorem, and is based on a sequence $\boldsymbol{\Psi}$ of $d-1$ random projections satisfying two competing requirements (it will be more natural for us to present them in the opposite order from our discussion of H{\aa}stad's theorem in the previous section): 

\begin{itemize}

\item {\bf Hard function $\SkewedSipser$ retains structure.} Our random projections are defined with the hard function $\SkewedSipser$ in mind, and are carefully designed so as to ensure that $\SkewedSipser_{u,d}$ ``retains structure'' with high probability under their composition $\mathbf{\Psi}$. 

In more detail, each of the $d-1$ individual random projections comprising $\mathbf{\Psi}$ peels off two layers of $\SkewedSipser$, and a randomly projected $\SkewedSipser_{u,\ell}$ contains $\SkewedSipser_{u,\ell-1}$ as a subfunction with high probability. These individual random projections are simple to describe: each bottom-layer depth-$2$ subcircuit of $\SkewedSipser_{u,\ell}$ (a read-once CNF with 
 $\smash{u = k^{1/d}}$ clauses of width $w$) independently ``survives'' with probability $q\in (0,1)$ and is ``killed'' with probability $1-q$ (where $q$ is a parameter of 
   the restrictions), and 
\begin{itemize}
\item if it survives, all $uw$ variables in the CNF are {\bf projected} to the same fresh formal variable  (with different CNFs mapped to different formal variables); 
\item if it is killed, all its variables are {\bf fixed} according to a random $0$-assignment of the CNF chosen uniformly from a particular set of $2u$ many $0$-assignments.
\end{itemize} 
In other words, each bottom-layer depth-$2$ subcircuit independently simplifies to a fresh formal variable (with probability $q$) or the constant $0$ (with probability $1-q$). With the appropriate definition of $\SkewedSipser$ and choice of $q$, it is easy to verify that indeed a randomly projected $\SkewedSipser_{u,\ell}$ contains $\SkewedSipser_{u,\ell-1}$ as a subfunction with high probability.  (For this to happen, the fanin of the bottom OR gates of $\SkewedSipser$
  is chosen to be moderately smaller than $w$, the fanin of all other OR gates in $\SkewedSipser$;
  see Definition \ref{Sipser:def} for details.)

\item {\bf Circuit $C$ collapses.}  In contrast with $\SkewedSipser_{u,d}$, any depth-$d$ circuit $C$ of size $n^{O(u/d)}$ collapses to a small-depth decision tree under $\boldsymbol{\Psi}$ with high probability. Following the standard ``bottom-up'' approach to proving lower bounds against small-depth circuits, we establish this by arguing that each of the individual random projections comprising $\mathbf{\Psi}$ ``contributes to the simplification'' of $C$ by reducing its depth by (at least) one.  

More precisely, in Section \ref{sec:projection_switching_lemma} we prove a \emph{projection switching lemma}, showing that a small-width DNF or CNF ``switches'' to a small-depth decision tree with high probability under our random projections. (The depth reduction of $C$ follows by applying this lemma to every one of its bottom-level depth-$2$ subcircuits.)  Recall that the random projection of a depth-$2$ circuit over a set of formal variables $\calX$ yields a function over a new set of formal variables $\calY$, and in our case $\calY$ is significantly smaller than $\calX$.  In addition to the structural simplification that results from setting variables to constants (as in the switching lemmas of~\cite{Hastad:86,BIP:98,Rossman:13} for random \emph{restrictions}), the proof of our projection switching lemma also exploits the additional structural simplification that results from distinct variables in $\calX$ being mapped to the same variable in $\calY$. 
\end{itemize}


{\subsection{Preliminaries} \label{sec:preliminaries}


A \emph{restriction} over a finite set of variables $A$ is an element of $\{0,1,\ast\}^A.$  We define the \emph{composition} $\rho \rho'$ of two restrictions $\rho, \rho' \in \{0,1,\ast\}^{A}$ over a set of variables $A$ to be the restriction
\[
(\rho\rho')_\alpha \; \eqdef \; 
\begin{cases}
\rho_\alpha &\text{if\ } \rho_\alpha  \neq \ast\\
\rho'_\alpha& \text{otherwise}
\end{cases},
\quad \text{for all $\alpha \in A$.}
\]

A \emph{DNF} is an $\OR$ of $\AND$s (terms) and a \emph{CNF} is an $\AND$ of $\OR$s (clauses).  The \emph{width} of a DNF (respectively, CNF) is the maximum number of variables that occur in any one of its terms (respectively, clauses). 

The \emph{size} of a circuit is its number of gates, and the \emph{depth} of a circuit is the length of its longest root-to-leaf path.
We count input variables as gates of a circuit (so any circuit for a function~that depends on all $n$ input variables trivially has size
at least $n$).  We will assume throughout the paper that circuits are \emph{alternating}, meaning that every root-to-leaf path alternates between $\AND$ gates and $\OR$ gates.  We also assume that circuits are \emph{layered}, meaning that for every gate $\mathsf{G}$,~every~root-to-{\sf G} path has the same length.  
These assumptions are without loss of generality as
by a standard conversion (see e.g. the discussion at \cite{stackexchange-layering-circuits}),~every depth-$d$ size-$S$ circuit is equivalent to a depth-$d$ alternating layered circuit of size at most $\poly(S)$ (this polynomial
increase is offset by the ``$\Omega(\cdot)$'' notation in the exponent of all of our theorem statements.)
}



\section{Lower bounds against $\SkewedSipser$ yield lower bounds for small distance connectivity} \label{sec:reduction}

In this section we define $\SkewedSipser_{u,\ourd}$ and show that computing this formula on a particular input $z$ is equivalent to solving small-distance connectivity on a certain undirected (multi)graph $G(z)$. In a bit more detail, every input $z$ corresponds to a subgraph $G(z)$ of a fixed ground graph $G$ that depends only on $\SkewedSipser_{u,\ourd}$. (Jumping ahead, we associate each input bit of $\SkewedSipser_{u,\ourd}$ with an edge of its corresponding ground graph $G$.) Roughly speaking, AND gates translate into sequential paths, while OR gates correspond to parallel paths. After defining $\SkewedSipser_{u,\ourd}$ and describing this reduction, we give the proof of Theorem \ref{thm:main}, assuming Theorem \ref{thm:our-depth-hierarchy}.

The $\SkewedSipser$ formula is defined in terms of an integer parameter $w$; in all our results this is an asymptotic parameter that approaches $+ \infty$, and so $w$ should be thought of as ``sufficiently large'' throughout the paper. 

\begin{definition}\label{Sipser:def}
For $2 \le u \le w$ and $\red{d\ge 0}$,  $\SkewedSipser_{u,\ourd}$ is the Boolean function computed by the following monotone read-once formula: 

\begin{itemize}
\item There are $2\ourd+1$ alternating layers of $\OR$ and $\AND$ gates, 
  where the top and bottom-layer gates are $\OR$ gates.  (So there are $\ourd+1$ layers of $\OR$ gates and $\ourd$ layers of $\AND$ gates.) 
\item $\AND$ gates all have fan-in $u$.
\item  $\OR$ gates all have fan-in $w$, except bottom-layer $\OR$ gates which have fan-in $\red{w^{33/100}}$; we assume that $w^{\red{1/100}}$ is an integer throughout the paper. (The most important thing about the constant 33/100 in the above definition is that it is less than 1; the particular value 33/100  was chosen for technical reasons so that we could get the constant 5 in Theorem~\ref{thm:main}.)
\end{itemize}
Consequently, $\SkewedSipser_{u,\ourd}$ is a Boolean function over 
$n = (uw)^\ourd  \red{w^{33/100}}$ variables in total.
\end{definition}

\medskip
\noindent \textbf{From $\SkewedSipser_{u,\ourd}$ to small-distance connectivity.}  There is a natural correspondence between read-once  monotone Boolean formulas and series-parallel multigraphs in which each graph has a special designated ``start'' node
$s$ and a special designated ``end'' node $t$.  We now describe this correspondence via the inductive structure of read-once monotone Boolean formulas.  As we shall see, under this correspondence there is a bijection between the \emph{variables} of a formula $f$ and the \emph{edges} of the graph $G(f)$.
\begin{itemize}

\item  If $f(x)=x$ is a single variable, then the graph $G(f)$ has vertex set $V(f) = \{s,t\}$ and edge set $E(f)$ consisting of a single edge $\{s,t\}.$  

\item Let $f_1,\dots,f_m$ be read-once monotone Boolean formulas over disjoint sets of variables, where $G(f_i)$ is the (multi)graph associated with $f_i$ and $s_i,t_i$ are the start and end nodes of $G(f_i$).  

\begin{itemize}

\item If $f=\AND(f_1,\dots,f_m)$:   The graph $G(f)$ is obtained by identifying $t_1$ with $s_2$, $t_2$ with $s_3$,
\dots, and $t_{m-1}$ with $s_m$.  The start node of $G(f)$ is $s_1$ and the end node is $t_m$.  Thus the vertex set 
$V(f)$ is $V(f_1) \cup \cdots \cup V(f_m) \setminus
\{t_1,\dots,t_{m-1}\}$ and  the edge set  $E(f)$ is the multiset $E'(f_1) \cup \cdots \cup E'(f_m)$, where each $E'(f_i)$ is obtained from $E(f_i)$ by renaming the appropriate vertices. 

\item If $f=\OR(f_1,\dots,f_m)$:  The graph $G(f)$ is obtained by identifying $s_1,\dots,s_m$ all to a new start vertex $s$ and $t_1,\dots,t_m$ all to a new end vertex $t$.  Thus the vertex set 
$V(f)$ is $V(f_1) \cup \cdots \cup V(f_m) \cup \{s,t\} \setminus
\{s_1,\dots,s_m,t_1,\dots,t_m\}$ and  the edge set  $E(f)$ is the multiset $E'(f_1) \cup \cdots \cup E'(f_m)$, where again each $E'(f_i)$ is obtained from the corresponding edge set $E(f_i)$ by renaming vertices accordingly.
\end{itemize}
\end{itemize}
Since $f$ is read-once, the number of edges of $G(f)$ is precisely the number of variables of $f$, and there is a natural correspondence between edges and  variables. Figure \ref{figure:1} provides a concrete example of this construction.

\begin{figure}[t]
\begin{center}
\vskip -.25in
\hskip -3.25in
\includegraphics[scale=0.28, trim = 0cm 0cm 0cm 0cm]{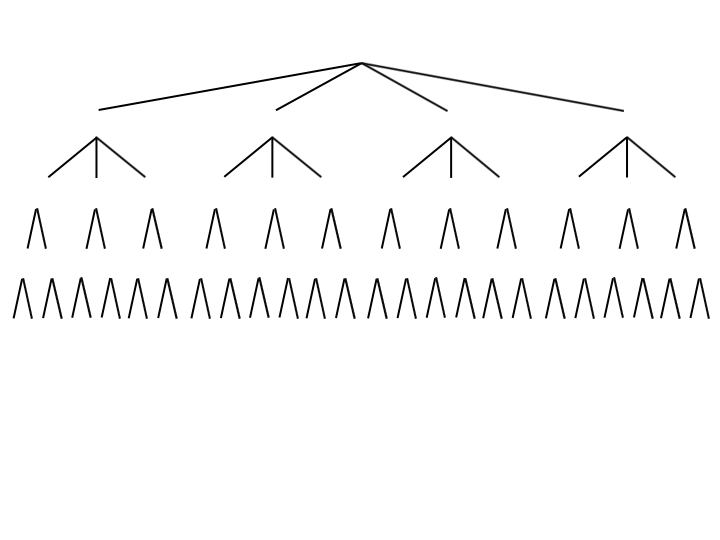} 
\vskip-2in
\hskip 3.25in
\includegraphics[height=1.25in, width=3in, trim = 0cm 0cm 0cm 0cm]{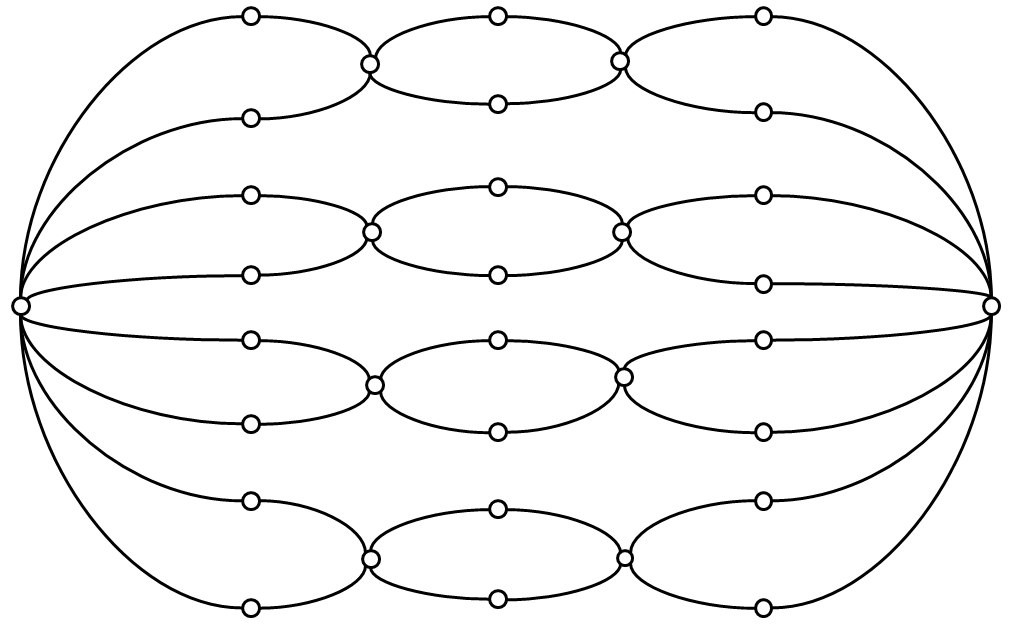}
\vskip -0.7in 
\begin{picture}(400,50)(20, 40)
\put(101,123){{\tiny $\vee$}}

\put(26,102){{\tiny $\wedge$}}
\put(76,102){{\tiny $\wedge$}}
\put(125.5,102){{\tiny $\wedge$}}
\put(175.5,102){{\tiny $\wedge$}}

\put(9.5,83){{\tiny $\vee$}}
\put(26,83){{\tiny $\vee$}}
\put(42,83){{\tiny $\vee$}}
\put(59.5,83){{\tiny $\vee$}}
\put(76,83){{\tiny $\vee$}}
\put(92,83){{\tiny $\vee$}}
\put(109.25,83){{\tiny $\vee$}}
\put(125.5,83){{\tiny $\vee$}}
\put(141.5,83){{\tiny $\vee$}}
\put(159.5,83){{\tiny $\vee$}}
\put(175.5,83){{\tiny $\vee$}}
\put(192,83){{\tiny $\vee$}}

\put(6,62.5){{\tiny $\wedge$}}
\put(13,62.5){{\tiny $\wedge$}}

\put(22.5,62.5){{\tiny $\wedge$}}
\put(29.5,62.5){{\tiny $\wedge$}}

\put(38.5,62.5){{\tiny $\wedge$}}
\put(45.5,62.5){{\tiny $\wedge$}}

\put(56,62.5){{\tiny $\wedge$}}
\put(63,62.5){{\tiny $\wedge$}}

\put(72.5,62.5){{\tiny $\wedge$}}
\put(79.5,62.5){{\tiny $\wedge$}}

\put(88.5,62.5){{\tiny $\wedge$}}
\put(95.5,62.5){{\tiny $\wedge$}}

\put(105.75,62.5){{\tiny $\wedge$}}
\put(112.75,62.5){{\tiny $\wedge$}}

\put(122,62.5){{\tiny $\wedge$}}
\put(129,62.5){{\tiny $\wedge$}}

\put(138,62.5){{\tiny $\wedge$}}
\put(145,62.5){{\tiny $\wedge$}}

\put(156,62.5){{\tiny $\wedge$}}
\put(163,62.5){{\tiny $\wedge$}}

\put(172,62.5){{\tiny $\wedge$}}
\put(179,62.5){{\tiny $\wedge$}}

\put(187.5,62.5){{\tiny $\wedge$}}
\put(196,62.5){{\tiny $\wedge$}}

\put(0,43){{\tiny $x_1 \cdots \cdots \cdots \cdots \cdots \cdots \cdots \cdots \cdots \cdots \cdots \cdots \cdots \cdots \cdots \cdots \cdots \cdots x_{48}$}}

\put (225,84){{$s$}}
\put (447,84){{$t$}}
\end{picture}
\vskip -.17in ~
\end{center}
\caption{\small A read-once formula $f$ (on the left), which is a fan-in 4 OR of fan-in 3 $\AND$s of fan-in 2 $\OR$s of fan-in 2 
$\AND$s, and the corresponding graph $G(f)$ (on the right).}
\label{figure:1}
\end{figure}

\begin{remark} \label{r:graph_and_vertices}
We note that if $f$ is \orange{a read-once monotone Boolean formula in which}
  \orange{the bottom-level gates are $\AND$ gates and have fan-in at least two,}
  then $G(f)$ is a \emph{simple} graph and not a multigraph. 
\end{remark}

A simple inductive argument gives the following:

\begin{observation} \label{obs:dr}
If $f$ is a read-once monotone alternating formula with $r$ layers of $\AND$ gates~of fan-ins $\alpha_1,\dots,\alpha_r$, respectively, then every shortest path from $s$ to $t$ in 
  the graph $G(f)$ has length exactly $\alpha_1 \cdots \alpha_r.$ Furthermore, if $H$ is a subgraph of $G(f)$ that contains some \orange{$s$-to-$t$} path, then it contains a path of length $\alpha_1 \cdots \alpha_r.$
\end{observation}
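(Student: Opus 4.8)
The plan is to establish a statement slightly stronger than the first assertion, from which both parts follow with no extra work: \emph{every simple $s$-to-$t$ path in $G(f)$ has length exactly $\alpha_1\cdots\alpha_r$} (with the empty product equal to $1$). I would prove this by structural induction on the read-once formula $f$, following the two operations $\AND(\cdots)$ and $\OR(\cdots)$ used to build $G(f)$. The base case $f=x$ is immediate: $G(f)$ is a single edge $\{s,t\}$ and the unique simple $s$-to-$t$ path has length $1$, matching the empty product.

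For the parallel case $f=\OR(f_1,\dots,f_m)$, recall that $G(f)$ is the union of the blocks $G(f_i)$ with all the $s_i$ identified to $s$ and all the $t_i$ identified to $t$; since the $f_i$ are over disjoint variable sets, any two blocks share only the vertices $\{s,t\}$. I would then argue that any simple $s$-to-$t$ path $P$ in $G(f)$ lies entirely inside a single block $G(f_i)$: it leaves $s$ into some block, and the only vertices through which it could pass into a different block are $s$ and $t$, neither of which a simple path may revisit. Viewing $P$ as a simple $s_i$-to-$t_i$ path of $G(f_i)$, and using that $f$ is layered with all $\AND$ gates in layer $\ell$ of fan-in $\alpha_\ell$ (so the subformula $f_i$ again has $r$ layers of $\AND$ gates with the same fan-in sequence $\alpha_1,\dots,\alpha_r$), the inductive hypothesis gives that $P$ has length $\alpha_1\cdots\alpha_r$.

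For the series case $f=\AND(f_1,\dots,f_m)$ --- so the top layer is an $\AND$-layer of fan-in $m=\alpha_1$ --- $G(f)$ is the chain of blocks obtained by identifying $t_i$ with $s_{i+1}$ for $1\le i\le m-1$; call the junction vertices $v_1,\dots,v_{m-1}$. Each $v_i$ is a cut vertex separating $s$ from $t$, so every simple $s$-to-$t$ path $P$ must pass through all of them, and the block structure forces the order of visits to be $v_1,v_2,\dots,v_{m-1}$; hence $P$ decomposes uniquely as a concatenation $P_1P_2\cdots P_m$ with $P_i$ a simple $s_i$-to-$t_i$ path of $G(f_i)$. Since $f$ is alternating and its top gate is an $\AND$, each $f_i$ is rooted at an $\OR$ gate and has exactly $r-1$ layers of $\AND$ gates with fan-ins $\alpha_2,\dots,\alpha_r$, so by induction each $P_i$ has length $\alpha_2\cdots\alpha_r$, and therefore $P$ has length $m\cdot(\alpha_2\cdots\alpha_r)=\alpha_1\alpha_2\cdots\alpha_r$.

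Finally I would derive the two stated conclusions from the strengthened claim. A shortest $s$-to-$t$ path is in particular a simple $s$-to-$t$ path, hence has length exactly $\alpha_1\cdots\alpha_r$; and if a subgraph $H\subseteq G(f)$ contains some $s$-to-$t$ path, then $s$ and $t$ lie in the same connected component of $H$, so $H$ contains a \emph{simple} $s$-to-$t$ path $Q$, which is then a simple $s$-to-$t$ path of $G(f)$ and so has length exactly $\alpha_1\cdots\alpha_r$. The one step warranting any care is the series case --- specifically, checking that a simple path through a chain of blocks must cross the junction vertices in order and therefore splits into simple subpaths of the blocks; once that is in hand, the fan-in bookkeeping and both conclusions are routine. (The parallel case subsumes the possibility $r=0$, where $f$ is an $\OR$ of variables and $G(f)$ is a bundle of parallel $s$-$t$ edges.)
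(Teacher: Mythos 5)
Your proposal is correct, and it is the same "simple inductive argument" (structural induction on the series/parallel construction of $G(f)$) that the paper invokes without spelling out; your strengthened invariant that \emph{every simple} $s$-to-$t$ path has length exactly $\alpha_1\cdots\alpha_r$ is the natural induction hypothesis and immediately yields both stated conclusions. The details you supply — parallel blocks meeting only at $s$ and $t$, and junction vertices in the series case being cut vertices crossed in order — are exactly the points the paper leaves to the reader.
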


As a corollary, we have:
\begin{observation} \label{obs:sipser-base}
Every shortest path from $s$ to $t$ in $G(\SkewedSipser_{u,\ourd})$ has length exactly $u^{\ourd}.$
\end{observation}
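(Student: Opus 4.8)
The plan is to obtain Observation~\ref{obs:sipser-base} as an immediate instantiation of Observation~\ref{obs:dr}. First I would recall from Definition~\ref{Sipser:def} that $\SkewedSipser_{u,\ourd}$ is computed by a read-once monotone formula whose $2\ourd+1$ layers strictly alternate between $\OR$ and $\AND$ gates; in particular it is a read-once monotone alternating formula in the sense required by Observation~\ref{obs:dr}. Moreover, by the same definition there are exactly $\ourd$ layers of $\AND$ gates, and every $\AND$ gate has fan-in $u$. Hence, in the notation of Observation~\ref{obs:dr}, we have $r = \ourd$ and $\alpha_1 = \cdots = \alpha_\ourd = u$.

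Applying Observation~\ref{obs:dr} with these parameters then yields directly that every shortest $s$-to-$t$ path in $G(\SkewedSipser_{u,\ourd})$ has length exactly $\alpha_1 \cdots \alpha_\ourd = u^{\ourd}$, which is precisely the claim. (I would note in passing that the $\OR$-gate fan-ins — $w$ for internal $\OR$ gates and $w^{33/100}$ for bottom-layer $\OR$ gates — play no role here, since under the correspondence $\OR$ nodes create parallel copies of subgraphs and therefore do not change the length of a shortest $s$-to-$t$ path; they only affect the number of such paths. Likewise, the fact that $G(\SkewedSipser_{u,\ourd})$ may be a multigraph rather than a simple graph, cf.\ Remark~\ref{r:graph_and_vertices}, is irrelevant.) There is no real obstacle in this step: all the work has already been done in Observation~\ref{obs:dr}, whose "simple inductive argument" proceeds on the structure of the formula — the base case being a single variable, which corresponds to a single edge of length $1$; the $\AND$ case concatenating the series subgraphs so that shortest-path lengths add, giving the product $\alpha_1\cdots\alpha_r$ when all fan-ins along $\AND$ layers are traversed; and the $\OR$ case placing subgraphs in parallel, which leaves the shortest length unchanged.
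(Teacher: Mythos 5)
Your proposal is correct and matches the paper exactly: the paper states Observation~\ref{obs:sipser-base} as an immediate corollary of Observation~\ref{obs:dr}, instantiated with $r=\ourd$ layers of $\AND$ gates each of fan-in $u$, just as you do. Your additional remarks about the irrelevance of the $\OR$ fan-ins and the multigraph issue are accurate but not needed.
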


Given a read-once monotone formula $f$ over variables $x_1,\dots,x_n$ and an assignment $z \in \{0,1\}^n$ to the
variables $x_1,\dots,x_n$, we define the graph $G(f,z)$ to be the (spanning) subgraph of $G(f)$ which has vertex
set $V(f,z) = V(f)$ and edge set $E(f,z)$ defined as follows:  each edge in $E(f)$ is present in $E(f,z)$ if and
only if the corresponding coordinate of $z$ is set to 1.  
A simple inductive argument gives the following:

\begin{observation} \label{obs:sipser-connectivity}
Given a read-once monotone alternating formula with $r$ layers of $\AND$ gates of fan-ins $\alpha_1,\dots,\alpha_r$, respectively, and an assignment $z\in \{0,1\}^n$, the graph $G(f,z)$ contains a path from $s$ to $t$ of length $\alpha_1 \cdots \alpha_r$ if and only if $f(z)=1$. 
\end{observation}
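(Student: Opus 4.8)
The plan is to prove Observation~\ref{obs:sipser-connectivity} by induction on the structure of the read-once monotone alternating formula $f$, mirroring the inductive definition of $G(f)$ given above. The key point to get right is the induction hypothesis: it is not enough to track merely ``does $G(f_i, z^{(i)})$ contain \emph{some} $s$-to-$t$ path,'' since the lengths of paths matter when we concatenate subgraphs at $\AND$ gates. So the hypothesis I would carry is the statement of the observation itself together with Observation~\ref{obs:dr} applied to subgraphs: namely, for each subformula $g$ with $r'$ layers of $\AND$ gates of fan-ins $\beta_1, \dots, \beta_{r'}$ and each restricted assignment $z'$, (i) $g(z') = 1$ iff $G(g, z')$ contains an $s$-to-$t$ path, and (ii) whenever $G(g,z')$ contains an $s$-to-$t$ path it contains one of length exactly $\beta_1 \cdots \beta_{r'}$ (this is just Observation~\ref{obs:dr} applied to the subgraph $H = G(g,z')$ of $G(g)$, so strictly speaking it comes for free).

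First I would handle the base case $f(x) = x$: here $G(f,z)$ is the single edge $\{s,t\}$ if $z = 1$ and the empty-edge graph on $\{s,t\}$ if $z = 0$, and the empty product of fan-ins is $1$, so both claims are immediate. Next, for the $\OR$ case $f = \OR(f_1, \dots, f_m)$: recall $G(f)$ glues all the $s_i$ to a common $s$ and all the $t_i$ to a common $t$, sharing no other vertices or edges. Hence an $s$-to-$t$ path in $G(f,z)$ must lie entirely within a single block $G(f_i, z^{(i)})$ (where $z^{(i)}$ is the sub-assignment to $f_i$'s variables), because the blocks are vertex-disjoint except at $s$ and $t$ and a simple path cannot revisit $s$ or $t$. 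Therefore $G(f,z)$ has an $s$-to-$t$ path iff some $G(f_i, z^{(i)})$ does, which by induction is iff some $f_i(z^{(i)}) = 1$, i.e.\ iff $f(z) = 1$. For the length claim, invoke the inductive length statement (or just Observation~\ref{obs:dr}) on whichever block contains the path; since $f$ and each $f_i$ (in an alternating formula with top gate $\OR$) have the same list of $\AND$-fan-ins, the length $\alpha_1 \cdots \alpha_r$ comes out right.

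For the $\AND$ case $f = \AND(f_1, \dots, f_m)$: now $G(f)$ is the concatenation $t_i = s_{i+1}$, so an $s$-to-$t$ path in $G(f,z)$ decomposes into an $s_1$-to-$t_1$ path in $G(f_1, z^{(1)})$, then an $s_2$-to-$t_2$ path in $G(f_2, z^{(2)})$, and so on — again because consecutive blocks share only the gluing vertex and any $s$-to-$t$ walk must reach each gluing vertex in order. Conversely, concatenating one $s_i$-to-$t_i$ path from each block yields an $s$-to-$t$ path in $G(f,z)$. Hence $G(f,z)$ has an $s$-to-$t$ path iff \emph{every} $G(f_i, z^{(i)})$ does, which by induction is iff every $f_i(z^{(i)}) = 1$, i.e.\ iff $f(z) = 1$. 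For the length: by the inductive length claim each block-path can be taken to have length exactly $\beta_1^{(i)} \cdots \beta_{r_i}^{(i)}$; since $f$ is alternating with top gate $\OR$, an $\AND$ at the top level means each $f_i$ is an $\OR$-rooted subformula whose $\AND$-fan-in list is $\alpha_2, \dots, \alpha_r$ and $f$ contributes the extra factor $\alpha_1 = m$ (the fan-in of this top $\AND$), so the total length is $m \cdot \alpha_2 \cdots \alpha_r = \alpha_1 \cdots \alpha_r$ as required.

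The main obstacle — really the only place needing care — is the decomposition step in each inductive case: justifying rigorously that a simple path between the two terminals in a series-parallel gluing must respect the block structure (lying in one block for $\OR$, passing through the blocks in sequence for $\AND$). This follows from the vertex-disjointness of the blocks off the gluing vertices, plus the observation that in the $\OR$ case $s$ and $t$ are cut vertices separating the blocks, and in the $\AND$ case each gluing vertex $t_i = s_{i+1}$ is a cut vertex; a short argument about cut vertices (or simply tracing the path and noting when it can leave a block) suffices. Everything else is bookkeeping with the fan-in lists, and the length statement is genuinely just a restatement of Observation~\ref{obs:dr} specialized to the subgraph $H = G(f,z)$, so I would cite that rather than reprove it.
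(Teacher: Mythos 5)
Your proposal is correct and is essentially the paper's own argument: the paper simply asserts that ``a simple inductive argument'' gives the observation, and the structural induction you carry out (base case, $\OR$-gluing at $s,t$, $\AND$-concatenation through cut vertices, with the length claim delegated to Observation~\ref{obs:dr} applied to $H=G(f,z)$) is exactly that intended induction, just spelled out in full. No discrepancies to report.
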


\ignore{

The next observation also follows from a simple inductive proof.

\begin{observation} \label{obs:paths_series_parallel}
Let $H$ be a spanning subgraph of $G(f)$, and assume that $s$ and $t$ are connected in $H$. Then there is a shortest path connecting $s$ and $t$ in $G(f)$ that is entirely contained in $H$.
\end{observation}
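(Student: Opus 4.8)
The plan is to argue by induction on the recursive series--parallel structure of $G(f)$ that mirrors the inductive construction of $G(f)$ from a read-once monotone formula $f$. (I note first that the statement also follows immediately by combining the two clauses of Observation~\ref{obs:dr}: if $s$ and $t$ are connected in the spanning subgraph $H$ then $H$ contains some $s$-to-$t$ path, so by the ``furthermore'' clause it contains an $s$-to-$t$ path of length $\alpha_1\cdots\alpha_r$, and by the first clause this is the length of \emph{every} shortest $s$-to-$t$ path of $G(f)$. I will nonetheless give the direct induction, since it exposes exactly which structural features of $f$ the statement relies on.) The single fact I use throughout is that, $f$ being read-once, for any decomposition $f=\AND(f_1,\dots,f_m)$ or $f=\OR(f_1,\dots,f_m)$ the edges of $G(f)$ split among the $G(f_i)$; thus a spanning subgraph $H$ of $G(f)$ decomposes as the union of subgraphs $H_1,\dots,H_m$, where $H_i$ carries the edges of $H$ lying in $G(f_i)$.

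In the base case $f=x$ the graph $G(f)$ is the single edge $\{s,t\}$, which any connecting spanning subgraph must contain, and that edge is itself the shortest path. In the series case $f=\AND(f_1,\dots,f_m)$ I would use that the merged vertices $v_i=t_i=s_{i+1}$ are cut vertices of $G(f)$, hence of $H$: a simple $s$-to-$t$ path in $H$ must therefore traverse $v_1,\dots,v_{m-1}$ in order, with the stretch between consecutive $v_i$'s confined to a single $G(f_i)$, so $s_i$ and $t_i$ are connected in $H_i$. The inductive hypothesis then gives a shortest $s_i$-to-$t_i$ path $P_i\subseteq H_i$ of $G(f_i)$; since the $G(f_i)$ are vertex-disjoint apart from the shared cut vertices, concatenating the $P_i$ produces a simple $s$-to-$t$ path in $H$ of length $\sum_i\mathrm{len}(P_i)$, and by the series structure this sum is exactly the shortest $s$-to-$t$ distance of $G(f)$, so the path is shortest.

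In the parallel case $f=\OR(f_1,\dots,f_m)$, the crucial step I would establish is that every simple $s$-to-$t$ path $P$ in $H$ lies entirely inside one part $G(f_i)$: once $P$ leaves $s$ it enters the internal vertex set of some $G(f_i)$ (these being pairwise disjoint), or it traverses a direct edge $\{s,t\}$ which still belongs to a single $G(f_i)$; being simple it cannot reach a different part without revisiting $s$ or $t$. Hence $s_i$ and $t_i$ are connected in $H_i$, and induction supplies a shortest $s_i$-to-$t_i$ path $P_i$ of $G(f_i)$ inside $H_i\subseteq H$. To finish I would invoke that the children $f_1,\dots,f_m$ of an $\OR$ gate are isomorphic read-once formulas --- true for $\SkewedSipser$ and each of its subformulas, and more generally whenever $f$ is alternating with layer-uniform fan-ins as in Observation~\ref{obs:dr} --- so that all $G(f_j)$ share the same shortest-path length, necessarily the shortest $s$-to-$t$ distance of $G(f)$; thus $P_i$ is a shortest $s$-to-$t$ path of $G(f)$ contained in $H$.

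The main obstacle is exactly this last point: the statement as worded is only true under the implicit hypothesis that all $\OR$-children have equal shortest $s$-to-$t$ distance. Without it it is false --- for instance with $f=\OR(x_1,\AND(x_2,x_3))$, the spanning subgraph $H$ containing only the two $\AND$-edges connects $s$ and $t$ by a path of length $2$, while the shortest path of $G(f)$ has length $1$. So the proof must genuinely use the regularity of $\SkewedSipser$ (equivalently, the hypotheses making Observation~\ref{obs:dr} applicable); every other part of the induction is routine.
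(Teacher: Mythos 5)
Your proposal is correct, and it is worth noting that the paper never actually writes out a proof of this statement: in the source the observation is commented out, introduced only by the phrase ``also follows from a simple inductive proof,'' and in the final argument its role is played entirely by the ``furthermore'' clause of Observation~\ref{obs:dr} (which is exactly how the proof of Theorem~\ref{thm:main} cites it). So your one-line derivation --- $H$ contains some $s$-to-$t$ path, hence by the furthermore clause a path of length $\alpha_1\cdots\alpha_r$, which by the first clause is the shortest-path length of $G(f)$ --- is precisely the route the paper itself takes, while your explicit series--parallel induction is the ``simple inductive proof'' the paper merely asserts, filled in correctly: the cut-vertex argument in the $\AND$ case and the one-branch argument in the $\OR$ case are both sound. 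Your caveat is also a fair catch rather than a defect of your proof: as bare text the observation quantifies over $G(f)$ without restating the hypothesis of Observation~\ref{obs:dr}, and your example $f=\OR(x_1,\AND(x_2,x_3))$ shows the statement genuinely fails for general read-once monotone formulas, so the alternating, layer-uniform-fan-in hypothesis (satisfied by $\SkewedSipser$ and its subformulas, where all $\OR$-children have equal shortest $s$-to-$t$ distance) must be carried through the induction exactly as you do. The only thing your write-up adds beyond what is needed is the isomorphism of $\OR$-children --- equality of the shortest $s$-to-$t$ distances of the children is all the parallel case uses --- but that is a matter of economy, not correctness.
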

}

From these observations we obtain the following connection between $\SkewedSipser_{u,\ourd}$ and small-distance connectivity, which is key to our lower bound:\ignore{Observations \ref{obs:sipser-base}, \ref{obs:sipser-connectivity}, and \ref{obs:paths_series_parallel} imply the following characterization.}

\begin{corollary} \label{cor:sipser-connectivity}
The multigraph $G(\SkewedSipser_{u,\ourd},z)$ contains an $s$-to-$t$ path of length at most $u^{\ourd}$ if and only if $\SkewedSipser_{u,\ourd}(z)=1.$
\end{corollary}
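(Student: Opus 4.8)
The plan is to deduce Corollary~\ref{cor:sipser-connectivity} directly from the observations already established, with no new combinatorics required. Recall that $\SkewedSipser_{u,\ourd}$ is a read-once monotone alternating formula whose $\AND$ layers all have fan-in $u$, and there are exactly $\ourd$ of them (the first bullet of Definition~\ref{Sipser:def}). Hence in the notation of Observations~\ref{obs:dr}--\ref{obs:sipser-connectivity} we have $r=\ourd$ and $\alpha_1=\cdots=\alpha_r=u$, so the product $\alpha_1\cdots\alpha_r$ is precisely $u^{\ourd}$; this is exactly the content of Observation~\ref{obs:sipser-base}.

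First I would prove the ``if'' direction. Suppose $\SkewedSipser_{u,\ourd}(z)=1$. Applying Observation~\ref{obs:sipser-connectivity} with $f=\SkewedSipser_{u,\ourd}$ and fan-ins all equal to $u$, the graph $G(\SkewedSipser_{u,\ourd},z)$ contains an $s$-to-$t$ path of length $u^{\ourd}$, which is in particular a path of length \emph{at most} $u^{\ourd}$. For the ``only if'' direction, suppose $G(\SkewedSipser_{u,\ourd},z)$ contains an $s$-to-$t$ path of length at most $u^{\ourd}$; then in particular $s$ and $t$ lie in the same connected component of the spanning subgraph $H \eqdef G(\SkewedSipser_{u,\ourd},z)$ of the ground graph $G(\SkewedSipser_{u,\ourd})$. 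By the ``furthermore'' clause of Observation~\ref{obs:dr}, $H$ then contains an $s$-to-$t$ path of length exactly $u^{\ourd}=\alpha_1\cdots\alpha_r$. Feeding this back into Observation~\ref{obs:sipser-connectivity} (whose ``if'' direction asserts that the existence of an $s$-to-$t$ path of length $\alpha_1\cdots\alpha_r$ in $G(f,z)$ forces $f(z)=1$), we conclude $\SkewedSipser_{u,\ourd}(z)=1$.

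One small point worth making explicit is why ``path of length at most $u^{\ourd}$'' can be upgraded to ``path of length exactly $u^{\ourd}$'': this is precisely because $G(\SkewedSipser_{u,\ourd})$ is series-parallel, and in such graphs every $s$-to-$t$ path that uses any edge of a given parallel block threads through all the series blocks, so no $s$-to-$t$ path can be shorter than the shortest one, which has length $u^{\ourd}$ by Observation~\ref{obs:sipser-base}. This monotonicity of path lengths in series-parallel graphs is exactly what the ``furthermore'' clause of Observation~\ref{obs:dr} records, so I would simply invoke it rather than re-deriving it. I do not anticipate a genuine obstacle here; the only thing to be careful about is bookkeeping the fan-in parameters so that the generic statements of Observations~\ref{obs:dr} and~\ref{obs:sipser-connectivity} are instantiated with the correct value $\alpha_1=\cdots=\alpha_{\ourd}=u$ coming from Definition~\ref{Sipser:def}, and noting that $\SkewedSipser_{u,\ourd}$ indeed falls under the hypotheses of those observations (it is read-once, monotone, and alternating).
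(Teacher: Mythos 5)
Your proposal is correct and follows exactly the route the paper intends: the corollary is stated as an immediate consequence of Observations~\ref{obs:dr}, \ref{obs:sipser-base}, and~\ref{obs:sipser-connectivity}, and you instantiate them with $r=\ourd$, $\alpha_1=\cdots=\alpha_{\ourd}=u$, using Observation~\ref{obs:sipser-connectivity} for the ``if'' direction and the ``furthermore'' clause of Observation~\ref{obs:dr} to upgrade a path of length at most $u^{\ourd}$ to one of length exactly $u^{\ourd}$ before applying Observation~\ref{obs:sipser-connectivity} again. This matches the paper's (implicit) argument, so nothing further is needed.
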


Note that Corollary \ref{cor:sipser-connectivity} and Theorem \ref{thm:our-depth-hierarchy} together can be used to prove lower bounds for small-distance connectivity on \emph{multigraphs}. One way to obtain lower bounds for \emph{simple} graphs instead of multigraphs is by extending $\SkewedSipser_{u,\ourd}$ with an extra layer of fan-in two AND gates next to the input variables, then relying on Remark \ref{r:graph_and_vertices}. We use this simple observation and Theorem \ref{thm:our-depth-hierarchy} to establish Theorem \ref{thm:main}.

\begin{reptheorem}{thm:main}  
\hspace{-0.03cm}For any $k(n) \leq n^{1/\red{5}}$ and any $d = d(n)$, any depth-$d$ circuit computing $\STCONN(k(n))$ must have size $n^{\Omega(k^{1/d}/ d)}$. Furthermore, for any $k(n) \le n$ and any $d = d(n)$, any depth-$d$ circuit computing $\STCONN(k(n))$ must have size \violet{$n^{\Omega(k^{1/5d}/d)}$}. \end{reptheorem}
 
\begin{proof}
\blue{We assume that $d < {2 \log k/\log\log k}$ and $(k/2)^{1/d}\ge 2$
  (observe that the claimed bound is trivial if  $d \ge {2 \log k / \log \log k}$
  or $(k/2)^{1/d}<2$).}
\blue{Let $$u_0=\left\lfloor (k/2)^{1/d}\right\rfloor.$$
Then we have $u_0\ge 2$ and $u_0=\Omega(k^{1/d})$.}
For convenience, let
\begin{equation}\label{eq:def_k0}
k_0 \eqdef u_0^d\le k/2 \quad \text{and} \quad n' \eqdef \left \lfloor \frac{n}{2} \right \rfloor.
\end{equation}
Further, let $w_0$ be the largest positive integer such that 
\begin{equation}\label{eq:def_w0}
(u_0 w_0)^d w_0^{\red{33/100}} \leq n'.
\end{equation}
Observe that, since $k\le n^{1/5}$ and $d<2\log k/\log \log k$, as $n\to +\infty$ we have
  similarly $w_0\to +\infty$.
\ignore{
Observe that $w_0 \geq 2$ since $k \leq n^{1/20}$ and $d < \log k$.}
\blue{Our choice of $w_0$ also implies that $w_0$ satisfies
$$u_0^d (w_0+1)^{d+33/100}>n'.$$ 
Let $n_0 \eqdef (u_0 w_0)^d w_0^{\red{33/100}}$.
Then from $d<2\log k/\log \log k$ and $w_0\rightarrow +\infty$ we have
$$
n_0\ge u_0^d\left(\frac{w_0+1}{2}\right)^{d+\frac{33}{100}}\!=\,\Omega(n/2^{d})=\omega(n^{0.99}) .
$$
Combining this with $k\le n^{1/5}$ and $k_0\le k/2$ we have that 
  $k_0=o(n_0^{20/99})$ and $n_0\ge k_0^{4.9}$ when $n$ is sufficiently large.}

We define a variant of our $\SkewedSipser_{u,d}$ formula so we can rely on Remark \ref{r:graph_and_vertices} and work directly with simple graphs instead of multigraphs. More precisely, let $\smash{\SkewedSipser^\dagger_{u_0,w_0,d}}$ be analogous to $\SkewedSipser_{u,d}$ with parameters $u_0$ (AND gate fan-in), $w_0$ (OR gate fan-in), and $d$ but containing an extra layer of fan-in 2 AND gates at the bottom connected to a new set of input variables.~In other words, this is a depth $2d + 2$ read-once
  alternating formula with twice the number of input variables of our original $\SkewedSipser$ formula (each input variable of $\SkewedSipser$ becomes an AND gate connected to two new fresh variables). Since $\SkewedSipser_{u_0, d}$ can be obtained by restricting $\smash{\SkewedSipser^\dagger_{u_0,w_0,d}}$ appropriately (i.e.~by setting to 1 a single variable in every new pair of variables) a lower bound on the circuit complexity of $\SkewedSipser$ immediately implies the same lower bound for 
    $\smash{\SkewedSipser^\dagger}$. 

In order to obtain a lower bound via Theorem \ref{thm:our-depth-hierarchy}, we need that $w_0^{\red{33/100}} \geq u_0$. This is equivalent to having $\smash{n_0 \geq u_0^{\red{133d/33 + \blue{1}\ignore{33/100}}}}$, which follows from $d\ge 2$
 \red{(we may
assume $d \geq 2$ since no~depth-1 circuit, i.e. single $\AND$ or $\OR$ gate, can compute $\STCONN(k(n))$)}
  and $n_0\ge k_0^{4.9}$ since 
  $$n_0 \geq  \red{k_0^\blue{4.9}}\ignore{21/5} > k_0^{\red{133/33 +\blue{1/2}. \ignore{33/200}}}$$ Consequently, we can apply Theorem \ref{thm:our-depth-hierarchy} to $\SkewedSipser_{u_0,d}$, and it follows from our discussion above that any depth-$d$ circuit computing $\smash{\SkewedSipser^\dagger_{u_0,w_0,d}}$ must have size at least
\begin{equation}\label{eq:size_lower_bound}
 n_0^{\Omega(u_0/d)} = n^{\Omega(k^{1/d}/d)}. 
\end{equation}

In the rest of the proof we translate (\ref{eq:size_lower_bound}) into a lower bound for $\STCONN(k(n))$. Following~the explanation given above, we consider the simple graph $G(\SkewedSipser^\dagger)$ with appropriate parameters. 
Since $u_0^d\le k/2$, it follows from 
  the same argument used to establish Corollary  \ref{cor:sipser-connectivity} that the graph $\smash{G(\SkewedSipser^\dagger_{u_0,w_0,\ourd},z)}$ contains an $s$-to-$t$ path of length at most $2u_0^{\ourd}\le k$ if and only~if~we have $\smash{\SkewedSipser^\dagger_{u_0,w_0,\ourd}(z)=1}$. 
Because $\smash{G(\SkewedSipser^\dagger_{u_0,w_0,\ourd})}$ has no isolated vertices and has~$n_0$ edges, it contains at most $2n_0\le 2n'\le n$ vertices by (\ref{eq:def_k0}) and (\ref{eq:def_w0}).
Thus, a circuit $\mathcal{C}$ that computes $\STCONN(k(n))$ on undirected simple graphs on $n$ vertices can also be used to compute the formula $\smash{\SkewedSipser^\dagger_{u_0,w_0,\ourd}}$, 
and (\ref{eq:size_lower_bound}) yields that $\mathcal{C}$ must have size $n^{\Omega(k^{1/d}/d)}$. This completes the first part~of Theorem \ref{thm:main}. 

It remains to prove the lower bound for $\STCONN(\orange{k'}(n))$ with \orange{$n^{1/5} < k'(n) \leq n$. 
For this,} let $\orange{k}(n) \eqdef n^{1/\red{5}}$.  We have established above that computing $\STCONN(k(n))$ on subgraphs of $\smash{G(\SkewedSipser^\dagger_{u_0,w_0,\ourd})}$ using depth-$d$ circuits requires size 
  $n^{\Omega(k^{1/d}/d)}$. However, a subgraph of $\smash{G(\SkewedSipser^\dagger_{u_0,w_0,\ourd})}$ contains an \orange{$s$-to-$t$} path of length at most $k(n)$ if and only if it contains a path from $s$ to $t$ of length at most $k'(n)$ (Observation \ref{obs:dr}). Consequently, any circuit $\mathcal{C}$ that computes $\STCONN(k'(n))$ on general $n$-vertex graphs can be used to compute $\STCONN(k(n))$ on subgraphs of $G(\SkewedSipser^\dagger_{u_0,w_0,\ourd})$ (by setting some input edges to 0). In particular, $\mathcal{C}$ must have size $$n^{\Omega(k^{1/d}/d)} = n^{\Omega(n^{1/5d}/d)} = n^{\Omega(\red{k'}^{1/5d}/d)}.$$
  This completes the second part of Theorem \ref{thm:main}.
\end{proof}

\violet{
\begin{remark}
It is not hard to see that our~reduction in fact also captures other natural graph problems such as directed $k$-path (``Is there a directed path of length $k(n)$ in $G$?'') and directed $k$-cycle (``Is there a directed cycle of length $k(n)$ in $G$?''), and hence the lower bounds of Theorem~\ref{thm:main} apply to these problems as well.  This suggests the possibility of similarly obtaining other lower bounds from (variants of) depth hierarchy theorems for Boolean circuits, and we leave this as an avenue for further investigation.
\end{remark}
}

\def\ourd{d}
\def\d2sipser{\mathsf{Depth2Sipser}}
\def\d2sipser{\mathsf{CNFSipser}}

%
%
%

\section{The Random Projection}\label{sec:random_projection}

In this section we define our random projections, which will be crucial in the proof of Theorem \ref{thm:our-depth-hierarchy}. First, we introduce 
  notation to manipulate the first two layers of $\SkewedSipser_{u,\ourd}$.

\begin{definition}
For $2 \le u \le w$, we define $\d2sipser_{u}$ to be the Boolean function computed by the following monotone read-once  formula: 
\begin{itemize}
\item The top gate is an $\AND$ gate and the bottom-layer gates are $\OR$ gates.
\item The top $\AND$ gate has fanin $u$.
\item The bottom-layer $\OR$ gates all have fan-in $\red{w^{33/100}}$. 
\end{itemize}
\end{definition}

For $\SkewedSipser_{u,\ourd}$ and each $\ell \in [\ourd+1]$, we write $\OR^{(\ell)}$ to denote an $\OR$ gate that is in the $\ell$-th level of $\OR$ gates away from the input variables and similarly write $\smash{\AND^{(\ell)}}$ to denote an $\AND$ gate that is in the 
  $\ell$-th level of $\AND$ gates away from the input variables. So the root of $\SkewedSipser_{u,\ourd}$ is the only $\smash{\OR^{(\ourd+1)}}$ gate;
  each $\smash{\AND^{(\ell)}}$ gate has $u$ many $\smash{\OR^{(\ell)}}$ gates as its inputs;
 each $\smash{\AND^{(1)}}$ gate of $\SkewedSipser_{u,\ourd}$ computes a
  disjoint copy of $\d2sipser_u$.

Next we introduce an addressing scheme for 
  gates and variables of $\SkewedSipser_{u,\ourd}$.\vspace{-0.2cm}

\paragraph{Addressing scheme.}
Viewing $\SkewedSipser_{u,\ourd}$ as a tree (with its leaves being variables
  and the rest being $\AND,\OR$ gates),
  we index its nodes (gates or variables) by addresses as follows.
The root (gate) is indexed by $\epsilon$, the empty string.
The $j$-th child of a node is indexed by the address of its parent concatenated with $j$. 
Thus,
the variables of $\SkewedSipser_{u,\ourd}$ are indexed by addresses
\[ A(\ourd) := \Big\{ (b_0,a_1,b_1,\ldots,a_\ourd,b_\ourd) \colon a_i \in [u], b_0,\ldots,b_{\ourd-1} \in [w], b_\ourd \in [\red{w^{33/100}}] \Big\}. \] 

\paragraph{Block and  {section} decompositions.}  
\hspace{-0.2cm}We will  refer to the set of \orange{$u\red{w^{33/100}}$} addresses of variables below an $\AND^{(1)}$ gate as a \emph{block},
  and the set of \orange{$\red{w^{33/100}}$} addresses of variables below an $\OR^{(1)}$ gate as a \emph{section}.

It will be convenient for us to view the set of all variable addresses $A(\ourd)$ as
\begin{gather*} A(\ourd) = B(\ourd) \times A',  \text{~where} \\
 B(d) = \big\{ (b_0, a_1,b_1 \ldots, a_{\ourd-1}, b_{\ourd-1}) \colon  a_i \in [u], b_i \in [w] \big\}\ \ \ \text{and}\ \ \ A'=[u]\times [\red{w^{33/100}}].\end{gather*}
Here
 $B(d)$ can be viewed as the set of addresses of the $\AND^{(1)}$ gates of $\SkewedSipser_{u,\ourd}$,
 and $A'$ can be viewed as the set of  variable 
 addresses of $\d2sipser_u$ computed by 
 each such gate (following the same addressing scheme).
 
More formally, for a fixed $\beta \in B(\ourd)$ 
  we call the set of addresses 
 \[ A(\ourd,\beta) \eqdef \big\{ (\beta,\tau) \colon \tau \in A' \big\} \] 
 a \emph{block} of $A(\ourd)$; these are the addresses of variables below the $\AND^{(1)}$ gate specified by $\beta$. Thus, $A(\ourd)$ is the disjoint union of $w (uw)^{\ourd-1}$ many
  blocks, each of cardinality $|A' |=u\red{w^{33/100}}$. 

For a fixed $\beta\in B(\ourd)$ and $a\in [u]$, 
  we call the set of addresses
$$
\orange{A(\ourd,\beta, a)} \eqdef \big\{(\beta,a,b)\colon b\in [\red{w^{33/100}}]\big\}
$$
a \emph{section} of $A(\ourd)$; these are the addresses of variables below the
  $\OR^{(1)}$ gate specified by $\orange{(\beta, a)}$.
Each block $A(\ourd,\beta)$ is the disjoint union of $u$ many sections,
  each of cardinality $\red{w^{33/100}}$.
 

To summarize, the set of addresses of variables $A(\ourd)$  
  can be decomposed into $w(uw)^{\ourd-1}$ many blocks $A(\ourd,\beta)$ (corresponding to the $\AND^{(1)}$ gates), $\beta\in B(d)$, and each such block can be further 
  decomposed into $u$ many sections $A(\ourd,\beta, a)$ (corresponding to its $u$ input $\OR^{(1)}$ gates), $a\in [u]$. 
  
Accordingly we also decompose $A'$, the set of variable addresses of $\d2sipser_u$,
  into sections $$A'(a)\eqdef \big\{(a,b)\colon b\in [\red{w^{33/100}}]\big\},\quad\text{for $a\in [u]$}.$$

The following fact is trivial given the definition of 
  $\d2sipser_{u}$. \orange{(Below and subsequently, we use ``$\varrho$'' to denote a restriction to the variables of $\d2sipser$ and ``$\rho$'' to denote a restriction to the variables of $\SkewedSipser$.)}


\begin{fact}
\label{prop:column-kill} 
For any $a\in [u]$ and restriction $\varrho \in \{0,1,\ast\}^{A'}$ that sets all    
  variables in the $a$-th section $A'(a)$ to $0$, i.e., 
  $\varrho_{\tau} = 0 \text{ for all $\tau \in A'(a)$},$ 
we have that $\d2sipser_u \uhr \varrho \equiv 0$.  
\end{fact}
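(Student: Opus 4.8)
\textbf{Proof plan for Fact~\ref{prop:column-kill}.}

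The statement is essentially immediate from the definition of $\d2sipser_u$, so the ``proof'' is really just an unwinding of definitions; the plan is to spell out exactly which structural feature makes it work. Recall that $\d2sipser_u$ is computed by a read-once monotone formula whose top gate is an $\AND$ of fan-in $u$, each of whose $u$ inputs is an $\OR^{(1)}$ gate of fan-in $w^{33/100}$. Under the addressing scheme, the $a$-th input $\OR$ gate of this top $\AND$ reads exactly the $w^{33/100}$ variables whose addresses lie in the section $A'(a) = \{(a,b) : b \in [w^{33/100}]\}$, and these sections partition $A'$ into $u$ disjoint pieces (one per input to the top $\AND$).

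First I would observe that since $\varrho_\tau = 0$ for every $\tau \in A'(a)$, the restricted $a$-th $\OR^{(1)}$ gate becomes an $\OR$ all of whose inputs are the constant $0$, hence it computes the constant $0$. Second, since $\d2sipser_u$ is the $\AND$ of its $u$ input $\OR$ gates and one of these inputs is now identically $0$, the whole formula restricted by $\varrho$ computes $0$ regardless of how $\varrho$ assigns the remaining variables (in the other $u-1$ sections). Therefore $\d2sipser_u \uhr \varrho \equiv 0$, as claimed.

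There is no real obstacle here: the only thing to be careful about is that the read-once structure and the addressing scheme genuinely line up, i.e.\ that the variables in section $A'(a)$ are precisely the leaves below a single $\OR^{(1)}$ gate which is itself a direct input to the top $\AND$ gate --- this is exactly how $A'(a)$ was defined, so the matching is by construction. One could phrase the whole argument in a single sentence: a monotone $\AND$ gate outputs $0$ as soon as any one of its inputs is $0$, and killing an entire section $A'(a)$ forces the corresponding input $\OR$ gate of the top $\AND$ to $0$.
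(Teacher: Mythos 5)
Your proposal is correct and is exactly the argument the paper has in mind: the paper states this fact as immediate from the definition of $\d2sipser_u$ (an $\AND$ of $u$ section-$\OR$s), and your observation that zeroing out section $A'(a)$ forces the $a$-th $\OR^{(1)}$ gate, and hence the top $\AND$, to the constant $0$ is that same one-line unwinding of definitions.
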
 


Now we define our random projection operator $\proj_\brho(\cdot)$.

\begin{definition}[Projection operators]
Given a restriction $\rho\in \{0,1,\ast\}^{A(d)}$, 
  the projection operator $\proj_\rho$ maps a function 
  $\smash{f \colon \{0,1\}^{A(d)}\rightarrow \{0,1\}}$
  to a function $\smash{\proj_\rho(f) \colon \{0,1\}^{B(d)}\rightarrow \{0,1\}}$, where
$$
\big(\proj_\rho(f) \big)(y)=f(x),\quad\text{where}\ x_{\beta,\tau} \eqdef\
\begin{cases} y_{\beta} &\text{if 
  $\rho_{\beta,\tau}=\ast$} \\ 
\rho_{\beta,\tau} & \text{if $\rho_{\beta,\tau}\in \{0,1\}$.}
  \end{cases}
$$
For convenience, we sometimes write $\proj(f \uhr \rho)$ instead of $\proj_\rho(f)$.
\end{definition}
\begin{remark}
The following interpretation of the projection operator will come in handy.
Given~a restriction $\rho\in \{0,1,\ast\}^{A(d)}$, if $f$ is computed by a circuit $C$, then
  $\proj_\rho(f)$ is computed by a circuit $C'$ obtained 
  from $C$ by replacing every occurrence of $x_{\beta,\tau}$ by $y_\beta$
  if $\rho_{\beta,\tau}=\ast$, or by $\rho_{\beta,\tau}$ if $\rho_{\beta,\tau}\in \{0,1\}$.
\end{remark}

The crux of our \emph{random} projection 
  operator $\proj_\brho(\cdot)$ is then a distribution $\calD^{(d)}_{u}$ over 
  restrictions $\smash{\{0,1,\ast\}^{A(d)}}$ to the variables 
  $\{x_{\beta,\tau}:(\beta,\tau)\in A(d)\}$, 
  from which $\brho$ is drawn.
To this end, we consider the block decomposition $B(\ourd ) \times A'$ of 
  $A(\ourd)$, and $\smash{\brho\leftarrow \calD^{(d)}_u}$ is obtained by
  drawing independently, for each block $\beta\in B(d)$, a restriction $\brho_\beta$ from a distribution
  $\calD_u$ over $\{0,1,\ast\}^{A'}$ to be defined below.
  


\begin{definition}[Distributions $\calD_{u }$ and $\calD_u^{(d)}$]
\label{def:D_ell}
The distribution $\calD_{u} = \calD_{u}(q)$ over $\{0,1,\ast\}^{A'}$ is para\-meterized by a probability $q\in (0,1)$. A draw of a restriction $\boldsymbol{\varrho}$ from $\calD_{u}$ is generated as follows:
\begin{flushleft}\begin{itemize}
\item With probability $q$, output $\boldsymbol{\varrho} = \{\ast\}^{A'}$ (i.e. the restriction fixes no variables).
\item Otherwise (with probability $1-q$), we draw $\ba \leftarrow [u]$ (a random section) and $\bz \leftarrow \{0,1\}$ (a random bit) independently and uniformly at random, and output $\boldsymbol{\varrho}$ where 
for each $\tau \in A'$,
\[ 
\boldsymbol{\varrho}_\tau =  
\begin{cases} 
\bz & \text{if $\tau \in A'(\ba)$} \\
1-\bz & \text{otherwise.} 
\end{cases}
\] 
Note that in this case $\boldsymbol{\varrho}$ is distributed uniformly among $2 u$ many binary strings in $\zo^{\orange{A'}}$. These strings are ``section-monochromatic'', with $u-1$ of the sections taking on entirely the same value $\orange{1 - \bz}$ and the one remaining section $\ba$  taking entirely the other ``rare'' value $\orange{\bz}$. 
\end{itemize} \end{flushleft}

As described above, a draw of $\brho \in \{0,1,\ast\}^{B(\ourd )\times A' }$ from $\calD_{u}^{(\ourd)}=\calD_{u }^{(\ourd)}(q)$
is obtained by independently drawing $\brho_\beta \leftarrow \calD_{u}=\calD_u(q)$ for each   block $\beta \in B(\ourd).$
\end{definition} 

The following observation about $\supp(\calD^{(d)}_u)$ will be useful for us: 

\begin{remark}
\label{rem:obvious} 
A restriction $\rho \in \{0,1,\ast\}^{B(d) \times A'}$ is in the support of $\calD^{(d)}_u$ iff for every block $\beta \in B(d)$, $\rho_\beta$ is either $\{ \ast\}^{A'}$, or there exists exactly one section $a \in [u]$ such that $\rho_{\beta,\tau} = 0$ if $\tau \in A'(a)$ and $1$ otherwise, or there exists exactly one section $a \in [u]$ such that $\rho_{\beta,\tau} = 1$ if $\tau \in A'(a)$ and $0$ otherwise. 

Therefore, if $T$ is a term of width at most $u -1$ such that for all blocks $\beta \in B(d)$, the variables from block $\beta$ that occur in $T$ all occur with the same sign, then $T$ can be satisfied by a restriction in the support of $\smash{\calD^{(d)}_u}$ (i.e.,~$T\uhr \rho \equiv 1$ for some $\smash{\rho \in \supp(\calD^{(d)}_u)}$). (Note that this crucially uses the fact that $T$ has width at most $u-1$, and in particular  does not contain variables from all $u$ sections of any block $\beta$. Also note that
  the inverse of this is not true, e.g.,~consider $T=x_{\beta,\tau}\land \neg\, x_{\beta,\tau'}$ with
  $\tau$ and $\tau'$ from two different sections.)
\end{remark}

\section{Projection Switching Lemma} \label{sec:projection_switching_lemma}

Our goal now is to prove the following projection switching lemma for (very) small width DNFs:  

\begin{theorem}[Projection Switching Lemma] 
\label{theorem:PSL} 
For $2 \leq u \leq w$, let $F$ be an $r$-\emph{DNF} over the variables $\{x_{\beta,\tau}\}$, ${(\beta,\tau)\in A(\ourd)}$, where $r \le u -1$.  Then for all $s\ge 1$ and $q\in (0,1)$, we have
\[ \Prx_{\brho\leftarrow\calD_u^{(\ourd)}\orange{(q)}} \Big[\hspace{0.06cm}\text{$\proj_\brho(F)$ has decision tree depth $\ge s$}\hspace{0.03cm}\Big] \le \left(\frac{8qru}{1-q}\right)^{\! s}.\]  
\end{theorem}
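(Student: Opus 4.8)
Let me think about this carefully. We have an $r$-DNF $F$ over variables $\{x_{\beta,\tau}\}$ where the variable set $A(d)$ decomposes into blocks $B(d) \times A'$, and within each block into $u$ sections. The random restriction $\brho \leftarrow \calD_u^{(d)}$ works independently per block: with probability $q$ the block is all-$\ast$ (survives, all variables projected to $y_\beta$), with probability $1-q$ the block is "killed" — we pick a random section $\ba$ and random bit $\bz$, set the $\ba$-section to $\bz$ and all other sections to $1-\bz$. After projection, each surviving block $\beta$ contributes a single new variable $y_\beta$.

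The standard approach is Håstad's switching lemma via a "canonical decision tree" argument with an encoding/injective-mapping (the "adversary" argument). I would set up a canonical decision tree $T = \DTc(\proj_\brho(F))$ for $\proj_\brho(F)$: process the terms of $F$ in order; for the first term not yet determined to be $0$ under the accumulated restriction, query (in the projected variable space $\{y_\beta\}$) the variables appearing in that term's "projected stub", branch on all ways, and recurse. The key structural point — which is where the projection framework does real work — is that whether a term of $F$ survives under $\proj_\brho$ (as a nonzero term) depends only on the block-level behavior: a term $T$ dies if within some block $\beta$ that is killed, the section chosen by $\brho_\beta$ conflicts with the literals of $T$ in that block; and if $T$ survives, it collapses to a term over the $y_\beta$'s for the surviving blocks touched by $T$, of width at most $r \le u-1$. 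I would prove: conditioned on a partial description of which blocks survive and what happens in killed blocks consistent with a root-to-node path $\pi$ of length $\ell$ in $T$, we can, for each additional step down the tree, charge a factor of roughly $\tfrac{8qru}{1-q}$ — coming from a $q/(1-q)$ "odds ratio" for a block surviving versus being killed in a favorable way, times $O(ru)$ counting the number of (section, sign) configurations to be encoded.

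The core is the injective encoding. Fix a "bad" path $\pi$ in $T$ of length $\ge s$; I want to show $\Pr[\brho$ compatible with $\pi] \le ((8qru)/(1-q))^s \cdot (\text{something that sums to }1$ over a disjoint family$)$. Concretely: given $\brho$ that sends $F$ to a function of decision-tree depth $\ge s$, walk down the bad path, and at each of the first $s$ steps identify the "culprit" term $T_i$ of $F$ and the block $\beta_i$ among its variables that is newly queried. Replace $\brho$'s behavior on $\beta_i$: instead of being all-$\ast$ (surviving), make it a killed block whose chosen section and bit are read off from $T_i$'s literals in $\beta_i$ — this is possible precisely because $T_i$ restricted to $\beta_i$ has width $\le r \le u-1$, so it misses at least one section and the literals in $\beta_i$ all have consistent signs within each section they touch (using Remark~\ref{rem:obvious} / the structure forced by the canonical-tree construction; one must be slightly careful and may need to track which section within the block is being "fixed against"). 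Along with $O(ru)$ bits of auxiliary information per step (identifying which variable of $T_i$ within $\beta_i$, which section, the bit $\bz$), this defines a map from $\{\brho : \text{bad}\}$ to $\{\brho' : \text{good}\} \times \{$aux strings$\}$, and the map is injective because from $\brho'$ and the aux data one can reconstruct the culprit terms and blocks one step at a time (re-running the canonical tree construction). Each modified block trades a probability-$q$ event for a probability-$\tfrac{1-q}{2u}$ event, contributing the ratio $\tfrac{2qu}{1-q}$ per block, and the aux information costs a factor at most $(4r)^s$ or so; multiplying gives the claimed bound $(8qru/(1-q))^s$. Summing over the good $\brho'$ (probabilities $\le 1$) and the aux strings (at most the claimed factor) closes the argument.

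The main obstacle I anticipate is getting the encoding to be genuinely injective while respecting the rigid structure of $\supp(\calD_u^{(d)})$: we are not free to set variables within a block arbitrarily (as in Håstad's restriction-based proof) — a killed block must be "section-monochromatic" with exactly one rare section. So the culprit block $\beta_i$ and the literals of $T_i$ in it must actually be realizable as a killed-block pattern, which requires that $T_i$ restricted to $\beta_i$ touches at most $u-1$ sections (true since $r \le u-1$) \emph{and} has, within each touched section, literals of a single sign — this second property is not automatic for an arbitrary DNF term, so the canonical decision tree must be defined so that the sub-terms we branch on have this consistency, or else we must handle inconsistent terms by observing they are killed with certainty and hence never become culprits. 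Threading this consistency condition through both the tree definition and the decoding is the delicate part; once it is in place, the probability bookkeeping ($q$ vs.\ $(1-q)/(2u)$ and the $O(ru)$ counting) is routine.
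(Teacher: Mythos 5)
Your plan follows the paper's proof in structure and in all of the quantitative accounting: a canonical decision tree that processes the terms of $F$ in order and queries whole blocks, followed by a Razborov-style injection in which each newly queried surviving block (probability $q$) of $\rho$ is overwritten by a section-monochromatic killed pattern (probability $(1-q)/(2u)$), together with roughly $1+(1+\log r)$ bits of auxiliary information per queried block; the per-step factor $2\cdot 2r\cdot 2qu/(1-q)$ gives exactly $\left(8qru/(1-q)\right)^s$, just as in the paper.

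However, the one point you explicitly leave open is handled incorrectly in your sketch, and it is exactly where the paper's definitions do real work. You propose to deal with terms whose literals are inconsistent inside a block by ``observing they are killed with certainty and hence never become culprits.'' That is false for this distribution: a term such as $x_{\beta,\tau}\wedge\neg\,x_{\beta,\tau'}$ with $\tau,\tau'$ in \emph{different} sections of the same block has $\proj_\rho(T)\equiv 0$ whenever the block survives (its projection contains $y_\beta\wedge\neg\,y_\beta$), yet it \emph{can} be satisfied by a killed-block pattern in $\supp(\calD_u^{(d)})$ (take the section of $\tau$ as the rare section with value $1$). Consequently, if you define the culprit as ``first term whose projected stub is nonzero'' and decode by ``first term satisfied by the hybrid restriction $\rho^{(i-1)}\sigma^{(i)}\cdots\sigma^{(s')}$,'' a skipped term of this kind may later be satisfied by some $\sigma^{(j)}$, the decoder can misidentify the culprit, and injectivity fails. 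The paper sidesteps this by taking the culprit criterion to be ``$T\uhr\rho$ is non-constant and $T\uhr\rho\rho'\equiv 1$ for some $\rho'\in\supp(\calD_u^{(d)})$'': then every skipped term is unsatisfiable by \emph{any} support restriction composed with the current restriction, which is precisely what makes the decoding step unambiguous (with the separate, weaker non-falsification condition $T_{s'}\uhr\rho^{(s'-1)}\sigma^{(s')}\not\equiv 0$ for the truncated last tree). Two further small corrections to your bookkeeping: the auxiliary data should record, per queried block, the index within $T_i$ of a variable from that block ($\log r$ bits plus an end-of-set flag) rather than a (section, bit) pair, and the existence of the required $\sigma^{(i)}$ should be argued from block-independence of $\supp(\calD_u^{(d)})$ applied to the witness $\rho'$, not from a per-section sign-consistency condition, which is neither necessary nor sufficient here.
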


Notice that while $F$ is an $r$-DNF over formal variables $\{ x_{\beta,\tau} \colon 
  (\beta,\tau) \in A(d)\}$, we will bound the decision tree depth of $\proj_\brho(F)$, a function over the new formal variables $\{ y_\beta \colon \beta \in B(\ourd) \}$.  
\begin{remark} \label{rem:1}
Projections will play a key role in the proof. Consider a term of the form
$T = x_{\beta,\tau} \wedge \neg\, x_{\beta,\tau'}$
for some $\tau \neq \tau'$,
and suppose our $\rho$ from $\calD^{(\ourd)}_{u}$ is such that $\rho_{\beta,\tau} = \rho_{\beta,\tau'} = \ast$. In this case we have  
$T\uhr \rho = x_{\beta,\tau} \wedge \neg\, x_{\beta,\tau'},$
i.e., the term survives the restriction $\rho$, but  
$\proj_\rho(T) 
= y_\beta \wedge \neg\,y_\beta \equiv 0,$
i.e.,~the term is killed by
  $\proj_\rho$. Our proof will crucially leverage simplifications of this sort.  
\end{remark}

\begin{remark}\label{rem:2}
The parameters of Theorem~\ref{theorem:PSL} are quite delicate in  the sense that the statement 
  fails  to hold for DNFs of width $u$. 
To see this, consider $\SkewedSipser_{u,\orange{d}}$ with $d=1$, a depth-$3$ formula that can also be written as a $u$-DNF. Then by Corollary~\ref{cor:preserve-target} (to be introduced in Section \ref{sec:proof_depth_hierarchy}), we have that for $\smash{\brho \leftarrow \calD^{(1)}_{u}(q)}$ with $\smash{q = w^{-669/1000}}$,  
  the function $\smash{\proj_\brho(\SkewedSipser_{u,1})}$ 
  contains a $\smash{\red{w^{33/100}}}$-way $\OR$ 
as a subfunction --- and hence has decision tree depth at least  $\red{w^{33/100}}$ --- with probability $1-o(1)$.  
So while the statement of Theorem \ref{theorem:PSL} holds for $(u-1)$-DNFs, it
  does not hold for $u$-DNFs when $u=o(w^{\red{669/2000}})$ and $w \to \infty$.
\end{remark}
\begin{remark} \label{rem:CNF_case}
We observe that the conclusion of Theorem \ref{theorem:PSL} still holds if the condition ``$F$ is an $r$-CNF'' replaces ``$F$ is
an $r$-DNF.''  This can be shown either by a straightforward adaptation of our proof, or via a reduction to the DNF case using duality, the invariance of our distribution of random projections under the operation of flipping each bit, and the fact that decision tree depth does not change when input variables and output value are negated.
\end{remark}

\subsection{Canonical decision tree} 

Given an $r$-DNF $F$ over variables $\{x_{\beta,\tau}:(\beta,\tau)\in A(d)\}$ 
  and a restriction $\rho \in \{0,1,\ast\}^{A(\ourd)}$, $\proj_\rho(F)$ is a function over the new variables $\{y_\beta:\beta\in B(d)\}$. \violet{We assume a fixed but arbitrary ordering~on the terms in $F$, and the variables within terms.} 
The canonical decision tree $\mathsf{CanonicalDT}(F, \rho)$ 
  that computes $\proj_\rho(F)$ is defined inductively as follows.\bigskip 


\noindent {\sf CanonicalDT}$\orange{(F,\rho)}$\hspace{0.06cm}: 
\begin{enumerate}
\item[0.] If $\orange{\proj_\rho(F)} \equiv 0$ or $1$, output $0$ or $1$, respectively. 

\item \violet{Otherwise, let $T$ be the first term in $F$ such that $T \uhr \rho$ is non-constant and $T \uhr \rho\rho' \equiv 1$~for some $\smash{\rho'\in \supp(\calD_u^{(\ourd)})}$.  We observe that such a term must exist, or the procedure would have halted at step 0 above and not reached the current step 1. \vspace{0.04cm} 

To see this, first note that certainly there must exist a term $T'$ such that $T' \uhr \rho$ is non-constant since otherwise $F \uhr \rho$ is constant (and likewise $\proj_\rho(F)$). We furthermore claim that among these terms $T'$, there must exist one such that $T' \uhr \rho$ is satisfiable by some $\smash{\rho'\in \supp(\calD_u^{(d)})}$,~i.e. $T'\uhr\rho\rho'\equiv 1$. To prove this, suppose that each of these terms $T'$ satisfies that $T' \uhr \rho$~is~non-constant and there exists no restriction $\smash{\rho' \in \supp(\calD^{(d)}_u)}$ such that $T' \uhr \rho\rho' \equiv 1$.  By Remark~\ref{rem:obvious} (and our assumption that $r \le u-1$), $T' \uhr \rho$ must contain two literals from the same block occurring with opposite signs, i.e.,~$x_{\beta,\tau}$ and $\neg\, x_{\beta,\tau'},$ for some $\beta \in B(d)$. In this case, we have that $\proj_\rho(T')$ contains both $y_\beta$ and $\neg\,y_\beta$ and hence $\proj_\rho(T') \equiv 0$. But if each such term $T'$ has $\proj_\rho(T')\equiv 0$, then $\proj_\rho(F)\equiv 0$ and the procedure would have halted at step (0).}

\item Define 
\[ \eta = \big\{ \beta\in B(d) \colon x_{\beta,\tau} \text{~or~}\neg\,x_{\beta,\tau}\text{ occurs in $T \uhr \rho$
for some $\tau$}\big\} \] 
Our canonical decision tree will then query variables $y_\beta$, $\beta\in \eta$ exhaustively, i.e.,~we grow a complete binary tree of depth $|\eta|$; \red{we will refer to $T$ as \emph{the
  term} of this tree}. 

\item For every assignment $\pi \in \zo^{\eta}$ to variables $y_\beta$, $\beta\in \eta$ (equivalently, every path through the complete binary tree of depth $\smash{|\eta|}$), we recurse on $\mathsf{CanonicalDT}(F, \rho(\eta \mapsto \pi))$, where we use $(\eta \mapsto \pi) \in \{0,1,\ast\}^{A(d)}$ to denote the following restriction:
\begin{equation}\label{def:res} (\eta \mapsto \pi)_{\beta,\tau} = 
\begin{cases} 
\red{\pi_\beta} & \text{if $\beta \in \eta$}  \\
\ast & \text{otherwise,} 
\end{cases}\quad\text{for all $\beta\in B(d)$ and $\tau\in A'$}.
\end{equation} 
\end{enumerate}

\begin{proposition} 
\label{prop:CDT-is-DT}
For every $\rho \in \{0,1,\ast\}^{A(\ourd)}$, we have that 
$\mathsf{CanonicalDT}(F,\rho)$ computes $\proj_\rho(F)$. 
\end{proposition}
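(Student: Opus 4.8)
The plan is to prove the proposition by induction on the number of free ($\ast$) coordinates of $\rho$; this simultaneously establishes that the recursive procedure $\mathsf{CanonicalDT}$ terminates. First I would check well-foundedness: in step~1 the selected term $T$ has $T\uhr\rho$ non-constant, so $T\uhr\rho$ contains at least one literal and hence $\eta\neq\emptyset$; moreover a literal on $x_{\beta,\tau}$ can survive in $T\uhr\rho$ only if $\rho_{\beta,\tau}=\ast$, so for every $\beta\in\eta$ the block $\beta$ contains at least one $\ast$-coordinate. Since $\rho(\eta\mapsto\pi)$ agrees with $\rho$ on all non-$\ast$ coordinates and fixes (to $\pi_\beta$) every $\ast$-coordinate lying in a block of $\eta$, the restriction $\rho(\eta\mapsto\pi)$ has strictly fewer free coordinates than $\rho$. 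The base case is when $\proj_\rho(F)$ is identically $0$ or $1$ (which includes the case of zero free coordinates, since then $\proj_\rho(F)=F\uhr\rho$ is a constant): step~0 then outputs the correct constant.

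For the inductive step, suppose $\proj_\rho(F)$ is non-constant. As already argued within the description of step~1, a term $T$ with the required properties exists, so the procedure queries $\{y_\beta : \beta\in\eta\}$ exhaustively and, for each $\pi\in\{0,1\}^\eta$, recurses on $\mathsf{CanonicalDT}(F,\rho(\eta\mapsto\pi))$. The key identity I would verify is that for every $\pi\in\{0,1\}^\eta$ and every $y'\in\{0,1\}^{B(d)}$,
\[ \proj_{\rho(\eta\mapsto\pi)}(F)(y') \;=\; \proj_\rho(F)(y),\qquad\text{where } y_\beta = \begin{cases}\pi_\beta & \text{if } \beta\in\eta,\\ y'_\beta & \text{if }\beta\notin\eta.\end{cases} \]
This follows by unwinding the definitions of $\proj$ and of restriction composition: both sides equal $F(x)$ for the same $x\in\{0,1\}^{A(d)}$, because on coordinates $(\beta,\tau)$ with $\rho_{\beta,\tau}\in\{0,1\}$ both projections plug in $\rho_{\beta,\tau}$; on coordinates with $\rho_{\beta,\tau}=\ast$ and $\beta\in\eta$ both plug in $\pi_\beta$ (the left side since $(\rho(\eta\mapsto\pi))_{\beta,\tau}=\pi_\beta$, the right side since $y_\beta=\pi_\beta$); and on coordinates with $\rho_{\beta,\tau}=\ast$ and $\beta\notin\eta$ both plug in $y'_\beta$. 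By the induction hypothesis $\mathsf{CanonicalDT}(F,\rho(\eta\mapsto\pi))$ computes $\proj_{\rho(\eta\mapsto\pi)}(F)$, so combining this with the displayed identity, the full tree — which first reads off $\pi = y\uhr\eta$ from its queries in step~2 and then runs the $\pi$-subtree on the remaining input — computes $\proj_\rho(F)$. (A subtree may re-query some $y_\beta$ with $\beta\in\eta$, but since $\proj_{\rho(\eta\mapsto\pi)}(F)$ does not depend on those variables this is harmless; in fact the term chosen in the recursive call involves only $\ast$-coordinates of $\rho(\eta\mapsto\pi)$, all of which lie outside the blocks of $\eta$.)

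I do not expect a genuine obstacle here: the argument is a routine structural induction, and correctness (as opposed to the decision-tree depth bound) imposes no subtle requirements. The only point needing care is the bookkeeping in the displayed identity, i.e.\ confirming that composing $\rho$ with $(\eta\mapsto\pi)$ and then projecting coincides with projecting and then substituting $y_\beta\mapsto\pi_\beta$ for $\beta\in\eta$ — which is precisely the sense in which each block variable $y_\beta$ faithfully represents the surviving $x$-variables of block $\beta$.
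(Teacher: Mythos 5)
Your proof is correct: the induction on the number of $\ast$-coordinates, the well-foundedness check, and the key identity $\proj_{\rho(\eta\mapsto\pi)}(F)(y')=\proj_\rho(F)(y)$ (with $y$ agreeing with $\pi$ on $\eta$ and with $y'$ elsewhere) are exactly the bookkeeping needed, and your observation that $\rho(\eta\mapsto\pi)$ has no free coordinates in the blocks of $\eta$ correctly disposes of the re-querying issue. The paper states Proposition~\ref{prop:CDT-is-DT} without proof, treating it as immediate from the definitions of $\proj_\rho$ and $\mathsf{CanonicalDT}$, so your routine structural induction simply makes explicit the argument the paper leaves implicit.
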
 

While $\mathsf{CanonicalDT}$ is well defined for all $\rho$, we shall mostly be interested in $\rho \in \supp(\calD^{(\ourd)}_{u})$.

\subsection{Proof of Theorem \ref{theorem:PSL}}

Let 
\[ \calB \eqdef \big\{ \rho \in \supp(\calD^{(\ourd)}_{u}) \colon \text{decision tree depth of\ }\mathsf{CanonicalDT}(F,\rho) \ge s \big\} \]
 be the set of bad restrictions,  To prove Theorem~\ref{theorem:PSL}, it suffices to bound
\ignore{\begin{equation}\label{probtobound}}
$ 
 {\Pr_{\brho\leftarrow\calD^{(\ourd)}_{u}\orange{(q)}}[\hspace{0.03cm}\brho\in \calB\hspace{0.03cm}]},
$
\ignore{\end{equation}}
 the total weight of $\calB$ under $\calD^{(\ourd)}_{u}\orange{(q)}$.  
 Following Razborov's strategy (see \cite{beame1993switching} for more details), we will construct a map $$\theta
 \colon \calB\rightarrow \{0,1,\ast\}^{A(\ourd)} \times \zo^s \times \zo^{s (\log r + 1)},$$
\ignore{
\rnote{above, write ``$t$ bits of ``auxiliary information'''' or ``$\encode(\eta)$''? \violet{{\bf Li-Yang:} Yeah all of this was written super informally and sloppily (e.g.~$s'$ is undefined, I didn't address the issue of truncating $\pi$ to have length exactly $s$, etc.). You are right, unlike in RST15 where our auxiliary information was significantly more complicated, in the above it is just $\encode(\eta)$. So if we want to be vague both $\encode(\pi)$ and $\encode(\eta)$ should be considered ``auxiliary information'', and if we want to be precise we should just write $\encode(\pi)$ and $\encode(\eta)$. What I have written above is neither here nor there. I'll rewrite this whole section at some point...}}
}
with the following two key properties:
\begin{enumerate}
\item {\bf (injection)} $\theta(\rho) \ne \theta(\rho')$ for any two distinct 
  restrictions $\rho,\rho'\in \calB$;
\item {\bf (weight increase)} Let $\theta_1(\rho)\in \{0,1,\ast\}^{A(d)}$ denote the 
  first component of $\theta(\rho)$. Then
\begin{equation}\label{juju}
 \frac{\Pr [\hspace{0.03cm}\brho = \theta_1(\rho)\hspace{0.03cm}]}{\Pr\hspace{0.03cm}[\brho  = \rho\hspace{0.03cm}]} \ge \Gamma, \quad
\text{for all $\rho \in \calB$},
\end{equation} 
where $\Gamma = \orange{((1-q)/2qu)^s}$ is ``large''. 
\end{enumerate} 

Assuming such a map $\theta$ exists (below we describe  its construction
  and prove the two properties stated above), Theorem \ref{theorem:PSL} follows
  from a simple combinatorial argument.
\begin{proof}[Proof of Theorem \ref{theorem:PSL}]
Fix a pair $O \in \zo^{s} \times \zo^{s(1+\log r)}$ and let 
\[ \calB_O = \big\{ \rho \in \calB \colon \big(\theta_2(\rho),\theta_3(\rho)\big) = O\big\} \sse \calB,  \]
where we use $\theta_2(\rho)$ and $\theta_3(\rho)$ to denote the
  second and third components of $\theta(\rho)$, respectively.

Then we have that 
\begin{align*} 
\Prx
[\hspace{0.03cm}\brho\in \calB_O\hspace{0.03cm}] = \sum_{\rho\in\calB_O}  
  \Pr[\hspace{0.03cm}\brho = \rho\hspace{0.03cm}]  
  \le (1/\Gamma)\cdot \sum_{\rho\in\calB_O} \Pr[
  \hspace{0.03cm}\brho = \theta_1(\rho)\hspace{0.03cm}] 
  \le 1/\Gamma.
\end{align*}
Here the first inequality uses (\ref{juju}) and
  the second inequality uses the property of $\theta$ being an injection:
  we have that $\theta_1(\rho) \ne \theta_1(\rho')$ for any two distinct $\rho,\rho'\in \calB_O$ (recall that $\theta_2(\rho) = \theta_2(\rho')$ and $\theta_3(\rho) = \theta_3(\rho')$), and therefore $\sum_{\rho\in\calB_O} \Pr[\hspace{0.03cm}\brho = \theta(\rho)_1\hspace{0.03cm}]\le 1$.
Summing up over all possible $O$'s, we have
\[ \Pr[\hspace{0.03cm}\brho\in\calB\hspace{0.03cm}] = 
\sum_{O} \Pr[\hspace{0.03cm}\brho\in \calB_O\hspace{0.03cm}] \le 2^s \cdot (2r)^s \cdot \big(2qu/(1-q)\big)^s = \big(8qru/(1-q)\big)^s,  \] 
and this concludes the proof of Theorem~\ref{theorem:PSL}.
\end{proof}

The rest of the section is organized as follows.
We construct the map $\theta$ in Section \ref{sec:encoding}.
Then we show that it is an injection in Section \ref{sec:decodability},
  by showing that one can decode $\rho$ from $\theta(\rho)$ uniquely~for 
  any $\rho\in \calB$.
Finally we prove the weight increase, i.e., (\ref{juju}) in Section \ref{sec:weight}.

\subsection{Encoding}\label{sec:encoding}

Let $\rho\in \calB$ be a bad restriction.
Let $\pi^*$ be the lexicographically first path of length at least $s$ in the decision tree  $\mathsf{CanonicalDT}(F,\rho)$ (witnessing the badness of $\rho$),
  and $\pi$ be its truncation at length $s$.
Then $\theta_2(\rho)$ is defined to be $\textrm{binary}(\pi)\in \{0,1\}^s$,
  the binary representation of $\pi$, i.e., $\pi_i\in \{0,1\}$ is the evaluation
  of the $i$th $y$-variable along $\pi$.
  
Recall that $\mathsf{CanonicalDT}(F,\rho)$ is composed of a collection of
  complete binary trees, one for~each recursive call of $\mathsf{CanonicalDT}$.
Let $R_1,\ldots,R_{s'}$ for some $1\le s'\le s$ 
  denote the sequence of~complete binary trees that $\pi$
  visits, with $R_1$ sharing the same root as $\mathsf{CanonicalDT}(F,\rho)$
  and $\pi$ ending in $R_{s'}$. (Here $s'\ge 1$ because $s\ge 1$.)
We also use $T_i$ to denote the term of tree $R_i$, for each $i\in [s']$. 

For each $i\in [s'-1]$, we let 
$$
\eta_i=\big\{\beta\in B(d): y_\beta\ \text{is queried in tree $R_i$}\big\},
$$
and for the special case of $i=s'$, we let
\begin{equation}
\eta_{s'}=\big\{\beta\in B(d): y_\beta\ \text{is queried in tree $R_{s'}$
  before the end of $\pi$}\big\}.\label{eq:special-case}
 \end{equation}  
For each $i\in [s']$, $\pi$ induces a binary string $\pi^{(i)}\in \{0,1\}^{\eta_i}$,
  where $\pi^{(i)}_\beta$ for each $\beta\in \eta_i$ is set to be the evaluation of
  $y_\beta$ along $\pi$ (in tree $R_i$). Note that $T_i$ is the $i$-th term processed by $\mathsf{CanonicalDT}(F,\rho)$ along the bad path $\pi$ and equivalently, $T_i$ is the first term processed by $$\mathsf{CanonicalDT}\big(F,\rho(\eta_1 \mapsto \pi^{(1)})\cdots (\eta_{i-1}\mapsto \pi^{(i-1)})\big),$$ where $(\eta_j\mapsto \pi^{(j)})$ is a restriction defined as in (\ref{def:res}). So $T_i$ is the first term in $F$ such that $$T_i \uhr \rho(\eta_1 \mapsto \pi^{(1)})\cdots (\eta_{i-1}\mapsto \pi^{(i-1)})$$ is non-constant and $$T_i \uhr \rho(\eta_1 \mapsto \pi^{(1)})\cdots (\eta_{i-1}\mapsto \pi^{(i-1)}) \rho' \equiv 1,\quad \text{for some $\rho' \in \supp(\calD^{(d)}_u)$.}$$

At a high level, $\theta_1(\rho)$ and $\theta_3(\rho)$ 
  are defined as follows. 
The third component $$\theta_3(\rho)=\encode(\eta_1)\circ \cdots\circ \encode(\eta_{s'})\in 
  \{0,1\}^{s(1+\log r)}$$
  is the concatenation of $s'$ binary strings,
  where each $\encode(\eta_i)$ is a \emph{concise} \ignore{\rnote{Removed the word ``binary'' here -- the phrase ``binary representation'' makes me think of a number being represented, but here it is a set that we are representing as a bitstring in a concise way.}}representation
  of $\eta_i$. In particular, we are able to recover $\eta_i$ given both $\encode(\eta_i)$ and $T_i$.
We describe the encoding of $\eta_i$ in Section \ref{sec:eta}.
For the first component we have
$$\theta_1(\rho)=\rho \sigma^{(1)}\cdots\sigma^{(s')}\in \{0,1,\ast\}^{A(d)},$$ where
  each $\sigma^{(i)}\in \{0,1,\ast\}^{A(d)}$  
  is a restriction and $\rho \sigma^{(1)}\cdots\sigma^{(s')}$
  is their composition
(note that each of these $s'+1$ restrictions,
  like the overall composition, belongs to $\{0,1,\ast\}^{A(d)}$).\ignore{(instead of their concatenation).}
We define the~$\sigma^{(i)}$'s in Section \ref{sec:general}. 

\subsubsection{Encoding $\eta_i$}\label{sec:eta}

Fix an $i\in [s']$. 
Let $\eta_i=\{\beta_1,\ldots,\beta_t\}$ for some $t\ge 1$, with $\beta_j$'s ordered
  lexicographically.
It follows from the definition of $\eta_i$ that 
  every $\beta_j$ appears in $T_i$, meaning that 
  either $x_{\beta,\tau}$ or $\neg\,x_{\beta,\tau}$ appears in $T_i$ for some $\tau\in A'$.

Instead~of encoding each $\beta_j$ directly using its binary representation,
  we use $\log r$ bits to encode the index of the first $x_{\beta,\cdot}$ or $\neg\,x_{\beta,\cdot}$
  variable that occurs in $T_i$. Here $\log r$ bits suffice because $T_i$ has at most $r$ variables. Also recall
  that we fixed an ordering on the variables of each term, so indices of variables
    in $T_i$ are well defined.
We let $\mathrm{location}(\beta_j)$ denote the $\log r$ bits for $\beta_j$.
We also append~it with one additional bit to indicate whether $\beta_j$ is the last
  element in $\eta_i$.
More formally, we write
$$
\encode(\eta_i)=\mathrm{location}(\beta_1)\circ 0 \circ \mathrm{location}(\beta_2)\circ 0
\circ \cdots\circ \mathrm{location}(\beta_t)\circ 1 \in \{0,1\}^{|\eta_{\orange{i}}|(1+\log r)}.
$$

We summarize properties of $\theta_3(\rho)$ below:

\begin{proposition}\label{simpleprop}
Given $\orange{\theta_3}(\rho)$, one can recover uniquely $s'$ and 
  $\encode(\eta_1),\ldots,\encode(\eta_{s'})$.
Furthermore, given $\encode(\eta_i)$ and $T_i$ for some $i\in [s']$, one can recover
  uniquely $\eta_i$.  
\end{proposition}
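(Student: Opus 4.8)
The plan is to prove both claims directly from the concrete format of the encoding string, parsing it left to right.

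\medskip

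\noindent\textbf{Recovering $s'$ and the $\encode(\eta_i)$'s from $\theta_3(\rho)$.}
First I would observe that $\theta_3(\rho) = \encode(\eta_1) \circ \cdots \circ \encode(\eta_{s'})$ is the concatenation of $s'$ blocks, where each block $\encode(\eta_i)$ has the rigid form
\[ \mathrm{location}(\beta_1)\circ 0 \circ \mathrm{location}(\beta_2)\circ 0 \circ \cdots\circ \mathrm{location}(\beta_t)\circ 1, \]
i.e.\ a sequence of $(1+\log r)$-bit chunks, each chunk consisting of $\log r$ ``payload'' bits followed by one ``flag'' bit, where the flag bit is $0$ on every chunk except the last, where it is $1$. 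So the decoding procedure is: read $\theta_3(\rho)$ in chunks of $1+\log r$ bits; examine the flag bit of each chunk; the positions where the flag bit equals $1$ are exactly the right endpoints of the blocks $\encode(\eta_1),\ldots,\encode(\eta_{s'})$. This unambiguously partitions $\theta_3(\rho)$ into the substrings $\encode(\eta_1),\ldots,\encode(\eta_{s'})$, and in particular determines $s'$ as the number of such blocks. (It is worth noting that $\theta_3(\rho) \in \{0,1\}^{s(1+\log r)}$ has been padded to full length $s(1+\log r)$; the padding chunks appear after the final flag-$1$ chunk of the last genuine block, or we may simply agree that $\pi$ visits exactly $s'$ trees and $|\eta_1|+\cdots+|\eta_{s'}| = s$, so no padding is needed and the last chunk of the string is the flag-$1$ chunk of $\encode(\eta_{s'})$. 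Either way $s'$ is recovered.)

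\medskip

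\noindent\textbf{Recovering $\eta_i$ from $\encode(\eta_i)$ and $T_i$.}
Given $\encode(\eta_i)$, by the same chunking as above I recover the sequence $\mathrm{location}(\beta_1),\ldots,\mathrm{location}(\beta_t)$ of $\log r$-bit strings (and $t = |\eta_i|$, read off from the flag bits). Each $\mathrm{location}(\beta_j)$ is by construction the index, among the (at most $r$) variables of $T_i$ under the fixed global variable ordering, of the first literal $x_{\beta_j,\cdot}$ or $\neg\,x_{\beta_j,\cdot}$ occurring in $T_i$. Since $T_i$ is given, its variables are a concrete list $v_1,\ldots,v_{r'}$ (with $r' \le r$) in the fixed order; so $\mathrm{location}(\beta_j)$ points to a specific literal of $T_i$, and the variable underlying that literal is of the form $x_{\beta_j,\tau}$ for a unique address $\beta_j \in B(d)$ (the ``block part'' of the variable's address $A(d) = B(d)\times A'$). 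Doing this for each $j \in [t]$ recovers $\beta_1,\ldots,\beta_t$ and hence $\eta_i = \{\beta_1,\ldots,\beta_t\}$.

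\medskip

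\noindent\textbf{Main obstacle.} The only subtlety — and the point I would be careful to state explicitly — is that $\mathrm{location}(\beta_j)$ really does determine $\beta_j$ unambiguously, i.e.\ that distinct elements of $\eta_i$ give distinct locations. This holds because, by definition of $\eta_i$, every $\beta \in \eta_i$ appears in $T_i$, and for each such $\beta$ we take the location of the \emph{first} literal on variable $x_{\beta,\cdot}$ in $T_i$; two distinct blocks $\beta \ne \beta'$ involve disjoint sets of formal variables, so their ``first occurrence'' indices are distinct, and the map $\mathrm{location}(\beta_j) \mapsto \beta_j$ is well defined given $T_i$. (One should also note that $\log r$ bits genuinely suffice: $T_i$ has at most $r$ variables, so an index into them fits in $\lceil \log r\rceil \le \log r$ bits, using the paper's standing convention; and $\sum_i |\eta_i|(1+\log r) = s(1+\log r)$ because the bad path $\pi$ has length exactly $s$ and $|\eta_1|+\cdots+|\eta_{s'}| = s$, which makes the length bookkeeping in the statement consistent.) Assembling these observations yields the proposition.
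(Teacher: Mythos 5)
Your proposal is correct and matches the paper's (implicit) argument: the paper treats Proposition \ref{simpleprop} as immediate from the construction of $\encode(\eta_i)$, and your chunk-by-chunk parsing via the flag bits, followed by using $\mathrm{location}(\beta_j)$ as an index into the fixed ordering of the at most $r$ variables of $T_i$ to read off the block part $\beta_j$, is exactly that argument spelled out. Note only that the hedge about padding is unnecessary: since $\pi$ is truncated at length $s$ we have $|\eta_1|+\cdots+|\eta_{s'}|=s$ exactly, so $\theta_3(\rho)$ consists precisely of the $s$ chunks and the flag-$1$ chunks delimit the $\encode(\eta_i)$'s unambiguously.
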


\ignore{

\subsubsection{$\sigma^{(1)}$: the base case}\label{sec:base}

\rnote{I know Li-Yang had thought about getting rid of Section 4.3.2, the base case, altogether and just doing the general case.  Should we do this?  I think having it there makes it just a little easier to read, but there
is a lot of redundancy.  If we get rid of it we should slightly modify Section 4.3.2 to incorporate the explanations from 4.3.1.}


By the construction of $\mathsf{CanonicalDT}$, 
  $T_1$ is the first term in $F$ such that 
  $\proj_\rho(T_1)\not\equiv 0$. 
Thus,
\[ \eta_1 =  \big\{ \beta\in B(d) \colon x_{\beta,\tau} \text{~or~}\neg\,x_{\beta,\tau}\text{ occurs in $T \uhr \rho$
for some $\tau\in A'$}\big\}.\]
Also for each $\beta\in \eta_1$, variables in block $A(d,\beta)$
  that occur in $T_1\uhr \rho$ must occur with the same sign.

The restriction $\smash{\sigma^{(1)} \in \{0,1,\ast\}^{\orange{B(d) \times A'}}}$ is constructed as follows. 
We set $\smash{\sigma^{(1)}_{\beta,\tau}=\ast}$, for all \orange{pairs $(\beta,\tau)$ with $\beta\notin \eta_1$ and $\tau\in A'$}.
For each  $\beta \in \eta_1$, \orange{let $a_\beta \in [u]$ be the smallest index such that 
  for every $\tau=(a_\beta,b)$ with $b \in [\red{w^{33/100}}]$, neither $x_{\beta,\tau}$
  nor $\neg\,x_{\beta,\tau}$ appears in $T_1 \uhr \rho.$}  ({Here is where we use our assumption~on the width of $F$:} Because $\mathrm{width}(T_1\uhr \rho) \le \mathrm{width}(T_1) \le r \le u - 1$ and there are $u$ sections in block $A(\ourd,\beta)$, there must exist such a section  ``untouched'' by $T_1 \uhr \rho$.)  
If variables from the block $A(\ourd,\beta)$ all occur positively in $T_1 \uhr \rho$, we set 
\[ \sigma^{(1)}_{\beta,\tau} =  
\begin{cases}
1 & \text{if \orange{$(\beta,\tau) \notin A(\ourd,\beta,a_\beta)$}} \\
0 & \text{if \orange{$(\beta,\tau) \in A(\ourd,\beta, a_\beta)$},}
\end{cases} 
\] 
and if they all occur negatively in $T_1 \uhr \rho$ (note that one of these two cases must hold), we set 
\[ \sigma^{(1)}_{\beta,\tau} = 
\begin{cases}
0 & \text{if \orange{$(\beta,\tau) \notin A(\ourd,\beta,a_\beta)$}} \\
1 & \text{if \orange{$(\beta,\tau) \in A(\ourd,\beta, a_\beta)$}.}
\end{cases} 
\]  
This finishes the definition of $\sigma^{(1)}$. We mention two key properties of $\sigma^{(1)}$: 
\begin{enumerate} 
\item $T_1 \uhr \rho \sigma^{(1)} \equiv 1$. (This will be useful for decoding.) 
\item For all $\beta \in \eta_1$, we have that $\rho_{\beta,\tau}=\ast$
  for all $\tau\in A'$ whereas \[ \sigma^{(1)}_\beta \in \zo^{A(\ourd,\beta)}, \]
\orange{where $\sigma^{(1)}_\beta$ (and $\rho_\beta$ below) 
  denotes the subrestriction of $\sigma^{(1)}$ (or $\rho$, respectively) 
    over $A(\ourd,\beta)$.} By the definition of $\calD_u\orange{(q)}$, we have  
\[ \Prx_{\boldsymbol{\varrho}\leftarrow\calD_{u}\orange{(q)}}[\hspace{0.03cm}\boldsymbol{\varrho} = \rho_\beta\hspace{0.03cm}] = q \quad \text{whereas} \quad \Prx_{\boldsymbol{\varrho}\leftarrow\calD_{u}\orange{(q)}}[\hspace{0.03cm}\boldsymbol{\varrho} = \sigma^{(1)}_\beta\hspace{0.03cm}] = \frac{1-q}{2u}.\] (This will be useful for the weight increase calculations.)
\ignore{\rnote{Two questions:  first, should it be
$\weight_{\calD_{u,\ell}}(\sigma^{(1)}_\beta) = \frac{1-q}{2^{\ell+1}}$?  \violet{{\bf Li-Yang:} Yup, fixed.} Second, did we define somewhere the notation that
$\weight_{\calD_{u,\ell}}(\kappa)$ denotes $\Pr_{\brho \leftarrow \calD_{u,\ell}}[\brho = \kappa]$?  If not we should say it somewhere. \violet{{\bf Li-Yang:} Yeah it wasn't defined. I just got rid of $\weight_{\calD_{u,\ell}}(\cdot)$ since we do not use it elsewhere and it is not much more succinct.}
}}

\end{enumerate} 


}

\subsubsection{The $\sigma^{(i)}$ restriction}\label{sec:general}


We now define $\sigma^{(i)}$ for a general $i\in [s']$. For ease of notation we define the restriction
$$\rho^{(i-1)} \eqdef \rho(\eta_1 \mapsto \pi^{(1)})\cdots (\eta_{i-1} \mapsto \pi^{(i-1)})
  \in \{0,1,\ast\}^{A(d)}.$$ 
Note that $\rho^{(0)}=\rho$. 
Recalling our $\mathsf{CanonicalDT}$ algorithm and the definition of $T_i$ as the $i$-th~term processed by $\mathsf{CanonicalDT}(F,\rho)$, we have that $T_i$ is the first term in $F$ such that $T_i \uhr \rho^{(i-1)}$~is~non-constant and $\smash{T_i \uhr\rho^{(i-1)}\rho' \equiv 1}$ for some $\smash{\rho' \in \supp(\calD^{(d)}_u)}$. Therefore, we have
\begin{align*}
 \eta_i =  \big\{\beta\in B(d) \colon x_{\beta,\tau} \text{~or~}\neg\,x_{\beta,\tau}\text{ occurs in $T_i \uhr \rho^{(i-1)}$ 
 for some {$\tau\in A'$}}\big\}.
\end{align*}

\violet{We define $\sigma^{(i)} \in \{0,1,\ast\}^{B(d) \times A'}$ to be an arbitrary restriction (say the lexicographic first under the ordering $0 \prec 1 \prec \ast$) satisfying the following three properties: 
\begin{enumerate}
\item $T_i \uhr \rho^{(i-1)} \sigma^{(i)} \not\equiv 0$, and
\item $\sigma^{(i)} \in \supp(\calD^{(d)}_u)$, and 
\item $\sigma^{(i)}_{\beta} \in \{0,1\}^{A'}$ for all $\beta \in \eta_i$, and $\sigma^{(i)}_{\beta} = \{\ast\}^{A'}$ for all $\beta \notin \eta_i$. 
\end{enumerate} 
In words, $\sigma^{(i)}$ is the lexicographic first restriction in $\supp(\calD^{(d)}_u)$ that completely fixes blocks $\beta \in \eta_i$, leaves all other blocks $\beta \notin \eta_i$ free, and fixes the blocks in $\eta_i$ in a way that does not falsify $T_i \uhr \rho^{(i-1)}$. For $1 \le i < s'$, we recall that $\eta_i$ contains all blocks with variables occurring in $T_i \uhr \rho^{(i-1)}$, and so property (1) above can in fact be stated as $T_i \uhr \rho^{(i-1)} \sigma^{(i)} \equiv 1$.  (This is not necessarily true for the special case of $i = s'$ since $\eta_{s'}$ may only contain a subset of the blocks with variables occurring in $T_{s'} \uhr \rho^{(s'-1)}$; c.f.~(\ref{eq:special-case}).)}

We observe that such a restriction $\sigma^{(i)}$ (one satisfying all three properties above) must exist. As remarked at the start of this subsection, by the definition of $T_i$ there exists a restriction $\rho' \in \supp(\calD^{(d)}_u)$ such that~$T_i \uhr \rho^{(i-1)}\rho' \equiv 1$. \violet{This along with the fact that $\smash{\calD^{(d)}_u}$ is independent across blocks implies the existence of a restriction in $\supp(\calD^{(d)}_u)$ that fixes exactly the blocks in $\eta_i$ in a way that does not falsify $T_i \uhr \rho^{(i-1)}$. 
}

This finishes the definition of $\sigma^{(i)}$.
We record the following key properties of $\sigma^{(i)}$:

\begin{proposition}
\label{prop:sigma-satisfies} 
$T_i \uhr \rho^{(i-1)}\sigma^{(i)} \equiv 1$  \violet{for $1 \le i < s'$, and $T_{s'} \uhr \rho^{(s'-1)} \sigma^{(s')} \not\equiv 0$.} 
\end{proposition}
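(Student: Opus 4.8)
The plan is to verify the three defining properties of $\sigma^{(i)}$ directly from its construction in Section~\ref{sec:general}, and then read off the claimed conclusion. Recall that $\sigma^{(i)}$ was defined to be the lexicographically first restriction in $\supp(\calD^{(d)}_u)$ satisfying: (1) $T_i \uhr \rho^{(i-1)} \sigma^{(i)} \not\equiv 0$; (2) $\sigma^{(i)} \in \supp(\calD^{(d)}_u)$; and (3) $\sigma^{(i)}_\beta \in \zo^{A'}$ for all $\beta \in \eta_i$ and $\sigma^{(i)}_\beta = \{\ast\}^{A'}$ for all $\beta \notin \eta_i$. We already argued that such a restriction exists (using that $T_i\uhr\rho^{(i-1)}\rho'\equiv 1$ for some $\rho'\in\supp(\calD^{(d)}_u)$ by the definition of $T_i$, together with the block-independence of $\calD^{(d)}_u$). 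So it remains only to upgrade property (1) from ``$\not\equiv 0$'' to ``$\equiv 1$'' in the case $1 \le i < s'$.

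For this, first I would recall that for $1 \le i < s'$ (i.e.~when $R_i$ is not the final tree visited by $\pi$), the set $\eta_i$ is \emph{all} of $\{\beta \in B(d) : x_{\beta,\tau}\text{ or }\neg x_{\beta,\tau}\text{ occurs in }T_i\uhr\rho^{(i-1)}\text{ for some }\tau\}$ --- this is exactly how $\eta$ was defined in step~(2) of $\mathsf{CanonicalDT}$, and the special truncation in~\eqref{eq:special-case} only applies to $i=s'$. Consequently, $T_i\uhr\rho^{(i-1)}$ involves only variables in blocks $\beta\in\eta_i$. By property~(3), $\sigma^{(i)}$ completely fixes every block in $\eta_i$ to a $0/1$ assignment, hence $T_i\uhr\rho^{(i-1)}\sigma^{(i)}$ involves no remaining variables: it is a constant. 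Combined with property~(1), which says this constant is not $0$, we conclude $T_i\uhr\rho^{(i-1)}\sigma^{(i)}\equiv 1$ for $1\le i<s'$. For $i=s'$, property~(3) still fixes exactly the blocks in $\eta_{s'}$, but $\eta_{s'}$ may be a strict subset of the blocks appearing in $T_{s'}\uhr\rho^{(s'-1)}$, so $T_{s'}\uhr\rho^{(s'-1)}\sigma^{(s')}$ need not be constant; all we retain is the ``$\not\equiv 0$'' from property~(1), which is precisely the stated conclusion.

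I do not expect any genuine obstacle here --- the proposition is essentially a bookkeeping restatement of the construction, and the only subtle point is to be careful about the distinction between the generic trees $R_i$ ($i<s'$), for which $\eta_i$ captures all blocks touched by the term, and the terminal tree $R_{s'}$, for which $\eta_{s'}$ is truncated at length $s$ along $\pi$ and may therefore omit some blocks of $T_{s'}\uhr\rho^{(s'-1)}$. I would make sure to state this distinction explicitly and point back to step~(2) of $\mathsf{CanonicalDT}$ and to~\eqref{eq:special-case} for the two cases, and that completes the proof.
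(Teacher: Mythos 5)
Your proof is correct and follows essentially the same route as the paper: for $1\le i<s'$ the set $\eta_i$ consists of \emph{all} blocks touched by $T_i\uhr\rho^{(i-1)}$, so property (3) makes $T_i\uhr\rho^{(i-1)}\sigma^{(i)}$ a constant and property (1) forces that constant to be $1$, while for $i=s'$ the truncation in (\ref{eq:special-case}) means only the ``$\not\equiv 0$'' guarantee survives. This is exactly the bookkeeping observation the paper makes immediately after defining $\sigma^{(i)}$.
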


\begin{proposition}
\label{prop:weight-increase} 
For every $\beta \in \eta_i$, we have $\rho^{(i-1)}_{\beta}=\{*\}^{A'}$ 
  whereas
$\sigma^{(i)}_\beta \in \zo^{A'}$, and
\[ \Prx_{\boldsymbol{\varrho}\leftarrow\calD_{u}(q)}[\hspace{0.03cm}
\boldsymbol{\varrho} = \rho^{(i-1)}_\beta\hspace{0.03cm}] = q \quad \text{whereas} \quad \Prx_{\boldsymbol{\varrho}\leftarrow\calD_{u}(q)}[\hspace{0.03cm} \boldsymbol{\varrho} = \sigma^{(i)}_\beta\hspace{0.03cm}] = \frac{1-q}{2u}.\] 
\ignore{\rnote{As before,  should it be
$\weight_{\calD_{u}}(\sigma^{(i)}_\beta) = \frac{1-q}{2u}$? \violet{{\bf Li-Yang:} Yup, fixed.}}}
\end{proposition}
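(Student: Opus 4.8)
The plan is to verify the two conjuncts of the proposition separately, each essentially by unwinding definitions.

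\emph{The first claim} --- that $\rho^{(i-1)}_\beta = \{\ast\}^{A'}$ for every $\beta \in \eta_i$ --- I would argue as follows. By the definition of $\eta_i$, if $\beta \in \eta_i$ then some literal on a variable $x_{\beta,\tau}$ occurs in $T_i \uhr \rho^{(i-1)}$, so in particular $\rho^{(i-1)}$ leaves $x_{\beta,\tau}$ free. Now $\rho^{(i-1)} = \rho(\eta_1 \mapsto \pi^{(1)}) \cdots (\eta_{i-1} \mapsto \pi^{(i-1)})$, and by (\ref{def:res}) the restriction $(\eta_j \mapsto \pi^{(j)})$ fixes every variable in every block of $\eta_j$ and leaves all other variables free; hence $x_{\beta,\tau}$ is fixed by the composition exactly when $\rho_{\beta,\tau} \ne \ast$ or $\beta \in \eta_j$ for some $j < i$. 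Since $x_{\beta,\tau}$ survives, we get $\rho_{\beta,\tau} = \ast$ and $\beta \notin \eta_1 \cup \cdots \cup \eta_{i-1}$. The key step is then to invoke the ``all-or-nothing'' block structure of $\supp(\calD^{(d)}_u)$ from Remark \ref{rem:obvious}: within each block, a restriction in the support either fixes every variable or leaves every variable free, so $\rho_{\beta,\tau} = \ast$ forces $\rho_\beta = \{\ast\}^{A'}$. Combined with the fact that none of the $(\eta_j \mapsto \pi^{(j)})$, $j < i$, touches block $\beta$, we conclude $\rho^{(i-1)}_{\beta,\tau'} = \ast$ for all $\tau' \in A'$, i.e.\ $\rho^{(i-1)}_\beta = \{\ast\}^{A'}$. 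The companion statement $\sigma^{(i)}_\beta \in \zo^{A'}$ is immediate from property (3) in the definition of $\sigma^{(i)}$ in Section \ref{sec:general}.

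\emph{The probability computations} are then short. By Definition \ref{def:D_ell}, a draw $\boldsymbol{\varrho} \leftarrow \calD_u(q)$ equals $\{\ast\}^{A'}$ exactly in the first (probability-$q$) branch, giving $\Pr[\boldsymbol{\varrho} = \rho^{(i-1)}_\beta] = q$. For $\sigma^{(i)}_\beta$: property (2) in that definition says $\sigma^{(i)} \in \supp(\calD^{(d)}_u)$, hence $\sigma^{(i)}_\beta \in \supp(\calD_u)$; since $\sigma^{(i)}_\beta \in \zo^{A'}$ is not the all-$\ast$ string, Remark \ref{rem:obvious} forces it to be one of the $2u$ section-monochromatic strings produced in the second branch of Definition \ref{def:D_ell}, each of which arises from choosing the rare section $\ba \in [u]$ and rare value $\bz \in \{0,1\}$ uniformly and independently after the probability-$(1-q)$ branch, and therefore has probability $(1-q)\cdot\tfrac1u\cdot\tfrac12 = \tfrac{1-q}{2u}$.

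I do not expect a genuine obstacle here; the one place calling for care is the first claim, specifically making sure that the composed restriction $\rho^{(i-1)}$ leaves \emph{all} of block $\beta$ free, not merely the single surviving literal. This is exactly where the block-wise ``all-or-nothing'' structure of $\supp(\calD^{(d)}_u)$ (Remark \ref{rem:obvious}) is essential, together with the fact that block $\beta$ was not fixed by any earlier $(\eta_j \mapsto \pi^{(j)})$.
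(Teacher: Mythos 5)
Your proposal is correct and matches the paper's (implicit) argument: the paper states Proposition \ref{prop:weight-increase} without a separate proof precisely because it follows, as you show, by unwinding the definition of $\eta_i$, property (3) of $\sigma^{(i)}$, the block-wise all-or-nothing structure of $\supp(\calD^{(d)}_u)$ from Remark \ref{rem:obvious}, and the two branches of Definition \ref{def:D_ell}. Your care about why the \emph{entire} block $\beta$ remains free under $\rho^{(i-1)}$ (and not just the one surviving literal) is exactly the point the paper relies on.
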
 

\subsection{Decodability}\label{sec:decodability}

\begin{lemma}
\label{lemma:injection}
The map\ignore{\rnote{Should the red thing in what follows be ``$[w] \times (\zo\times [w])^{\ourd-\ell}$'' instead?  Or can/should we replace the whole ``$(\zo\times [w])^{\ourd-\ell} \times A(\ell)$'' by simply ``$A^\dagger(\ourd)$'' instead -- all we want to get across is that the first coordinate of $\theta(\rho)$ is some restriction, right? \violet{{\bf Li-Yang:} Yup, fixed. What I had written previously was incorrect.}}}
 $\theta \colon \calB \to \{0,1,\ast\}^{A(\ourd)} \times \zo^s \times \zo^{s (\log r + 1)}$, where
\begin{equation}\label{maintheta}
 \theta(\rho) = \Big(\rho\sigma^{(1)}\cdots\sigma^{(s')},\hspace{0.07cm} \mathrm{binary}(\pi), \hspace{0.07cm}
  \encode(\eta_1)\circ \cdots\circ\encode(\eta_{s'})\Big), \end{equation}
is an injection.  
\end{lemma}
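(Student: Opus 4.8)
The plan is to show that $\rho$ can be uniquely recovered from $\theta(\rho)=(\rho\sigma^{(1)}\cdots\sigma^{(s')},\mathrm{binary}(\pi),\encode(\eta_1)\circ\cdots\circ\encode(\eta_{s'}))$, which of course suffices for injectivity. The decoding proceeds by reconstructing, one step at a time, the sequence of terms $T_1,\dots,T_{s'}$, the block-sets $\eta_1,\dots,\eta_{s'}$, the partial restrictions $\rho^{(0)}=\rho,\rho^{(1)},\dots,\rho^{(s'-1)}$, and the ``guessing'' restrictions $\sigma^{(1)},\dots,\sigma^{(s')}$. The key data we maintain is the ``current composite restriction'' $\rho^{(i-1)}\sigma^{(i)}\cdots\sigma^{(s')}$: note that $\theta_1(\rho)=\rho\sigma^{(1)}\cdots\sigma^{(s')}$ is exactly this object for $i=1$, and at the end (after peeling off all $\sigma^{(i)}$'s and re-introducing the $(\eta_i\mapsto\pi^{(i)})$ pieces) we will have recovered $\rho$ itself.

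First I would extract $s'$ and $\encode(\eta_1),\dots,\encode(\eta_{s'})$ from $\theta_3(\rho)$ using Proposition~\ref{simpleprop} (the ``last element'' marker bits delimit the blocks, and one reads off $s'$ as the number of blocks). I would also recover $\pi$, hence each $\pi^{(i)}$ once the $\eta_i$ are known, from $\theta_2(\rho)=\mathrm{binary}(\pi)$. The inductive step is the heart of the argument: assume we have reconstructed $\rho^{(i-1)}$, and that we currently hold the composite restriction $\kappa_i:=\rho^{(i-1)}\sigma^{(i)}\cdots\sigma^{(s')}$ (for $i=1$ this is $\theta_1(\rho)$, since $\rho^{(0)}=\rho$). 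We first recover $T_i$: by Proposition~\ref{prop:sigma-satisfies}, $T_i\uhr\rho^{(i-1)}\sigma^{(i)}\equiv 1$ for $i<s'$ (and $\not\equiv 0$ for $i=s'$), and since the later restrictions $\sigma^{(i+1)},\dots,\sigma^{(s')}$ only touch blocks in $\eta_{i+1}\cup\cdots\cup\eta_{s'}$, which are disjoint from the blocks appearing in $T_i\uhr\rho^{(i-1)}$ — this disjointness is exactly the structural fact that each $\sigma^{(j)}$ fixes precisely blocks in $\eta_j$ and the decision tree never re-queries a block — we get $T_i\uhr\kappa_i\not\equiv 0$ in all cases. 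On the other hand, by the definition of $T_i$ in the canonical decision tree, $T_i$ is the \emph{first} term of $F$ that is non-constant under $\rho^{(i-1)}$ and satisfiable by some $\rho'\in\supp(\calD^{(d)}_u)$; I would argue that this is the same as being the first term that is non-falsified (not $\equiv 0$) under $\kappa_i$, using Remark~\ref{rem:obvious} together with $r\le u-1$ to convert ``satisfiable by a support element'' into ``does not contain two opposite-sign literals from one block'', and noting that all earlier terms are killed (become $\equiv 0$) by $\proj_{\rho^{(i-1)}}$ hence also by $\kappa_i$. So $T_i$ is determined by $F$ and $\kappa_i$ alone.

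Once $T_i$ is known, Proposition~\ref{simpleprop} lets us recover $\eta_i$ from $\encode(\eta_i)$ and $T_i$ (the $\mathrm{location}$ bits point to literals in $T_i$, whose blocks are the elements of $\eta_i$). With $\eta_i$ and $\pi$ in hand we obtain $\pi^{(i)}\in\{0,1\}^{\eta_i}$. Now I would recover $\sigma^{(i)}$: it is, by construction in Section~\ref{sec:general}, the lexicographically-first restriction in $\supp(\calD^{(d)}_u)$ that fixes exactly the blocks of $\eta_i$, leaves the rest free, and does not falsify $T_i\uhr\rho^{(i-1)}$ — but $T_i\uhr\rho^{(i-1)}$ is determined by $T_i$ and $\rho^{(i-1)}$, both of which we have, so $\sigma^{(i)}$ is determined. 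Finally, from $\kappa_i=\rho^{(i-1)}\sigma^{(i)}\cdots\sigma^{(s')}$, knowing $\sigma^{(i)}$ (which, by Proposition~\ref{prop:weight-increase}, acts on exactly the blocks of $\eta_i$, where $\rho^{(i-1)}$ was all-$\ast$), we can ``invert'' the composition on those blocks: on blocks in $\eta_i$ the composite $\kappa_i$ agrees with $\sigma^{(i)}$ (since $\rho^{(i-1)}$ is $\ast$ there and the later $\sigma^{(j)}$ don't touch $\eta_i$), so replacing the $\eta_i$-blocks of $\kappa_i$ by $\pi^{(i)}$ yields exactly $\rho^{(i)}\sigma^{(i+1)}\cdots\sigma^{(s')}=\kappa_{i+1}$, and separately we record that $\rho^{(i-1)}$ restricted to $\eta_i$ is all-$\ast$, which together with $\rho^{(i-1)}$ already known off $\eta_i$ determines $\rho^{(i-1)}$ fully — actually we knew $\rho^{(i-1)}$ at the start of the step, and the new information is $\rho^{(i)}=\rho^{(i-1)}(\eta_i\mapsto\pi^{(i)})$. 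Iterating $i=1,\dots,s'$ produces $\kappa_{s'+1}=\rho^{(s')}$, but more usefully it reconstructs $\rho^{(0)}=\rho$ directly (we never actually modify the off-$\eta_i$ parts, and on $\cup\eta_i$ we know $\rho$ is all-$\ast$).

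The main obstacle I anticipate is the bookkeeping around the special case $i=s'$: there $\eta_{s'}$ may be a strict subset of the blocks appearing in $T_{s'}\uhr\rho^{(s'-1)}$ (per~(\ref{eq:special-case})), so Proposition~\ref{prop:sigma-satisfies} only gives $T_{s'}\uhr\rho^{(s'-1)}\sigma^{(s')}\not\equiv 0$ rather than $\equiv 1$, and one must check that $T_{s'}$ is still correctly identified as the first non-falsified term under $\kappa_{s'}$ and that $\sigma^{(s')}$ is still well-defined and recoverable under the modified requirement. I would handle this by observing that the ``first non-falsified term'' characterization of $T_i$ from the canonical decision tree holds uniformly in $i$ (it does not rely on $\equiv 1$), and that the three defining properties of $\sigma^{(i)}$ in Section~\ref{sec:general} — non-falsification, membership in $\supp(\calD^{(d)}_u)$, and fixing exactly $\eta_i$ — are stated uniformly and are exactly what the decoder needs, so the decoding recipe is literally the same for $i=s'$. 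A secondary, purely notational, point to be careful about is truncation: $\pi$ is the length-$s$ truncation of the lexicographically-first bad path, so $\eta_{s'}$ is defined via (\ref{eq:special-case}) as the blocks queried \emph{before the end of $\pi$}; I would make sure the reconstructed $\pi^{(s')}$ uses exactly these blocks and that $|\eta_1|+\cdots+|\eta_{s'}|=s$, which is what makes $\theta_2,\theta_3$ land in $\zo^s$ and $\zo^{s(\log r+1)}$ respectively.
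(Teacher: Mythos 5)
Your high-level decoding plan is the same as the paper's, but the criterion you use to identify $T_i$ from the hybrid restriction $\kappa_i=\rho^{(i-1)}\sigma^{(i)}\cdots\sigma^{(s')}$ is wrong, and this is precisely the delicate point of the whole argument. You assert that $T_i$ is the first term of $F$ that is not falsified by $\kappa_i$, justified by ``all earlier terms are killed by $\proj_{\rho^{(i-1)}}$ hence also by $\kappa_i$.'' This conflates two different notions of killing. A term $T$ occurring before $T_i$ that $\mathsf{CanonicalDT}$ skips because it is not satisfiable by any element of $\supp(\calD^{(d)}_u)$ contains (by Remark~\ref{rem:obvious}, using $r\le u-1$) two opposite-sign literals from a single block, so indeed $\proj_{\rho^{(i-1)}}(T)\equiv 0$; but as a plain restriction $T\uhr\kappa_i$ need not be $\equiv 0$. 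Concretely, take $T=x_{\beta,\tau}\wedge\neg\,x_{\beta,\tau'}$ with $\tau,\tau'$ in different sections of $\beta$, where $\rho^{(i-1)}_\beta=\{\ast\}^{A'}$ and $\beta\notin\eta_i\cup\cdots\cup\eta_{s'}$: then $T\uhr\kappa_i=T\not\equiv 0$, so your decoder stops at $T$ rather than at the true $T_i$; and since $\encode(\eta_i)$ records positions of literals \emph{inside} $T_i$, everything downstream decodes incorrectly, so injectivity is not established. Nor can you repair this by additionally testing ``no opposite-sign pair within a block'' after restricting by $\kappa_i$: a later $\sigma^{(j)}$ may fix $\beta$ section-monochromatically in a way that \emph{satisfies} both $x_{\beta,\tau}$ and $\neg\,x_{\beta,\tau'}$ (they lie in different sections), erasing the evidence that $T$ was skipped.

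The paper's decoder uses a criterion calibrated to what is actually guaranteed: for $1\le i<s'$, $T_i$ is the first term with $T_i\uhr\kappa_i\equiv 1$. This works because Proposition~\ref{prop:sigma-satisfies} gives $T_i\uhr\rho^{(i-1)}\sigma^{(i)}\equiv 1$, while the composition $\sigma^{(i)}\cdots\sigma^{(s')}$ (fixing disjoint sets of blocks, each section-monochromatically) is itself an element of $\supp(\calD^{(d)}_u)$, so no earlier term --- which by definition of $T_i$ is unsatisfiable by every support element after $\rho^{(i-1)}$ --- can become identically $1$ under $\kappa_i$; earlier terms may remain non-constant, which is exactly why ``first non-falsified'' fails but ``first identically $1$'' succeeds. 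For the special case $i=s'$, where only $T_{s'}\uhr\rho^{(s'-1)}\sigma^{(s')}\not\equiv 0$ is available, the paper identifies $T_{s'}$ as the first term $T$ with $T\uhr\rho^{(s'-1)}\sigma^{(s')}\rho''\equiv 1$ for some $\rho''\in\supp(\calD^{(d)}_u)$; your treatment of this case inherits the same flaw, since ``non-falsified'' must be replaced by this satisfiability-by-a-support-completion test. The remainder of your proposal --- recovering $s'$ and each $\eta_i$ from $\theta_3(\rho)$ together with $T_i$, overwriting the $\eta_i$-blocks of $\kappa_i$ with $\pi^{(i)}$ to obtain $\kappa_{i+1}$, and the accounting $|\eta_1|+\cdots+|\eta_{s'}|=s$ --- is correct and matches the paper.
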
 

We will prove Lemma~\ref{lemma:injection} by describing a 
  decoder that can recover $\rho\in \calB$ given 
  $\theta(\rho)$ as in (\ref{maintheta}). 
Let $\sigma=\sigma^{(1)}\cdots\sigma^{(s')}$.
Note that $s'$ can be derived from $\theta_3(\rho)$.
To obtain $\smash{\rho}$, it suffices to recover the sets $\smash{\eta_i}$,
  by simply replacing $\smash{(\rho\sigma)_{\beta,\tau}}$ with $\ast$ for 
  all $\smash{\beta\in \eta_1\cup\cdots\cup \eta_{s'}}$ and all $\tau\in A'$.
  
To recover $\eta_i$'s, we assume inductively that the decoder has recovered the ``hybrid'' restriction 
\begin{equation}\label{eq:uiui}\rho^{(i-1)}\sigma^{(i)}\cdots\sigma^{(s')}= \rho  (\eta_1 \mapsto \pi^{(1)}) \cdots (\eta_{i-1}\mapsto \pi^{(i-1)})\hspace{0.03cm}\sigma^{(i)} \cdots \sigma^{(s')}~\text{and the  sets}~\eta_1, \ldots, \eta_{i-1},\end{equation}  
with the base case $i=1$ being $\rho\hspace{0.02cm}\sigma^{(1)} \cdots \sigma^{(s')}=\theta_1(\rho)$, which is trivially true by assumption. 
We~will show below how to decode 
  $T_i$ and $\eta_i$, and then obtain the next ``hybrid'' restriction 
$$
\rho^{(i )}\sigma^{(i+1)}\cdots\sigma^{(s')}= \rho  (\eta_1 \mapsto \pi^{(1)}) \cdots (\eta_{i }\mapsto \pi^{(i)})\hspace{0.03cm}\sigma^{(i+1)} \cdots \sigma^{(s')}~
$$ 
We can recover all $s'$ sets $\eta_1,\ldots,\eta_{s'}$ 
  after repeating this for $s'$ times. 

The following lemma shows how to recover $T_i$, given the ``hybrid'' restriction
  in (\ref{eq:uiui}).


\begin{proposition}
For $1 \le i < s'$, we have that $T_i$ is the first term in $F$ such that $$T_i \uhr \rho^{(i-1)}\sigma^{(i)} \cdots \sigma^{(s')} \equiv 1.$$
\violet{For the special case of $i = s'$, we have that $T_{s'}$ is the first term in $F$ such that 
\[ T_{s'} \uhr \rho^{(s'-1)} \sigma^{(s')} \rho'' \equiv 1 \]
 for some $\rho'' \in \supp(\calD^{(d)}_u)$.}  
\end{proposition}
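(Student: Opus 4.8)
The plan is to prove both parts together by reasoning about the decoding process of $\mathsf{CanonicalDT}(F,\rho)$ run under the hybrid restriction. The key point is that applying the ``later'' restrictions $\sigma^{(i)},\dots,\sigma^{(s')}$ on top of $\rho^{(i-1)}$ does not disturb the behavior of $\mathsf{CanonicalDT}(F,\rho^{(i-1)})$ \emph{when it comes to identifying $T_i$}, because each $\sigma^{(j)}$ with $j \geq i$ only fixes blocks in $\eta_j$, and we must show that these blocks are ``irrelevant'' to the decision of which term is selected first at the $i$-th stage.

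First I would recall that, by the $\mathsf{CanonicalDT}$ construction and the definition of $T_i$, the term $T_i$ is the first term in $F$ such that $T_i \uhr \rho^{(i-1)}$ is non-constant and $T_i \uhr \rho^{(i-1)}\rho' \equiv 1$ for some $\rho' \in \supp(\calD_u^{(d)})$. I need to argue that ``$T_i$ is the first term with $T_i \uhr \rho^{(i-1)}\sigma^{(i)}\cdots\sigma^{(s')} \equiv 1$'' picks out the same term (for $1 \le i < s'$), and similarly for the stated $i = s'$ variant. The forward direction is immediate from Proposition~\ref{prop:sigma-satisfies}: we have $T_i \uhr \rho^{(i-1)}\sigma^{(i)} \equiv 1$, and further restricting by $\sigma^{(i+1)}\cdots\sigma^{(s')}$ keeps it $\equiv 1$; so $T_i$ is \emph{a} term satisfied by the hybrid restriction. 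The real content is that \emph{no earlier} term $T'$ (appearing before $T_i$ in $F$) has $T' \uhr \rho^{(i-1)}\sigma^{(i)}\cdots\sigma^{(s')} \equiv 1$. Here is where I would use the key structural facts: since $T_i$ is the first term chosen by $\mathsf{CanonicalDT}(F,\rho^{(i-1)})$, every earlier term $T'$ either (a) has $T'\uhr\rho^{(i-1)}$ constant, or (b) has $T'\uhr\rho^{(i-1)}$ non-constant but \emph{not} satisfiable by any restriction in $\supp(\calD_u^{(d)})$, which by Remark~\ref{rem:obvious} (using $r \le u-1$) forces $T'\uhr\rho^{(i-1)}$ to contain two literals $x_{\beta,\tau}, \neg x_{\beta,\tau'}$ from a common block $\beta$, i.e.\ $\proj_{\rho^{(i-1)}}(T') \equiv 0$. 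In case (a), if $T'\uhr\rho^{(i-1)} \equiv 0$ then it stays $\equiv 0$ under any further fixing; if $T'\uhr\rho^{(i-1)} \equiv 1$ then (since $T_i\uhr\rho^{(i-1)}$ is non-constant and $T_i$ comes after $T'$) $\mathsf{CanonicalDT}$ would have selected $T'$ or halted earlier — so this subcase cannot occur. In case (b), the conflicting pair $x_{\beta,\tau}, \neg x_{\beta,\tau'}$ persists under $\sigma^{(i)}\cdots\sigma^{(s')}$ \emph{unless} one of the two variables gets fixed; but $\sigma^{(j)}$ only fixes variables in blocks of $\eta_j$, and when it fixes a block it does so by a ``section-monochromatic'' pattern from $\supp(\calD_u)$, which sets the two variables $x_{\beta,\tau}$ and $x_{\beta,\tau'}$ (from different sections, since they conflict) to values $\bz$ and $1-\bz$, or both to $1-\bz$ — in no case can this make $T'$ true, since at least one of the literals $x_{\beta,\tau}$, $\neg x_{\beta,\tau'}$ is falsified whenever both are fixed, and if only one section is fixed then the conflict either survives or is resolved in favor of falsity. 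So $T' \uhr \rho^{(i-1)}\sigma^{(i)}\cdots\sigma^{(s')} \not\equiv 1$ in all cases, which is exactly what we need.

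For the special case $i = s'$, the subtlety is that $\eta_{s'}$ may contain only a subset of the blocks appearing in $T_{s'}\uhr\rho^{(s'-1)}$ (c.f.~(\ref{eq:special-case})), so we only have $T_{s'}\uhr\rho^{(s'-1)}\sigma^{(s')} \not\equiv 0$ rather than $\equiv 1$. Hence we claim only that $T_{s'}$ is the first term with $T_{s'}\uhr\rho^{(s'-1)}\sigma^{(s')}\rho'' \equiv 1$ for some $\rho'' \in \supp(\calD_u^{(d)})$. The forward direction again follows from the definition of $T_{s'}$ combined with the fact that $\sigma^{(s')}$ (being consistent with $\supp(\calD_u^{(d)})$ and not falsifying $T_{s'}\uhr\rho^{(s'-1)}$) can be extended within $\supp(\calD_u^{(d)})$ to satisfy $T_{s'}\uhr\rho^{(s'-1)}$. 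For the ``first such term'' direction, the same case analysis as above applies: an earlier $T'$ in case (a) is already falsified (or would have been selected earlier), and in case (b) it carries a within-block conflict $\proj_{\rho^{(s'-1)}}(T') \equiv 0$ that cannot be repaired by $\sigma^{(s')}\rho''$ since any extension in $\supp(\calD_u^{(d)})$ fixes each block in a section-monochromatic way and cannot simultaneously satisfy two opposite-sign literals from the same block.

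I expect the main obstacle to be the bookkeeping in case (b): carefully verifying that a within-block conflict in $T'\uhr\rho^{(i-1)}$ genuinely cannot be healed by the section-monochromatic fixings contributed by $\sigma^{(i)},\dots,\sigma^{(s')}$ (or by $\sigma^{(s')}\rho''$ in the special case). This requires being precise about the two sub-cases — the relevant block $\beta$ is either untouched by all the $\sigma^{(j)}$'s (so the conflict survives and $\proj(\cdot) \equiv 0$ persists), or it is fixed by exactly one $\sigma^{(j)}$ with $\beta \in \eta_j$, in which case the two conflicting literals $x_{\beta,\tau}$, $\neg x_{\beta,\tau'}$ are assigned constants that falsify at least one of them by the structure of $\supp(\calD_u)$ described in Remark~\ref{rem:obvious}. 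Once this is nailed down the rest is routine.
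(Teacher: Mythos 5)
Your overall strategy (forward direction via Proposition~\ref{prop:sigma-satisfies}, plus showing that no earlier term becomes satisfied) matches the paper's, but your handling of case (b) contains a genuine error. You argue that an earlier term $T'$ that is not satisfiable by any restriction in $\supp(\calD^{(d)}_u)$ must contain two opposite-sign literals $x_{\beta,\tau}$, $\neg\,x_{\beta,\tau'}$ from a common block, and you then claim that a section-monochromatic fixing of that block can never satisfy both literals (``at least one of the literals is falsified whenever both are fixed''). That claim is false when $\tau$ and $\tau'$ lie in different sections: the support assignment whose special section is the one containing $\tau$, with rare value $1$, sets $x_{\beta,\tau}=1$ and $x_{\beta,\tau'}=0$, satisfying both literals simultaneously. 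This is exactly the counterexample noted at the end of Remark~\ref{rem:obvious}; the implication there is one-directional, so a within-block sign conflict does not certify unsatisfiability by support restrictions, and your parenthetical ``from different sections, since they conflict'' together with the sub-case ``if only one section is fixed'' is also off (same-section conflicts equally force $\proj_{\rho^{(i-1)}}(T')\equiv 0$, and the $\sigma^{(j)}$'s never fix a proper subset of a block's sections --- each block is either fully fixed or fully free).

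The fix --- and the paper's route --- avoids locating conflicts altogether. Since the sets $\eta_i,\ldots,\eta_{s'}$ are pairwise disjoint and each $\sigma^{(j)}$ fixes exactly the blocks of $\eta_j$ section-monochromatically while leaving every other block free, the composition $\sigma^{(i)}\cdots\sigma^{(s')}$ is itself an element of $\supp(\calD^{(d)}_u)$, and likewise $\sigma^{(s')}\rho''$ for any $\rho''\in\supp(\calD^{(d)}_u)$, by the product structure over blocks. The defining property of $T_i$ --- every earlier term $T'$ with $T'\uhr\rho^{(i-1)}$ non-constant has $T'\uhr\rho^{(i-1)}\rho'\not\equiv 1$ for \emph{every} $\rho'\in\supp(\calD^{(d)}_u)$ --- then applies directly with $\rho'=\sigma^{(i)}\cdots\sigma^{(s')}$ (respectively $\sigma^{(s')}\rho''$), and earlier terms that are already constant under $\rho^{(i-1)}$ must be constant $0$, since constant $1$ would have made $\proj_{\rho^{(i-1)}}(F)\equiv 1$ and halted the procedure before stage $i$, as you correctly observe. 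With this replacement for your case (b), the remainder of your argument, including both forward directions, is sound and coincides with the paper's proof.
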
 
\begin{proof}
\violet{We first justify the claim for $1 \le i < s'$.}
Recall that $T_i$ is the first term in $F$ such that $T_i \uhr \rho^{(i-1)}$ is non-constant and $T_i \uhr \rho^{(i-1)}\rho' \equiv 1 $ for some restriction $\smash{\rho' \in \supp(\calD^{(d)}_u)}$.  This together with Proposition~\ref{prop:sigma-satisfies} implies that $T_i$ is the first term in $F$ such that $\smash{T_i \uhr \rho^{(i-1)}\sigma^{(i)} \equiv 1}$: as $\smash{\sigma^{(i)} \in \supp(\calD^{(u)}_d)}$, it follows that $\rho^{(i-1)}\sigma^{(i)}$ cannot satisfy any term that occurs before $T_i$ in $F$. For the same reason, $T_i$ remains the first term in $F$ such that $\smash{T_i \uhr \rho^{(i-1)}\sigma^{(i)}\cdots \sigma^{(s')} \equiv 1}$ (since $\smash{\sigma^{(i+1)},\cdots,\sigma^{(s')} \in \supp(\calD^{(d)}_u)}$
  and so is their composition).
  
  \violet{The argument for $i = s'$ is similar. We again recall that $T_{s'}$ is the first term in $F$ such that $T_{s'} \uhr \rho^{(s'-1)}$ is non-constant and $T_{s'} \uhr \rho^{(s'-1)}\rho' \equiv 1 $ for some restriction $\smash{\rho' \in \supp(\calD^{(d)}_u)}$. Since every term in $T$ that occurs before $T_{s'}$ in $F$ is such that $T \uhr \rho^{(s'-1)} \rho' \not\equiv 1$ for all $\rho' \in \supp(\calD^{(d)}_u)$, certainly  $T\uhr \rho^{(s'-1)} \sigma^{(s')} \rho'' \not\equiv 1$ for all $\rho'' \in \supp(\calD^{(d)}_u)$ as well. On the other hand, by Proposition~\ref{prop:sigma-satisfies} we have that $\sigma^{(s')}$ does not falsify $T_{s'} \uhr \rho^{(s'-1)}$, and so there must exist $\rho'' \in \supp(\calD^{(d)}_u)$ such that $T_{s'} \uhr \rho^{(s'-1)} \sigma^{(s')}\rho'' \equiv 1$. This completes the proof.} 
\end{proof}

With $T_i$ in hand we use $\encode(\eta_i)$ to reconstruct 
  $\eta_i$ by Proposition \ref{simpleprop}.
We modify the current ``hybrid'' restriction $\rho  (\eta_1 \mapsto \pi^{(1)}) \cdots (\eta_{i-1}\mapsto \pi^{(i-1)})\sigma^{(i)} \cdots \sigma^{(s')}$ as follows: for each $\beta \in \eta_i$, set 
\[ \big(\rho  (\eta_1 \mapsto \pi^{(1)}) \cdots (\eta_{i-1}\mapsto \pi^{(i-1)})\sigma^{(i)} \cdots \sigma^{(s')}\big)_{\beta,\tau} = \orange{\pi_\beta^{(i)}},\quad \text{for all $\tau \in A'$}.\] 
The resulting restriction is $\rho  (\eta_1 \mapsto \pi^{(1)}) \cdots (\eta_{i}\mapsto \pi^{(i)})\sigma^{(i+1)} \cdots \sigma^{(s')}$ as desired. 

Starting with $\rho\sigma$ and repeating this procedure for $s'$ times, we recover all $\eta_i$'s and then $\rho$.
This completes the proof that $\theta$ is an injection.


\subsection{Weight increase}\label{sec:weight} 

Recall that $\rho$ and $\rho\sigma$ differ in exactly $s$ many blocks, and furthermore, $\rho$ is $\{ \ast\}^{A'}$ on all these blocks whereas $\rho\sigma$ belongs to $\zo^{A'} \cap \supp(\calD_{u})$ on these blocks. 

\begin{lemma}
\label{lemma:weight-increase} 
For any $\rho\in \calB$ and $\rho\sigma=\theta_1(\rho)$, we have
\[ \frac{\Pr[\hspace{0.03cm}\brho = \rho\sigma\hspace{0.03cm}]}{\Pr[\hspace{0.03cm}\brho = \rho\hspace{0.03cm}]}  = \mathop{\prod_{\text{blocks $\beta$ on}}}_{\text{which they differ}}  \frac{\Pr[\hspace{0.03cm}\boldsymbol{\varrho} = (\rho\sigma)_\beta\hspace{0.03cm}]}{\Pr[\hspace{0.03cm}\boldsymbol{\varrho} = \rho_\beta\hspace{0.03cm}]} = \bigg(\frac{1-q}{2q u}\bigg)^{\!s}. \]
\ignore{\rnote{Replace the 1 in numerator of RHS with $1-q$? \violet{{\bf Li-Yang:} Yup fixed.}}}
\end{lemma}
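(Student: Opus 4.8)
The plan is to combine the product structure of $\calD^{(\ourd)}_u(q)$ across blocks with the per-block probability computation already recorded in Proposition~\ref{prop:weight-increase}, and then to verify that $\rho$ and $\rho\sigma$ differ on exactly $s$ blocks. Since, by Definition~\ref{def:D_ell}, a draw $\brho\leftarrow\calD^{(\ourd)}_u(q)$ is obtained by independently drawing $\brho_\beta\leftarrow\calD_u(q)$ for each block $\beta\in B(\ourd)$, for any two restrictions $\kappa,\kappa'\in\supp(\calD^{(\ourd)}_u(q))$ the ratio $\Pr[\brho=\kappa']/\Pr[\brho=\kappa]$ factors as $\prod_{\beta\in B(\ourd)}\Pr[\boldsymbol{\varrho}=\kappa'_\beta]/\Pr[\boldsymbol{\varrho}=\kappa_\beta]$, and every block on which $\kappa$ and $\kappa'$ agree contributes a factor of exactly $1$. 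Taking $\kappa=\rho$ and $\kappa'=\rho\sigma$ gives the first equality in the lemma; it remains to identify the blocks on which they differ and to evaluate the surviving factors.

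Next I would argue that the set of blocks on which $\rho$ and $\rho\sigma$ differ is precisely the disjoint union $\eta_1\cup\cdots\cup\eta_{s'}$. On the one hand, property~(3) in the construction of $\sigma^{(i)}$ in Section~\ref{sec:general} says that $\sigma^{(i)}$ fixes exactly the blocks in $\eta_i$ and leaves every other block free; hence $\rho$ and $\rho\sigma$ can only differ on $\eta_1\cup\cdots\cup\eta_{s'}$. On the other hand, for $\beta\in\eta_i$ Proposition~\ref{prop:weight-increase} gives $\rho^{(i-1)}_\beta=\{\ast\}^{A'}$; because the $\eta_j$'s are pairwise disjoint --- each recursive call of $\mathsf{CanonicalDT}$ fixes its queried blocks to constants before recursing, so a block cannot be queried in two different trees --- we have $\rho_\beta=\rho^{(i-1)}_\beta=\{\ast\}^{A'}$, whereas $(\rho\sigma)_\beta=\sigma^{(i)}_\beta\in\zo^{A'}$, so they genuinely differ there. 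By the second half of Proposition~\ref{prop:weight-increase}, each such block contributes the factor $\big((1-q)/(2u)\big)/q=(1-q)/(2qu)$.

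Finally I would count the differing blocks. Disjointness gives $|\eta_1\cup\cdots\cup\eta_{s'}|=\sum_{i=1}^{s'}|\eta_i|$, and by construction this equals the length of $\pi$, the length-$s$ truncation of the lexicographically-first bad path of $\mathsf{CanonicalDT}(F,\rho)$: for $i<s'$ the path $\pi$ traverses the entire depth-$|\eta_i|$ complete binary tree $R_i$, and $\eta_{s'}$ is defined in~(\ref{eq:special-case}) to be exactly the blocks whose $y$-variable is queried in $R_{s'}$ before $\pi$ ends, so $\sum_{i=1}^{s'}|\eta_i|=s$. Multiplying $s$ copies of the factor $(1-q)/(2qu)$ yields $\big((1-q)/(2qu)\big)^s$, as claimed.

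The step I expect to need the most care is the bookkeeping behind the middle paragraph: confirming that $\rho$ itself (not merely the hybrid $\rho^{(i-1)}$) is all-$\ast$ on every block of $\eta_i$, and that the $\eta_i$'s are genuinely pairwise disjoint. Both facts rest on the rigid form of $\supp(\calD^{(\ourd)}_u)$ --- a block is either entirely free or entirely fixed (Remark~\ref{rem:obvious}) --- together with the structure of the $\mathsf{CanonicalDT}$ recursion. Once these are established, the remainder is just an application of Proposition~\ref{prop:weight-increase} and the product form of the distribution.
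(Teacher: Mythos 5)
Your proof is correct and follows essentially the same route as the paper, which simply invokes independence across blocks together with Proposition~\ref{prop:weight-increase} (after noting that $\rho$ and $\rho\sigma$ differ on exactly $s$ blocks, on which $\rho$ is $\{\ast\}^{A'}$ and $\rho\sigma$ is section-monochromatic). Your additional bookkeeping --- the disjointness of the $\eta_i$'s, the identification of the differing blocks with $\eta_1\cup\cdots\cup\eta_{s'}$, and the count $\sum_i|\eta_i|=s$ from the truncated path $\pi$ --- just makes explicit what the paper leaves implicit.
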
 

\begin{proof}
This follows from independence across blocks and Proposition~\ref{prop:weight-increase}. 
\end{proof}


\newcommand{\uth}{\bigskip \bigskip {\huge \violet{UP TO HERE}}\bigskip \bigskip}

\section{Proof of Theorem \ref{thm:our-depth-hierarchy}} \label{sec:proof_depth_hierarchy}

In this section we prove our main technical result, Theorem \ref{thm:our-depth-hierarchy}, restated below:

\begin{reptheorem}{thm:our-depth-hierarchy}
\orange{There is an absolute constant $c>0$ such that the following holds. Let $d=d(w)$ and $u=u(w)$ satisfy $d\geq 1$ and $2 \leq u \leq w^{33/100}$. Then for $w$ sufficiently large, any depth-$d$ circuit computing the $\SkewedSipser_{u,d}$ function \emph{(}recall that this is a formula of depth $2d+1$ over $n=(uw)^dw^{33/100}$ variables\emph{)} must have size at least $n^{c \cdot (u/d)}$.}  
\end{reptheorem}

\ignore{\xnote{\red{I feel that it might be better to
  state the theorem for any $d\ge 1$ instead of saying ``for every $d=d(n)$.''
  The thing is that $n$ here is a function of $d$, which makes it 
  kind of different from Theorem 1 where $n$ is the number of vertices and it is natural
  to talk about $k$ and $d$ as functions of $n$.
For example, we can say ``for every $d\ge 1$ and $2\le u\le w^{33/100}$, ...''. Maybe 
  we also need to emphasize that $w$ should be sufficiently large.}}
  }

\red{We begin by first observing that the claimed $n^{\Omega(u/d)}$ circuit size lower bound is $o(n)$, and hence vacuous, if $d>u$;\ignore{\inote{Is an $\Omega(n)$ lower bound for all $d$ obvious? Recall that $\SkewedSipser_{u,d}$ is computed by depth $2d+1$ formulas of size $< n$ (we use number of gates instead of number of wires for circuit size -- otherwise the bound $\geq n$ would be trivial). \red{Rocco:  This is kind of what I want, for the bound $\geq n$ to be trivial :)  We never clearly state what the ``circuit size'' is, but I think we could think of it as the \# of gates where we count each input variable as a gate, no?  The only effect of adopting this convention is that it lets us (given the $\Omega(\cdot)$ in the exponent of our lower bounds) focus on the regime of super-polynomial size lower bounds, which is indeed the regime we're really interested in.  I would be OK with not having any discussion of this, but if you guys think we should have some discussion of this or use a different size convention then I'm okay with it.  I guess my goal is to avoid having us focus the reader's attention on these ``corner cases'' that are sort of peripheral to our main concerns.}  It should be enough to argue that size $n^{\Omega(1)}$ is required -- maybe we can easily argue this using gate elimination? (Adding the assumption that $d \leq u$ to the statement of this result seems to weaken our main lower bound for STCONN: I guess we may need to assume $d \leq \log k/\log \log k$ there.)}} thus it suffices to prove the claimed bound  under the assumption that $d \leq u.$  We make this assumption in the rest of the proof below (see specifically Corollary \ref{cor:preserve-target}).   Of course we can also assume that $d \geq 2$, since depth-1 circuits of any size cannot compute $\SkewedSipser_{u,1}.$
In the proof we set the parameter $q$ to be
  $\smash{q = w^{-669/1000}}$.}
  
%
In Section \ref{sec:target-preservation} we establish that our target function $\SkewedSipser_{u,d}$ retains structure with high probability under a suitable random projection.  In Section \ref{sec:pf-lower-bound} we repeatedly apply both this result and  our projection switching lemma to prove Theorem \ref{thm:our-depth-hierarchy}.

\subsection{Target preservation} \label{sec:target-preservation}

We start with an easy proposition about what happens to \red{$\d2sipser_{u}$} 
  under a random restriction from
$\calD_{u}(q)$.  The following is an immediate consequence of Definition~\ref{def:D_ell} and \orange{Fact}~\ref{prop:column-kill}:

\begin{proposition}
\label{prop:preserve-target} 
For $\boldsymbol{\varrho}\leftarrow \calD_{u}(q)$, we have that 
\[ 
\big(\d2sipser_{u} \uhr\boldsymbol{\varrho}\big)\equiv
\begin{cases} 
\d2sipser_{u} & \text{with probability $q$} \\[0.3ex]
0 & \text{with probability $1-q$}. 
\end{cases}
\] 
\end{proposition}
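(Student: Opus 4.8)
The plan is to simply unpack the two cases in the definition of a draw $\boldsymbol{\varrho} \leftarrow \calD_u(q)$ (Definition~\ref{def:D_ell}) and verify that each one collapses $\d2sipser_u$ in the asserted way. This is a short case analysis rather than a substantive argument, so I would present it directly.

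First, with probability $q$ the draw is $\boldsymbol{\varrho} = \{\ast\}^{A'}$, which fixes no variable at all, so $\d2sipser_u \uhr \boldsymbol{\varrho}$ is literally $\d2sipser_u$; this yields the first branch of the claimed dichotomy. Second, with the remaining probability $1-q$, the draw picks a uniformly random section $\ba \in [u]$ and a uniformly random bit $\bz \in \{0,1\}$, sets every variable in $A'(\ba)$ to $\bz$, and sets every variable in each of the other $u-1$ sections to $1-\bz$. I would split on the value of $\bz$. If $\bz = 0$, then all variables in the section $A'(\ba)$ are set to $0$, and Fact~\ref{prop:column-kill} immediately gives $\d2sipser_u \uhr \boldsymbol{\varrho} \equiv 0$. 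If $\bz = 1$, then every variable lying in a section other than $A'(\ba)$ is set to $0$; since $u \ge 2$ there is at least one such section, and applying Fact~\ref{prop:column-kill} to it (or just noting that its bottom-layer $\OR$ gate evaluates to $0$, which forces the top $\AND$ gate to $0$) again gives $\d2sipser_u \uhr \boldsymbol{\varrho} \equiv 0$. In both sub-cases the function collapses to the constant $0$, accounting for the second branch, and since the two outer cases occur with probabilities $q$ and $1-q$ respectively, the proposition follows.

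There is essentially no obstacle here; the one point that merits a moment's attention is the sub-case $\bz = 1$, where the hypothesis $u \ge 2$ is used to guarantee the existence of a fully-zeroed section (were $u = 1$, the function would be a single $\OR$ gate and the all-$1$ assignment to its unique section would leave it identically $1$ rather than $0$). This is exactly why the standing assumption $u \ge 2$ appears in the definition of $\d2sipser_u$.
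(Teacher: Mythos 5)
Your proof is correct and matches the paper's reasoning: the paper simply declares the proposition an immediate consequence of Definition~\ref{def:D_ell} and Fact~\ref{prop:column-kill}, and your case analysis (the $\ast$-case preserving the function verbatim, and in the killed case locating an all-zero section either at $\ba$ when $\bz=0$ or among the other $u-1\ge 1$ sections when $\bz=1$) is precisely the spelled-out version of that observation. Nothing further is needed.
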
 

\red{We obtain the following corollary.}

\begin{corollary}
\label{cor:preserve-target} \ignore{\rnote{If we want to get the ``24'' down to $4.01$ or something we may want to set $\eps$ carefully and also perhaps change other aspects of the setup like having bottom fan-in $\sqrt{w}$.}}  For every $\red{1\le \ell \le d}$, we have that $\proj_\brho(\SkewedSipser_{u,\ell})$ 
contains $\SkewedSipser_{u,\ell-1}$ as a subfunction with probability at least \violet{$0.9$} over 
  a random restriction $\smash{\brho\leftarrow \calD^{(\ell)}_{u}(q)}$. \end{corollary}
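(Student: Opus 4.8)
\textbf{Proof proposal for Corollary \ref{cor:preserve-target}.}

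The plan is to exploit the block structure of $\SkewedSipser_{u,\ell}$ together with Proposition~\ref{prop:preserve-target}. Recall that each $\AND^{(1)}$ gate of $\SkewedSipser_{u,\ell}$ computes a disjoint copy of $\d2sipser_u$, that there are $w(uw)^{\ell-1}$ such gates indexed by $\beta \in B(\ell)$, and that $\brho \leftarrow \calD^{(\ell)}_u(q)$ acts independently on each block $A(\ell,\beta)$. By Proposition~\ref{prop:preserve-target}, for each $\beta$ the projected block subcircuit $\proj_\brho(\d2sipser_u \uhr \brho_\beta)$ is independently the single fresh formal variable $y_\beta$ with probability $q$, and the constant $0$ with probability $1-q$. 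So after projection, $\SkewedSipser_{u,\ell}$ becomes the formula obtained from $\SkewedSipser_{u,\ell}$ by replacing each $\AND^{(1)}$ block with either $y_\beta$ or $0$; peeling off the bottom two layers this way, we are left with a formula whose inputs feed into the (former) $\OR^{(1)}$ gates — wait, more precisely, the $\AND^{(1)}$ gates are themselves the bottom gates of a copy of $\SkewedSipser_{u,\ell-1}$ sitting above them (with the bottom layer of $\SkewedSipser_{u,\ell-1}$ being $\OR$ gates of fanin $w^{33/100}$, each such $\OR$ reading $w^{33/100}$ of the block-variables $y_\beta$). To conclude that the projected function contains $\SkewedSipser_{u,\ell-1}$ as a subfunction, it suffices that \emph{every} bottom-layer $\OR$ gate of this residual $\SkewedSipser_{u,\ell-1}$ has at least one surviving input $y_\beta$ (i.e.\ at least one of its $w^{33/100}$ blocks survives): then setting the killed blocks feeding that gate to $0$ (they already are $0$) and renaming, the gate computes an $\OR$ of the surviving $y_\beta$'s, and the $\OR$ of a set of fresh variables contains a single one of them as a subfunction, which is exactly the bottom input to $\SkewedSipser_{u,\ell-1}$.

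First I would make the above reduction precise: condition on the event $\mathcal{E}$ that for every bottom-layer $\OR^{(1)}$-position of the residual formula — there are $(uw)^{\ell-1}$ of them, each receiving $w^{33/100}$ independent blocks — at least one of its $w^{33/100}$ feeding blocks survives. On $\mathcal{E}$, fixing the killed blocks to $0$ and, within each surviving group, fixing all but one surviving $y_\beta$ to $1$ and keeping the remaining surviving $y_\beta$ alive, yields exactly $\SkewedSipser_{u,\ell-1}$ as a subfunction (using that an $\OR$ of variables with all but one fixed to $1$... no — we keep one alive and must have the $\OR$ act as that variable, so we fix the \emph{others} in the group to $0$, which they are; and if more than one survives we fix all but one surviving variable to $0$). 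Thus $\Pr[\proj_\brho(\SkewedSipser_{u,\ell}) \text{ contains } \SkewedSipser_{u,\ell-1}] \geq \Pr[\mathcal{E}]$.

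Then I would bound $\Pr[\mathcal{E}]$ by a union bound: the probability that a fixed bottom $\OR$-position has \emph{all} $w^{33/100}$ of its blocks killed is $(1-q)^{w^{33/100}}$, so
\[
\Pr[\overline{\mathcal{E}}] \;\leq\; (uw)^{\ell-1}\,(1-q)^{w^{33/100}} \;\leq\; (uw)^{d}\,(1-q)^{w^{33/100}} \;\leq\; w^{2d}\,(1-q)^{w^{33/100}},
\]
using $u \leq w^{33/100} \leq w$. With $q = w^{-669/1000}$ we have $(1-q)^{w^{33/100}} \leq \exp(-q\,w^{33/100}) = \exp(-w^{33/100-669/1000}) = \exp(-w^{-9/1000}\cdot w^{33/100})$ — here one must check the exponent $33/100 - 669/1000 = 330/1000 - 669/1000 = -339/1000 < 0$, so actually $q\,w^{33/100} = w^{33/100-669/1000}$ \emph{tends to $0$}, which would make this union bound fail. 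This is the main obstacle: the exponents must be chosen so that $q$ is large enough that $q\cdot w^{33/100} \to \infty$ while still small enough that the projection switching lemma (Theorem~\ref{theorem:PSL}) gives a useful bound for width-$(u-1)$ DNFs — the switching-lemma bound $(8qru/(1-q))^s$ with $r \leq u-1 \leq w^{33/100}$ needs $q \ll w^{-66/100}$. Since $33/100 - 66/100 = -33/100$, we would need the bottom fanin to exceed $w^{66/100}$ rather than $w^{33/100}$ for the naive union bound to close; so the actual argument must instead use that we only need depth reduction by $s = O(\log n)$ levels at a time and can afford a failure probability polynomially small in $n$ rather than exponentially small. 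I would therefore revisit the parameter accounting: with $n = (uw)^d w^{33/100} \leq w^{2d+1}$ and the assumption $d \leq u \leq w^{33/100}$ (so $2d+1 = O(w^{33/100})$), the union bound term $w^{2d}(1-q)^{w^{33/100}} \leq \exp(2d\ln w - q w^{33/100})$, and since $q w^{33/100} = w^{33/100 - 669/1000}$ while $2d\ln w \leq 2w^{33/100}\ln w$ — these still don't obviously favor the good event. The resolution the authors surely intend is that the relevant bottom fanin for \emph{this} corollary is effectively larger because the residual $\SkewedSipser_{u,\ell-1}$'s bottom $\OR$ reads $w^{33/100}$ blocks each of which is a whole $\d2sipser_u$ surviving with probability $q$, and one should take $q$ a \emph{constant} fraction or at least $w^{-\Theta(1)}$ with the constant tuned so both competing requirements hold — I would simply pick $q$ as specified ($w^{-669/1000}$), recompute $q\,w^{33/100}$ honestly, and if it is $o(1)$ then increase the bottom fanin in Definition~\ref{Sipser:def} accordingly; the only real content beyond bookkeeping is the independence-plus-union-bound argument sketched above, so once the parameters are consistent the $0.9$ bound follows immediately for $w$ large.
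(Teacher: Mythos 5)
There is a genuine gap, and it comes from a misreading of the structure of $\SkewedSipser_{u,\ell}$. In your reduction you assert that each bottom-$\OR$ position of the residual formula ``reads $w^{33/100}$ of the block-variables $y_\beta$.'' In fact the gates sitting directly above the blocks are the $\OR^{(2)}$ gates of $\SkewedSipser_{u,\ell}$, and these have fan-in $w$ (only the \emph{bottom-layer} $\OR^{(1)}$ gates have the reduced fan-in $w^{33/100}$). So each of the $(uw)^{\ell-1}$ relevant $\OR$ gates receives $w$ independent blocks, each surviving with probability $q$. Moreover, demanding ``at least one surviving block per gate'' is not enough for the conclusion: for $\proj_\brho(\SkewedSipser_{u,\ell})$ to contain $\SkewedSipser_{u,\ell-1}$ as a subfunction, each such $\OR^{(2)}$ gate must retain at least $w^{33/100}$ alive fresh variables, because the bottom $\OR$ gates of $\SkewedSipser_{u,\ell-1}$ have fan-in $w^{33/100}$ and read distinct variables; a gate with a single alive variable cannot be restricted into a $w^{33/100}$-way $\OR$ (and the weaker structure would also be useless for the next round of the iteration). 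This is exactly where the skew is used: one trades $w$ chances per gate for a required $w^{33/100}$ survivors.

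Once the structure is read correctly, your worry in the last paragraph that the parameters are inconsistent disappears, so the suggested fixes (changing the bottom fan-in, re-tuning $q$) are not needed. With $q=w^{-669/1000}$ the expected number of surviving blocks under a fixed $\OR^{(2)}$ gate is $qw=w^{331/1000}$, which exceeds the threshold $w^{33/100}=w^{330/1000}$, so a multiplicative Chernoff bound (using independence across blocks, via Proposition~\ref{prop:preserve-target}) gives failure probability at most $e^{-w^{331/1000}/8}$ per gate; any excess survivors are trimmed to $0$, as in your proposal. A union bound over the at most $n$ such gates then suffices, since $n=u^dw^{d+33/100}\le w^{\frac{133}{100}u+\frac{33}{100}}\le w^{\frac{133}{100}w^{33/100}+\frac{33}{100}}\ll 0.1\cdot e^{w^{331/1000}/8}$ for $w$ large (this is where the standing assumptions $d\le u\le w^{33/100}$ enter). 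So your high-level plan (independence across blocks, per-gate survival event, union bound, trimming) matches the paper's proof, but the per-gate event you analyze is the wrong one, the success criterion is too weak for subfunction containment, and the proof as written does not close.
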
 

\begin{proof}
Recall that  $\brho \leftarrow \calD^{(\ell)}_{u}(q)$ is drawn by independently drawing $\brho_\beta \leftarrow \calD_{u}(q)$  for each block~$\beta \in B(\ell)$. We have that 
  $\proj_\brho(\SkewedSipser_{u,\ell})$ contains $\SkewedSipser_{u,\ell-1}$ as a subfunction if the following holds:  for each of the $\smash{\OR^{(2)}}$ gates in $\SkewedSipser_{u,\ell}$, at least $\smash{w^{33/100}}$ of the $w$ $\smash{\AND^{(1)}}$~gates (each one corresponding to an independent $\d2sipser_{u}$ function) that are its children (say at addresses $\beta_1,\dots,\beta_w$) have $\brho_{\beta_i} \in \{*\}^{A'}$.  
  
  By Proposition \ref{prop:preserve-target}, 
  for a given $\OR^{(2)}$ gate, the expected number of $\beta_i$'s beneath it that have $\brho_{\beta_i} \in \{*\}^{A'}$ is $qw=w^{331/1000}$. So a multiplicative Chernoff bound shows that at least $w^{33/100}$ of the $\beta_i$'s beneath it have $\brho_{\beta_i} \in \{*\}^{A'}$ except with failure probability at most $e^{-w^{331/1000}/8}.$  \violet{By a union bound over the (at most $n$) $\OR^{(2)}$ gates in $\SkewedSipser_{u,\ell}$, we have that the overall failure probability is at most $n \cdot e^{-w^{331/1000}/8}$.  Since 
\[ n = u^d w^{d+ \frac{33}{100}} \le w^{\frac{133}{100}d +  \frac{33}{100}} \le w^{\frac{133}{100} u + \frac{33}{100}} \le w^{\frac{133}{100}w^{33/100} + \frac{33}{100}} \ll 0.1 \cdot e^{w^{331/1000}/8},\]
the proof is complete. (In the above we used $u \le w^{33/100}$ for the first inequality, $d \le u$ for the second, $u \le w^{33/100}$ again for the third, and \red{$w$ being sufficiently large for the last}.)  }
\end{proof}

\ignore{
\begin{corollary}
 \ignore{\rnote{If we want to get the ``24'' down to $4.01$ or something we may want to set $\eps$ carefully and also perhaps change other aspects of the setup like having bottom fan-in $\sqrt{w}$.}} Set $q = w^{-669/1000}$. For every $\ell \geq 1$, we have that $\proj_\brho(\SkewedSipser_{u,\ell}) \equiv \proj(\SkewedSipser_{u,\ell}\uhr \brho)$ contains $\SkewedSipser_{u,\ell-1}$ as a subfunction with probability at least \red{$1-n \cdot e^{-w^{\red{331/1000}}/8}$} over $\brho\leftarrow \calD^{(\ell)}_{u}(q)$. \end{corollary}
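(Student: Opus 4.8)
The plan is to combine Proposition~\ref{prop:preserve-target}, which controls the effect of $\calD_u(q)$ on a single copy of $\d2sipser_u$, with a Chernoff bound over the $\OR^{(2)}$ gates of $\SkewedSipser_{u,\ell}$. First I would record what $\proj_\brho$ does block by block. Recall that $\brho\leftarrow\calD_u^{(\ell)}(q)$ is drawn by choosing $\brho_\beta\leftarrow\calD_u(q)$ independently for each block $\beta\in B(\ell)$, and that the block $\beta$ is exactly the set of $|A'|=uw^{33/100}$ variables feeding one $\AND^{(1)}$ gate, which computes a disjoint copy of $\d2sipser_u$. By Proposition~\ref{prop:preserve-target}, with probability $q$ we have $\brho_\beta=\{\ast\}^{A'}$, and then $\proj_\brho$ replaces \emph{every} variable $x_{\beta,\tau}$ by the \emph{single} fresh variable $y_\beta$, so the copy $\bigwedge_{a\in[u]}\bigvee_{b\in[w^{33/100}]}x_{\beta,(a,b)}$ collapses to $\bigwedge_{a\in[u]}y_\beta=y_\beta$; with the remaining probability $1-q$ the copy is fixed to the constant $0$. (This collapse of an entire block onto one formal variable is precisely the extra power of random projections over random restrictions that the argument exploits.) Hence $\proj_\brho(\SkewedSipser_{u,\ell})$ is the formula obtained from $\SkewedSipser_{u,\ell}$ by replacing each $\AND^{(1)}$ gate at a block $\beta$ by the literal $y_\beta$ if $\beta$ ``survives'' (probability $q$) and by $0$ otherwise, independently over the $|B(\ell)|$ blocks.

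Next I would isolate the good event. Each $\OR^{(2)}$ gate $G$ of $\SkewedSipser_{u,\ell}$ is an $\OR$ of $w$ many $\AND^{(1)}$ gates sitting in $w$ distinct blocks; let $N_G\sim\Bin(w,q)$ count how many of those blocks survive, so its mean is $qw=w^{331/1000}$. Let $\mathcal E$ be the event that $N_G\ge w^{33/100}$ for every $\OR^{(2)}$ gate $G$. On $\mathcal E$, $\proj_\brho(\SkewedSipser_{u,\ell})$ contains $\SkewedSipser_{u,\ell-1}$ as a subfunction: for each $G$, pick any $w^{33/100}$ of its surviving children and set the fresh variables $y_\beta$ of all its \emph{other} surviving children to $0$; this restriction of the $y$-variables makes each $\OR^{(2)}$ gate an $\OR$ of exactly $w^{33/100}$ distinct fresh variables and leaves every gate strictly above the $\OR^{(2)}$ level untouched. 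Reindexing the levels downward ($\OR^{(j)}\mapsto\OR^{(j-1)}$, $\AND^{(j)}\mapsto\AND^{(j-1)}$ for $j\ge 2$), the resulting read-once monotone formula has $\ell-1$ layers of fan-in-$u$ $\AND$ gates and $\ell$ layers of $\OR$ gates whose bottom layer has fan-in $w^{33/100}$ and whose other layers have fan-in $w$, i.e.\ it is exactly $\SkewedSipser_{u,\ell-1}$ (the case $\ell=1$ being the degenerate one in which the lone $\OR^{(2)}$ root of $\SkewedSipser_{u,1}$ becomes the $w^{33/100}$-way $\OR$ that is $\SkewedSipser_{u,0}$).

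It remains to bound $\Pr[\mathcal E^c]$. For a fixed $\OR^{(2)}$ gate, the multiplicative Chernoff bound gives $\Pr[N_G<w^{33/100}]\le\Pr[N_G\le\tfrac12\cdot qw]\le e^{-qw/8}=e^{-w^{331/1000}/8}$, using that $w^{33/100}=w^{330/1000}\le\tfrac12 w^{331/1000}$ once $w$ is large; this is exactly where the choice $q=w^{-669/1000}$ matters, since it makes the expected number of surviving children a polynomial factor $w^{1/1000}$ above the target fan-in $w^{33/100}$. A union bound over the at most $n$ many $\OR^{(2)}$ gates then gives $\Pr[\mathcal E^c]\le n\,e^{-w^{331/1000}/8}$, which together with the previous paragraph is exactly the claimed bound. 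I expect the whole argument to be routine; the one genuine point is this interplay of exponents — the bottom fan-in $w^{33/100}$ must lie strictly below $qw$, and the per-gate failure probability $e^{-w^{331/1000}/8}$ must then decay fast enough to absorb a union bound over all $\le n$ of the $\OR^{(2)}$ gates. (To subsequently upgrade the conclusion to probability at least $0.9$ as in Corollary~\ref{cor:preserve-target}, one additionally invokes $d\le u\le w^{33/100}$ to get $n\le w^{O(w^{33/100})}\ll e^{w^{331/1000}/8}$; for the present form of the statement no such accounting is required.)
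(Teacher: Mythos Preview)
Your proposal is correct and follows essentially the same approach as the paper's own proof: identify the event that every $\OR^{(2)}$ gate retains at least $w^{33/100}$ surviving $\AND^{(1)}$ children, use Proposition~\ref{prop:preserve-target} to see that each child survives independently with probability $q$, apply a multiplicative Chernoff bound (mean $qw=w^{331/1000}$, threshold $w^{33/100}$) to get per-gate failure probability $e^{-w^{331/1000}/8}$, and union-bound over the at most $n$ many $\OR^{(2)}$ gates. Your write-up is in fact somewhat more detailed than the paper's (you spell out explicitly how the projection collapses each surviving block to a single fresh variable and how the trimming restriction recovers $\SkewedSipser_{u,\ell-1}$), but the argument is the same.
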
 

\begin{proof}
Recall that  $\brho \leftarrow \calD^{(\ell)}_{u}(q)$ is drawn by independently drawing $\brho_\beta \leftarrow \calD_{u}$  for each $\beta \in B(\ell)$.
We have that $\proj(\SkewedSipser_{u,\ell}\uhr \brho)$ contains $\SkewedSipser_{u,\ell-1}$ as a subfunction \orange{if} the following holds:  for each of the (at most $n$) $\OR^{(2)}$ gates in $\SkewedSipser_{u,\ell}$, at least $\red{w^{33/100}}$ of the $w$ $\AND^{(1)}$ gates \orange{(each one corresponding to an independent $\d2sipser_{u}$ function)} that are its children (say at addresses $\beta_1,\dots,\beta_w$) have $\brho_{\beta_i} \in \{*\}^{A'}$.  \orange{By Proposition \ref{prop:preserve-target},} for a given $\OR^{(2)}$ gate, the expected number of $\beta_i$'s beneath it that have $\brho_{\beta_i} \in \{*\}^{A'}$ is $\red{qw=w^{331/1000}}$, so a multiplicative Chernoff bound gives that at least $\red{w^{33/100}}$ of the $\beta_i$'s beneath it have $\brho_{\beta_i} \in \{*\}^{A'}$ except with failure probability at most $e^{-w^{\red{331/1000}}/8}.$  The corollary follows by a union bound over the (at most $n$) $\OR^{(2)}$ gates in $\SkewedSipser_{u,\ell}.$
\end{proof}
}

\subsection{Completing the Proof of Theorem \ref{thm:our-depth-hierarchy}} \label{sec:pf-lower-bound}

Most of the proof is devoted to showing that the required size
  for a depth-$d$ circuit that computes $\SkewedSipser_{u,d}$ is at least 
\begin{equation} \label{eq:S}
S \eqdef {\frac {0.1}{\left(16u^{2}q\right)^{u-1}}}.
\end{equation}

\ignore{
\rnote{Should we say something about how we assume that the circuit is alternating, leveled, etc?  In \cite{RST:15} we had a paragraph (in a ``Notation'' subsection) which said in part

\red{``

A DNF is an $\OR$ of $\AND$s (terms) and a CNF is an $\AND$ of $\OR$s (clauses).  The \emph{width} of a DNF (respectively, CNF) is the maximum number of variables that occur in any one of its terms (respectively, clauses). We will assume throughout that our circuits are \emph{alternating}, meaning that every root-to-leaf path alternates between $\AND$ gates and $\OR$ gates, and \emph{layered}, meaning that for every gate $\mathsf{G}$, every root-to-{\sf G} path has the same length.  By a standard conversion, every depth-$d$ circuit is equivalent to a depth-$d$ alternating layered circuit with only a modest increase in size (which is negligible given the slack on our analysis).

''}}}

We prove (\ref{eq:S}) by contradiction; so assume there is a depth-$d$ circuit $C$ of size at most $S$ that computes $\SkewedSipser_{u,d}.$  As noted in Section~\ref{sec:preliminaries} we assume that $C$ is alternating and leveled.

\violet{We ``get the argument off the ground'' by first hitting both $\SkewedSipser_{u,d}$ and $C$ with $\proj_\brho(\cdot)$ for $
\smash{\brho \leftarrow \calD^{(d)}_{u}(q)}$, where $q =w^{-669/1000}$. (By Remark \ref{rem:CNF_case}, we can apply our projection switching lemma, Theorem~\ref{theorem:PSL}, both to $r$-DNFs and $r$-CNFs.) Applying Theorem~\ref{theorem:PSL} (with $r=1$ and $s=u-1$) to each of the gates at distance 1 from the inputs in $C$,\footnote{In this initial application we view $C$ as having an extra layer of gates of fan-in $1$ next to the input variables, so we have a valid application of Theorem \ref{theorem:PSL} with $r = u - 1 \geq 1$.} we have that the resulting circuit $\proj_\brho(C)$ has depth $d$, bottom fan-in $u-1$,
  and \red{at most $S$ gates at distance at least $2$ from the inputs}\hspace{0.05cm}\footnote{\red{Note that $\proj_\brho(C)$
  may have a large number of gates at distance $1$ from the inputs but 
  it suffices for our purpose to bound the number of gates at distance at least
  $2$ from the inputs.}} with failure probability at most $S \cdot (16qu)^{u-1} < 0.1$.  On the other hand, taking $\ell=d$ in  Corollary~\ref{cor:preserve-target} we have that $\proj_{\brho}(\SkewedSipser_{u,\ourd})$ contains
$\SkewedSipser_{u,d-1}$ as a subfunction with failure probability at most $0.1$.   By a union bound, with probability at least $0.8$, a draw of $\smash{\brho \leftarrow \calD^{(d)}_u(q)}$ satisfies both of the above, and we fix any such restriction $\red{\restletter}^{(d)} \in \supp(\calD^{(d)}_u(q))$. A further deterministic ``trimming'' restriction (by only setting certain variables to $0$; note that this can only simplify $\proj_{\rho^{(d)}}(C)$ further) causes the target $\proj_{\rho^{(d)}}(\SkewedSipser_{u,d})$ to become exactly $\SkewedSipser_{u,d-1}$.  Let us write $C_d$ to denote the resulting simplified version of the original circuit $C$ after the combined ``project-and-trim''.
\red{As $C$ is supposed to compute $\SkewedSipser_{u,d}$,
  $C_d$ must compute $\SkewedSipser_{u,d-1}$.}
  
  
Next, we consider what happens to $\SkewedSipser_{u,d-1}$ and $C_d$ if we hit them both with $\proj_\brho(\cdot)$ for $\smash{\brho \leftarrow \calD^{(\ourd-1)}_{u}(q)}$.  Applying Theorem~\ref{theorem:PSL} (with $r=s = u-1$) to each of the gates at distance $2$ 
  from the inputs and taking a union bound, 
  the resulting circuit $\proj_\brho(C_d)$ has depth $(d-1)$, bottom fan-in $u-1$, and at most $S$ gates at distance at least $2$ from the inputs with failure~probability at most
$S \cdot (16 r u q)^{u - 1}  < S \cdot (16 q u^{2})^{u - 1} \le 0.1$. On the other hand, taking $\ell=d-1$ we can again apply Corollary~\ref{cor:preserve-target} to $\SkewedSipser_{u,d-1}$ and we have that $\proj_\brho(\SkewedSipser_{u,d-1})$ contains
$\SkewedSipser_{u,d-2}$ as a subfunction with failure probability at most $0.1$.  Once again by a union bound, with probability at least $0.8$ a draw of $\smash{\brho \leftarrow \calD^{(d-1)}_u(q)}$ satisfies both of the above, and we fix any such restriction $\red{\restletter}^{(d-1)} \in \supp(\calD^{d-1}_u(q))$. As before we perform a deterministic trimming restriction that causes the target $\proj_{\red{\restletter}^{(d-1)}}(\SkewedSipser_{u,d-1})$ to become exactly $\SkewedSipser_{u,d-2}$ and we let $C_{d-1}$ be the resulting simplified version of $C_d$ after the combined project-and-trim. As $C_d$ computes $\SkewedSipser_{u,d-1}$
  we have that $C_{d-1}$ must compute $\SkewedSipser_{u,d-2}$.

Repeating the argument above, each time taking $r=s=u-1$ in Theorem \ref{theorem:PSL},~there exist   a sequence of restrictions $\smash{\red{\restletter}^{(d-2)} \in \supp(\calD^{(d-2)}_u(q)),\ldots, \red{\restletter}^{(1)} \in \supp(\calD^{(1)}_u(q))}$ and their resulting circuits $C_{d-2}, \ldots, C_{1}$ such that 
\begin{itemize}
\item {\bf Hard function retains structure.} For $1 \le \ell \le d-2$, $\proj_{\red{\restletter}^{(\ell)}}(\SkewedSipser_{u,\ell})$ contains $\SkewedSipser_{u,\ell-1}$ as a subfunction, and hence there exists a deterministic trimming restriction that results in $\proj_{\red{\restletter}^{(\ell)}}(\SkewedSipser_{u,\ell})$ becoming exactly $\SkewedSipser_{u,\ell-1}$. 
\item {\bf Circuit collapses.} For $2 \le \ell \le d-2$, the circuit $\proj_{\red{\restletter}^{(\ell)}}(C_{\ell+1})$ has depth $\ell$, bottom fan-in $u-1$, and has at most $S$ gates at distance at least $2$ from the inputs. Furthermore, $C_\ell$ is the simplified version of $\proj_{\red{\restletter}^{(\ell)}}(C_{\ell+1})$ after the deterministic trimming restriction associated with $\proj_{\red{\restletter}^{(\ell)}}(\SkewedSipser_{u,\ell})$. \ignore{(defined in the bullet above). }Finally, the circuit $\proj_{\red{\restletter}^{(1)}}(C_2)$ can be expressed as a depth-$(u-1)$ decision tree, and $C_1$ is the  simplified version of $\proj_{\red{\restletter}^{(1)}}(C_2)$ after the deterministic trimming restriction associated with $\proj_{\red{\restletter}^{(1)}}(\SkewedSipser_{u,1})$. 
\end{itemize}

The above implies that $C_\ell$ computes $\SkewedSipser_{u,\ell-1}$ for all $1\le \ell \le d-2$. This yields the desired contradiction since $C_1$, a decision tree of depth at most $u-1$, cannot compute $\SkewedSipser_{u,0}$, the $\OR$ of  $\red{w^{33/100}\ge u}$ many variables. Hence any depth-$d$ circuit computing $\SkewedSipser_{u,d}$ must have size at least $S$, where $S$ is the quantity defined in (\ref{eq:S}). The following calculation showing that $\smash{S = n^{\Omega(u/d)}}$ completes the proof of Theorem~\ref{thm:our-depth-hierarchy}: 

\begin{claim} \label{claim:tech3}
$S = n^{\Omega(u/d)}$.
\end{claim}
\begin{proof}
We first observe that 
\[
n = u^d w^{d+\frac{33}{100}} \leq w^{\frac{133}{100}d + {\frac {33}{100}}} \leq w^{({\frac {133}{100}} + {\frac {33}{200}})d} < w^{{\frac {3d} 2}},
\quad \text{and hence} \quad
n^{{\frac 2 {3d}}} < w,
\]
where we used $u \leq w^{33/100}$ for the first inequality and $d \geq 2$ for the second.  As a result we have
\[
S = 0.1  \left({\frac {w^{669/1000}}{16u^2}}\right)^{u-1}
\geq 0.1  \left({\frac {w^{9/1000}}{16}}\right)^{u/2} \geq
0.1 \left({\frac {n^{{\frac 2 {3d}} \cdot {\frac 9 {1000}}}}{16}}\right)^{u/2}= n^{\Omega(u/d)},
\]
where we used $q=w^{-669/1000}$ for the first equality, $2 \leq u \leq w^{33/100}$ for the first inequality, and $w>n^{{\frac 2 {3d}}}$ for the final equality.
\end{proof}}




\ignore{

We ``get the argument off the ground'' by first hitting both $\SkewedSipser_{u,d}$ and $C$ with $\proj_\brho(\cdot)$ for $
\brho \leftarrow \calD^{(d)}_{u}(q)$, where $q =w^{-669/1000}$ (as in Corollary \ref{cor:preserve-target}). Recall that, by Remark \ref{rem:CNF_case}, we can apply our Projection Switching Lemma (Theorem~\ref{theorem:PSL}) both to $r$-DNFs and $r$-CNFs. Applying Theorem~\ref{theorem:PSL} (with $r=1$ and $s=u-1$) to each of the \red{$S_1$ many} gates at distance 1 from the inputs in $C$, we have that with failure probability at most $\red{S_1} \cdot (16qu)^{u - 1}$ the resulting circuit $\proj(C \uhr \brho)$ now has bottom fan-in at most $u-1$ (note that it may still have depth $d$).\footnote{{Put another way, in this initial application we view $C$ as having an extra layer of gates of fan-in $1$ next to the input variables, so we have a valid application of Theorem \ref{theorem:PSL} with $r = u - 1 \geq 1$.}} On the other hand, taking $\ell=d$ in  Corollary~\ref{cor:preserve-target} we have that $\proj(\SkewedSipser_{u,\ourd} \uhr \brho)$ still contains
$\SkewedSipser_{u,d-1}$ as a subfunction with failure probability at most $\red{n \cdot e^{-w^{331/1000}/8}}$.  Assuming that the failure event does not take place, a further deterministic ``trimming'' restriction (only setting variables to 0; note that this can only simplify $C$ further) causes the target $\SkewedSipser_{u,d}$ to become exactly $\SkewedSipser_{u,d-1}$.  Let us write $C_0$ to denote the resulting simplified version of the original circuit $C$.

Next we consider what happens to both $\SkewedSipser_{u,d-1}$ and $C_0$ if we hit them both with $\proj_\brho(\cdot)$ for $\brho \leftarrow \calD^{(\ourd-1)}_{u}(q)$.  Applying Theorem~\ref{theorem:PSL} (with $r=s = u-1$) and taking a union bound over the \red{$S_2$ many} gates at depth 2 from the inputs, with failure probability at most
$\red{S_2} \cdot (16 r u q)^{u - 1}  < \red{S_2} \cdot (16 q u^{2})^{u - 1}$, the resulting circuit $\proj(C_0 \uhr \brho)$ has depth \orange{at most} $d-1$ and bottom 
fan-in at most  $ u-1$.\lnote{We need to bound its size as well, right? Its size may blow up, but I think the key thing is that the number of gates at distance at least $2$ from the inputs will remain $\le S$. (In fact, it is at most $S_2 + \cdots + S_d$, but I don't think this saves us anything.)}  On the other hand, taking $\ell=d-1$ we can again apply Corollary~\ref{cor:preserve-target} to $\SkewedSipser_{u,d-1}$ and we have that $\proj(\SkewedSipser_{u,d-1} \uhr \brho)$ contains
$\SkewedSipser_{u,d-2}$ as a subfunction with failure probability at most $\red{n \cdot e^{-w^{331/1000}/8}}$.  As before we can perform a deterministic ``trimming'' step to cause the target function to become exactly
$\SkewedSipser_{u,d-2}$.  Let $C_1$ be the resulting simplified version of $C_0$.

We repeat this ``project-and-trim'' step a total of $d-1$ times to obtain circuits $C_1,\dots,C_{d-1}$, where at each application we take $r=s=u-1$ in Theorem \ref{theorem:PSL} as above.  Let $\boldsymbol{\Psi}$ denote the composition of all $d$ ``project-and-trim'' steps (including the very first one that was used to reduce the bottom fan-in of $C$). The upshot is that the final
circuit $C_{d-1} = \proj(C \uhr \boldsymbol{\Psi})$\lnote{To use this notation, it seems that we need to explain how one can view $\mathbf{\Psi}$, a sequence of project-and-trim operations over different variable spaces, as a restriction over the original variable space. I tried to sidestep this in my attempt in violet above.} can be expressed as a decision tree of depth $u-1$ except with overall failure probability at most\ignore{\inote{It doesn't matter much but I believe there is no need to have a factor of $d$ in the union bound: we can fix a good random projection during each trimming step instead of considering the event that all trimming steps must be good. We can leave this unchanged.  \red{Rocco:}  I don't think the ``$d$'' has anything to do with the trimming, the trimming is deterministic.  You are saying, the one factor of ``$S$'' accounts for all of the applications
of the union bound and there is no need for the extra ``$d$'', right?  I agree with this; if we want to get it across, we should introduce new parameters (perhaps $S_2, S_3$, $\dots$ where $S_i$ is the number of gates in the original circuit at distance $i$ from the inputs) and use them
in the above exposition instead of $S$.}}\inote{Sorry for not explaining what I meant last time in more detail. I guess here we are considering different probability spaces, one during each depth reduction. When we reduce the depth of the approximator and simplify the target, we should fix a ``good'' random projection/restriction, since formally the next application requires the correct set of variables in each DNF/CNF (based on the statement of our PSL -- maybe we should use $\ell$ instead of $d$ there in $A(d)$?). So we will be in good shape during the $i$-th stage/union bound as long as $S_i \cdot (16qu)^{u-1} + n \cdot e^{-w^{\red{331/1000}}/8} < 1$, which means we can fix an appropriate restriction, project, and continue with the argument. Using our analysis it is sufficient that $\max \{S_1, \ldots, S_d\} \cdot  (16qru)^{u-1} \leq S \cdot  (16qru)^{u-1} \leq 0.1$, and there is almost nothing that needs to be updated in the text. (We can also remove the factor of $d$ in the next equation.) Let me know if this makes sense.}
\begin{align*} \red{S_1} \cdot (16qu)^{u-1} + (\red{S_2 + \cdots + S_d}) \cdot (16 qru)^{u - 1} 
\le
 S \cdot (16 qu^{2})^{u - 1} \leq 0.1,
 \end{align*}
where the last inequality is by our choice of $S$.  On the other hand, for each of the values $\ell=d-1,\dots,1$ we apply Corollary \ref{cor:preserve-target} to $\SkewedSipser_{u,\ell}$, and by a union bound we get that $\proj(\SkewedSipser_{u,d} \uhr \boldsymbol{\Psi})$ is exactly $\SkewedSipser_{u,0}$ except with total failure probability at most 
\[
d \cdot n \cdot e^{-w^{\red{331/1000}}/8}.
\]
In Claim \ref{claim:tech1} below we show that $d \cdot n \cdot e^{-w^{\red{331/1000}}/8} < 0.9.$
Thus there is an outcome $\Psi$ of $\boldsymbol{\Psi}$ such that both of the following hold:  (i) $C_{d-1}$ computes a depth-$(u-1)$ decision tree, and (ii) $\proj(\SkewedSipser_{u,d} \uhr {\Psi})$ computes the $\SkewedSipser_{u,0}$ function, i.e. an $\OR$ of $\red{w^{33/100}}$ variables.  Since $u \leq \red{w^{33/100}}$ (by the hypothesis of the theorem), this means that it is impossible for a depth-$(u-1)$ decision tree to compute an $\OR$ of $\red{w^{33/100}}$ variables, and we have a contradiction to our initial assumption that the size-$S$, depth-$d$ circuit $C$ computes $\SkewedSipser_{u,d}.$  Thus we have established (\ref{eq:S}):  any depth-$d$ circuit for the $n$-variable $\SkewedSipser_{u,d}$ function must have size at least
\[
S:= {\frac {0.1}{ \left(16u^{2}q\right)^{u-1}}}.
\]
Finally, Claim \ref{claim:tech3} below shows that $S = n^{\Omega(u/d)}$, and this concludes the proof of Theorem \ref{thm:our-depth-hierarchy} (modulo Claims \ref{claim:tech1} and \ref{claim:tech3}).\qed

\gray{
\begin{claim} \label{claim:tech1}\lnote{Combined this with Corollary~\ref{cor:preserve-target}}
$d \cdot n \cdot e^{-w^{\red{331/1000}}/8} < 0.9.$
\end{claim}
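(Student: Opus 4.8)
The plan is a short direct estimate: bound $d\cdot n$ above by something sub-exponential in $w^{331/1000}$ and compare growth rates. First I would record the relevant quantities and standing assumptions: by Definition~\ref{Sipser:def} we have $n = (uw)^{d} w^{33/100} = u^{d} w^{d + 33/100}$, and throughout the proof of Theorem~\ref{thm:our-depth-hierarchy} we have $2 \le d \le u \le w^{33/100}$ (the bound $u \le w^{33/100}$ is a hypothesis of the theorem, $d \le u$ was justified at the start of Section~\ref{sec:proof_depth_hierarchy}, and $d \ge 2$ holds since depth-$1$ circuits cannot compute $\SkewedSipser_{u,1}$). Using $u \le w^{33/100}$ to replace $u^{d}$ by $w^{33d/100}$ gives $n \le w^{133d/100 + 33/100}$; using $d \le u \le w^{33/100}$ once more to bound the exponent, and multiplying by the extra factor $d \le w^{33/100}$, I obtain
\[ d \cdot n \;\le\; w^{\frac{133}{100}\,w^{33/100} \,+\, \frac{66}{100}}. \]

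Next I would take logarithms. The left-hand side contributes $\ln(d\cdot n) \le \big(\tfrac{133}{100} w^{33/100} + \tfrac{66}{100}\big)\ln w = O\!\big(w^{33/100}\ln w\big)$, i.e.\ $O\!\big(w^{330/1000}\ln w\big)$, whereas $\ln\!\big(e^{w^{331/1000}/8}\big) = \tfrac{1}{8}w^{331/1000} = \tfrac{1}{8}w^{330/1000}\cdot w^{1/1000}$. Since $w^{1/1000}/\ln w \to \infty$ as $w \to \infty$, we have $w^{330/1000}\ln w = o\!\big(w^{331/1000}\big)$, so for all sufficiently large $w$
\[ \ln(d \cdot n) \;<\; \tfrac{1}{8}\,w^{331/1000} + \ln(0.9), \]
where the additive negative constant $\ln(0.9)$ is harmlessly absorbed. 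Exponentiating yields $d \cdot n < 0.9 \cdot e^{w^{331/1000}/8}$, equivalently $d \cdot n \cdot e^{-w^{331/1000}/8} < 0.9$, which is the claim.

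I do not anticipate any real obstacle: the only point with any content is the strict inequality $\tfrac{33}{100} = \tfrac{330}{1000} < \tfrac{331}{1000}$ between the exponent of the bottom fan-in $w^{33/100}$ and the exponent $w^{331/1000}$ controlling the per-gate Chernoff failure probability in Corollary~\ref{cor:preserve-target}. This gap of $\tfrac{1}{1000}$ in the exponents ensures that the (polynomial-in-$w$) size $n$ --- even after inflating it by the factor $d \le w^{33/100}$ and passing to $\ln n$ --- is dominated by $e^{w^{331/1000}/8}$ with room to spare. That gap is precisely what the choices $q = w^{-669/1000}$ (so that the relevant mean in the Chernoff bound is $qw = w^{331/1000}$) and bottom fan-in $w^{33/100}$ were designed to provide, so the estimate goes through for all large $w$.
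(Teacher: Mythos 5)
Your proposal is correct and follows essentially the same route as the paper: both bound $d\cdot n$ by a quantity of the form $w^{O(w^{33/100})}$ via the chain $u\le w^{33/100}$ and $d\le u$ (the paper uses $d\cdot n<n^2$ where you bound $d$ by $w^{33/100}$ directly, an immaterial difference), and both conclude by observing that $w^{33/100}\ln w=o\bigl(w^{331/1000}\bigr)$, so the bound is dominated by $0.9\,e^{w^{331/1000}/8}$ for $w$ sufficiently large. Your explicit logarithmic comparison just spells out the ``$\ll$ for $w$ sufficiently large'' step that the paper leaves implicit.
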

\begin{proof}
We have
\begin{align*}
d \cdot n < n^2 = u^{2d}w^{2d+\red{{\frac {33}{50}}}} \leq w^{\red{{\frac {133}{50}}d + {\frac {33}{50}}}} < w^{\red{{\frac {133}{50}}u + {\frac {33}{50}}}} \leq w^{\red{{\frac {133}{50}}w^{33/100} + {\frac {33}{50}}}}
\ll 0.9 \cdot e^{w^{\red{331/1000}}/8}
\end{align*}
for $w$ sufficiently large.  (We used $d < n$ for the first inequality, $u \leq w^{\red{33/100}}$ for the second, $d \leq u$
for the third and $u \leq w^{\red{33/100}}$ again for the fourth.)
\end{proof}
}

\begin{claim} \label{claim:tech3}
$S = n^{\Omega(u/d)}$.
\end{claim}
\begin{proof}
We first observe that 
\[
n = u^d w^{d+\red{{\frac {33}{100}}}} \leq w^{\red{{\frac {133}{100}}d + {\frac {33}{100}}}} \leq \red{w^{({\frac {133}{100}} + {\frac {33}{200}})d}} < w^{{\frac {3d} 2}},
\quad \text{and hence} \quad
n^{{\frac 2 {3d}}} < w,
\]
where we used $u \leq w^{\red{33/100}}$ for the first inequality and $d \geq 2$ for the second.  As a result we have
\[
S = 0.1 \cdot \left({\frac {w^{\red{669/1000}}}{16u^2}}\right)^{u-1}
\geq 0.1 \cdot \left({\frac {w^{\red{9/1000}}}{16}}\right)^{u/2} \geq
{\frac 1 {10}} \cdot \left({\frac {n^{{\frac 2 {3d}} \cdot {\frac 9 {1000}}}}{16}}\right)^{u/2}= n^{\Omega(u/d)},
\]
where we used $\red{q=w^{-669/1000}}$ for the first equality, $2 \leq u \leq w^{\red{33/100}}$ for the first inequality, and $w>n^{{\frac 2 {3d}}}$ for the final equality.
\end{proof}}

\begin{remark} \label{remark:upperbound}
We remark that a straightforward construction yields small-depth circuits computing $\SkewedSipser_{u,d}$ that nearly match the lower bound given by Theorem \ref{thm:our-depth-hierarchy}.  This construction simply applies de Morgan's law to convert a $u$-way $\AND$ of $w$-way $\OR$s into a $w^u$-way $\OR$ of $u$-way $\AND$s.  This is done for all of the $\AND^{(d)},\AND^{(d-2)},\AND^{(d-4)},\dots$ gates in $\SkewedSipser_{u,d}$.  Collapsing adjacent layers of gates after this conversion, we obtain a depth-$(d+1)$ circuit of size 
$\smash{\violet{n^{O(u/d)}}}$ that computes the $\SkewedSipser_{u,d}$ function.
\end{remark}

\bibliographystyle{alpha}	
\bibliography{refs}

\end{document}